\newcommand{\hl}{\textcolor{blue}}
\renewcommand{\hl}{}
\theoremstyle{definition}
\newtheorem{theorem}{Theorem}
\newtheorem*{theorem*}{Theorem}
\newtheorem*{definition*}{Definition}
\newtheorem*{corollary*}{Corollary}
\newtheorem*{example*}{Example}
\newtheorem{proposition}[theorem]{Proposition}
\newtheorem*{proposition*}{Proposition}
\newtheorem*{lemma*}{Lemma}
\newtheorem*{conjecture*}{Conjecture}
\algrenewcommand\algorithmicrequire{\textbf{Input:}}
\algrenewcommand\algorithmicensure{\textbf{Output:}}
\newcommand{\rfig}[1]{Fig.~\ref{#1}}
\begin{document}

\history{Date of current version July 9, 2024.}
\doi{10.1109/ACCESS.2024.3425711}

\title{Toward Practical Benchmarks of Ising Machines: \\
A Case Study on the Quadratic Knapsack Problem}
\author{
        \uppercase{Kentaro~Ohno}\authorrefmark{1}\authorrefmark{2}, 
        \uppercase{Tatsuhiko~Shirai}\authorrefmark{3},~\IEEEmembership{Member,~IEEE},
        \uppercase{Nozomu~Togawa}\authorrefmark{2},~\IEEEmembership{Member,~IEEE}
}
\address[1]{NTT, Tokyo, Japan}
\address[2]{Department of Computer Science and Communications Engineering, Waseda University, Tokyo, Japan}
\address[3]{Waseda Institute for Advanced Study, Waseda University, Tokyo, Japan}

\corresp{Corresponding author: Kentaro~Ohno (e-mail: kentaro.ohno@togawa.cs.waseda.ac.jp).}



\begin{abstract}
    Combinatorial optimization has 
    wide applications from industry to natural science.
    Ising machines bring an emerging computing paradigm for efficiently solving a combinatorial optimization problem by searching a ground state of a given Ising model.
    Current cutting-edge Ising machines achieve fast sampling of near-optimal solutions of the max-cut problem.
    However, for problems with additional constraint conditions, their advantages have been hardly shown due to difficulties in handling the constraints.
    In this work, we focus on
    benchmarks of Ising machines on the quadratic knapsack problem (QKP).
    To bring out their practical performance, 
    we propose 
    fast two-stage post-processing for Ising machines, which makes handling the constraint easier.
    Simulation based on simulated annealing shows that the proposed method substantially improves the solving performance of Ising machines and the improvement is robust to a choice of encoding of the constraint condition.
    \hl{Through evaluation using an Ising machine called Amplify Annealing Engine,
    the proposed method is shown to dramatically improve its solving performance on the QKP.
    These results are a crucial step toward showing advantages of Ising machines on practical problems involving various constraint conditions.
    }
\end{abstract}

\begin{keywords}
Combinatorial optimization, Ising machine, Quadratic knapsack problem
\end{keywords}

\titlepgskip=-21pt

\maketitle

\section{Introduction}\label{sec:introduction}
Combinatorial optimization is an important research area with applications in various fields such as artificial intelligence and operations research.
For example, the knapsack problem and its variants are famous and well-studied combinatorial optimization problems with numerous applications including production planning, resource allocation, and portfolio selection~\cite{cacchiani2022knapsack}.
Theoretically, combinatorial optimization problems are often hard to solve exactly within a reasonable amount of time due to their NP-hardness.
Therefore, various heuristics and meta-heuristics have been developed for dealing with large-scale combinatorial optimization problems.

Ising machines offer a new computing paradigm for tackling hard combinatorial optimization problems~\cite{mohseni2022ising}.
Ising machines search a ground state of a given Ising model, a model in statistical mechanics involving binary variables (called spins) and their interactions, and thus can be used for optimization over binary variables.
For problems with additional constraint conditions on binary variables, 
the \emph{penalty method} is typically used~\cite{lucas2014ising}.
A constraint on binary variables $x=(x_1,\cdots,x_n)$ is translated into a penalty term $H_\mathrm{con}(x)$ added to the objective function $H_\mathrm{obj}(x)$ with a positive coefficient $\lambda>0$ to construct an unconstrained binary optimization problem
\begin{align}\label{eq:penalty_method}
    &\operatorname{minimize} \ H_\mathrm{obj}(x) + \lambda H_\mathrm{con}(x) \\
    &\operatorname{subject \ to} \ x \in \{0,1\}^n, \notag
\end{align}
to which an Ising machine is applied.
There exist several types of Ising machines depending on the way of physical implementation: examples are quantum annealers~\cite{johnson2011quantum,houck2012chip}, coherent Ising machines~\cite{wang2013coherent,honjo2021100}, and specialized-circuit-based digital machines~\cite{yamaoka201520k,aramon2019physics,yamamoto2020statica,wang2021solving,goto2019combinatorial}. 
These machines enable fast sampling of near-optimal solutions on the max cut problem, which is naturally formulated with an Ising model. 

However, for problems with additional constraint conditions on binary variables, the superiority of Ising machines to other methods has not been observed.
For example, previous benchmark results~\cite{codognet2022quantum,huang2022benchmarking, 
parizy2021analysis,ceselli2023good} 
on the quadratic knapsack problem (QKP) and quadratic assignment problem (QAP) show that Ising machines are not competitive with existing (meta-)heuristic solvers.
A critical performance issue is that Ising machines do not necessarily output feasible solutions, i.e., solutions satisfying constraints.
The penalty coefficient $\lambda$ in (\ref{eq:penalty_method}) is required to be large for outputs to be feasible, but large $\lambda$ tends to degrade the objective value.
\hl{This trade-off also makes it difficult to fairly compare the performance of Ising machines with that of other heuristic solvers.
Therefore, it is crucial to resolve the trade-off to establish practical utility of Ising machines.
}

In this study, we focus on the benchmark of Ising machines on the QKP~\cite{gallo1980quadratic}.
The QKP is a well-studied practical problem involving one inequality constraint over binary variables.
\hl{Although it is presumably suitable for solving with Ising machines, existing Ising machine benchmarks~\cite{parizy2021analysis} on the QKP only deal with relatively easy instances that can be solved by exact methods due to the difficulty in handling the constraint for Ising machines.
Therefore, we explore a way to overcome this problem and enable effective performance comparison with existing heuristic solvers.
The application to the QKP is taken as the first attempt in this direction and would be extended to other problems in the future.
}

Several methods to encode an inequality constraint into penalties have been proposed for applying Ising machines to the QKP~\cite{tamura2021performance,jimbo2022hybrid,bontekoe2023translating,yonaga2020solving} ,
since the choice of encoding methods has impacts on controlling the trade-off between the feasibility and objective value.
Nevertheless, none of them have achieved better results than other heuristic solvers or even a simple greedy method.
Moreover, each encoding method has different advantages and disadvantages, making it difficult to select the appropriate method for a given instance.

We take another approach to enhance the performance of Ising machines by exploiting the problem structure.
We propose to incorporate efficient two-stage post-processing into the solving process using an Ising machine.
The post-processing consists of \emph{repair} and \emph{improvement} procedures (\rfig{fig:postprocess_landscape}).
First, the repair procedure converts the output of the Ising machine, if it is infeasible, into a feasible solution.
The obtained feasible solution is improved by a local improvement procedure.
Since Ising machines are suited for global search, the improvement procedure takes a complementary role to achieve further improvement via local search.
\hl{Although the post-processing consists of well-known greedy algorithms~\cite{gallo1980quadratic,chaillou1989best,billionnet1996linear}, the combination with Ising machines has not been fully explored so far.
We believe it is valuable to thoroughly examine the effectiveness of the post-processing approach as it seems to be a promising straightforward way to resolve the technical issue.
}

\begin{figure}[t]
    \centering
    \includegraphics[width=0.47\textwidth]{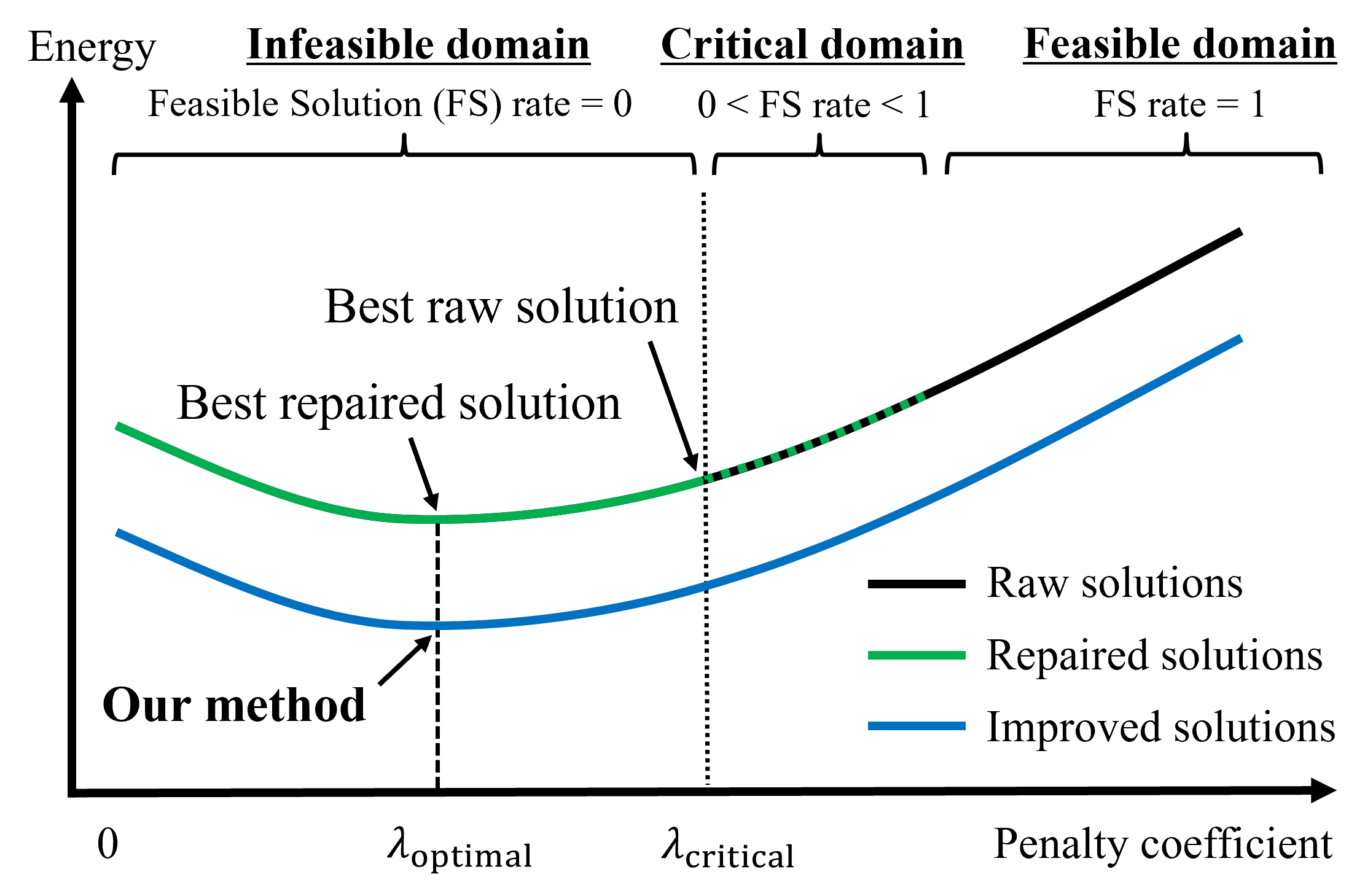}
    \caption{Conceptual figure of 
    effect of 
    two-stage post-processing. 
    ``Raw solutions'' denote outputs of Ising machines, 
    which are often infeasible when penalty coefficient $\lambda$ is small (dashed line on ``Critical domain'').
    Tuning of $\lambda$ typically involves finding $\lambda_\mathrm{critical}$ which achieves best trade-off between feasibility and objective.
    Repair procedure for infeasible solutions enables us to obtain feasible solutions even for smaller $\lambda$. 
    Improvement procedure further enhances feasible solutions with local operations.
    Optimal penalty coefficient $\lambda_\mathrm{optimal}$ is found to be much robust to choice of encoding methods for inequality constraint, in contrast to $\lambda_\mathrm{critical}$ 
    which heavily depends on encoding methods
    (see Section~\ref{sec:simulation}).
    }
    \label{fig:postprocess_landscape}
\end{figure}

We conduct simulation experiments on medium-sized QKP instances using simulated annealing.
The results show that the combined use of the repair and improvement procedures provides the synergistic effect on gaining the solving performance, achieving optimal solutions on more than 80\% of the test instances within a reasonable time.
Besides, we find that the post-processing greatly reduces the dependency of the solving performance on the choice of encoding methods of the inequality constraint into penalties, which might make practical use of Ising machines much easier.

We evaluate the performance of Amplify Annealing Engine (AE)~\cite{amplify}, one of the state-of-the-art Ising machines, with our method on a data set of large QKP instances of size ranging from 1000 to 2000.
AE combined with the post-processing achieves best known solutions on 77.5\% of test instances and a small optimality gap on the rest instances.
This result significantly exceeds the previous benchmark of Ising machines on the QKP~\cite{parizy2021analysis,ceselli2023good}. 
\hl{This is also the first result that an Ising-machine-based solver achieves a performance comparable to previous heuristics on the QKP~\cite{fomeni2014dynamic,yang2013effective,patvardhan2016parallel,chen2017iterated}.}

Our contribution is summarized as follows:
{
\begin{itemize}
    \item We propose a method to solve the QKP with Ising machines combined with the post-processing consisting of the repair and improvement procedures \hl{to overcome the difficulty in handling the constraint condition}.
    \item Through simulation experiments on medium-sized instances, we show that the post-processing is effective in obtaining optimal solutions and making the performance robust to a choice of encoding methods. 
    \item \hl{We show that the proposed method dramatically improves the solving performance of a state-of-the-art Ising machine on the QKP despite its simplicity, exceeding the previous benchmark results of Ising machines.}
\end{itemize}
}

\hl{
Since the proposed method is implemented on the basis of well-known naive algorithms, 
we expect that it can be extended and enhanced to show the advantage of Ising machines over previous heuristic methods on the QKP and other problems in the future.
Therefore, our findings are a crucial step toward showing the practical utility of Ising machines.
}

The rest of the paper is organized as follows.
Backgrounds 
are explained in Section~\ref{sec:preliminaries}.
We introduce the proposed method in Section~\ref{sec:proposed_method}.
The simulation experiment is conducted in Section~\ref{sec:simulation}.
We evaluate the proposed method using the Ising machine in Section~\ref{sec:experiment}.
Related work and future work is discussed in Section~\ref{sec:related_work}.
Section~\ref{sec:conclusion} concludes this paper.

\section{Preliminaries
}\label{sec:preliminaries}

\subsection{Ising machines}
We briefly review backgrounds on Ising machines.
An \emph{Ising model} is a model in statistical mechanics consisting of a number of binary variables $s_i \in \{\pm 1\}, i=1,\cdots,n$ called spins and their interactions.
The \emph{energy} of a state $s=(s_1,\cdots,s_n) \in \{\pm 1\}^n$ is defined as
\begin{align}
    H = \sum_{i,j} J_{ij} s_i s_j + \sum_i h_i s_i,
\end{align}
where $J_{ij} \in \mathbb R$ represents the pairwise interaction between spin $s_i$ and $s_j$ and $h_i\in \mathbb R, i=1,\cdots,n$ is called external field.
A \emph{ground state} of the Ising model is a state $s$ that minimizes the energy $H$.
\emph{Ising machines} implement fast heuristics to search a ground state of the Ising model by analog computation using quantum annealing~\cite{kadowaki1998quantum} or degenerate optical parametric oscillators~\cite{wang2013coherent}, or by digital algorithms such as simulated annealing and simulated bifurcation~\cite{goto2019combinatorial} with massive parallelization.

The problem of finding a ground state of an Ising model can be also formulated as a \emph{quadratic unconstrained binary optimization (QUBO)} problem~\cite{lucas2014ising}, 
which is a class of optimization problems over binary variables $x_i \in \{0,1\},  i=1,\cdots,n$ defined by a square matrix $Q \in \mathbb R^{n\times n}$ as follows:
\begin{align}
    &\operatorname{minimize} \ x^{\top} Q x \\
    &\operatorname{subject \ to} \ x \in \{0,1\}^n.
\end{align}
The objective value $x^{\top} Q x$ is also called the energy of $x$.

\subsection{Quadratic Knapsack Problem}

The quadratic knapsack problem (QKP)~\cite{gallo1980quadratic} is a generalization of the well-known knapsack problem and defined by data of $n$ items and the knapsack capacity $C$.
Each item $i$ is associated with a weight $w_i> 0$ and a profit $p_i\ge 0$.
In addition, for each pair $i,j \ (i<j)$ of items, a pairwise profit $p_{ij} \ge 0$ is defined and it is added to the total profit when both items are put into the knapsack.
The QKP asks to maximize the total profit maintaining the total weight within the knapsack capacity.
Namely, it is formulated as
\begin{align}\label{eq:qkp}
    \operatorname{maximize} \ &H(x) \coloneqq \sum_{i=1}^n p_i x_i + \sum_{i=1}^{n-1} \sum_{j=i+1}^n p_{ij}x_i x_j \notag \\
    \operatorname{subject\ to} \ &\sum_{i=1}^n w_i x_i \le C, \notag \\
    &x_i\in \{0,1\}, i=1,\cdots,n.
\end{align}
We define $p_{ij}\coloneqq p_{ji}$ for $i>j$ to ease notation.
We assume $w_i$ and $C$ are integers and satisfy $\min_i w_i \le C < \sum_{i=1}^n w_i$ to avoid triviality.
\hl{The QKP is an NP-hard optimization problem and the state-of-the-art exact solver can only solve some QKP instances of size up to 1500 in a reasonable time~\cite{pisinger2007solution}.
To solve large QKP instances efficiently, various heuristic approaches including the tabu search~\cite{glover2002solving,yang2013effective}, swarm optimization~\cite{xie2007mini}, dynamic programming~\cite{fomeni2014dynamic}, greedy randomized adaptive search procedure (GRASP)~\cite{yang2013effective}, and evolutionary algorithm~\cite{patvardhan2012novel,patvardhan2016parallel} have been proposed.
The current best heuristic solver is based on the iterated hyperplane exploration approach~\cite{chen2017iterated}.
}

As a particular problem structure, it is well-known that an optimum of the QKP is attained on the edge of the space of feasible solutions.
Precisely, the following holds.
For a proof, we refer to Appendix~\ref{app:proof}.
\begin{proposition}[cf.~\cite{billionnet1996linear}]\label{prop:qkp_property}
    For a QKP instance defined as (\ref{eq:qkp}),
    an optimum is attained by a solution $x \in \{0,1\}^n$ satisfying $C-\max_i w_i < \sum_{i=1}^n w_i x_i \le C$.
\end{proposition}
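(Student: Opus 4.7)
The plan is to give a short constructive argument: start from any optimum and, as long as its total weight is at most $C-\max_i w_i$, repeatedly add a single excluded item. Since all profits $p_i,p_{ij}$ are non-negative, each such addition can only keep the objective equal or larger, so the modified solution remains optimal, and each addition strictly increases the total weight, so the process terminates with a solution meeting the required lower bound.

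More concretely, I would proceed as follows. First, let $x^{\ast}$ be an optimal solution of (\ref{eq:qkp}), and set $W(x) \coloneqq \sum_{i=1}^n w_i x_i$. If $W(x^{\ast}) > C - \max_i w_i$, there is nothing to prove. Otherwise, I claim there exists some index $j$ with $x^{\ast}_j = 0$: if all $x^{\ast}_i$ were $1$ then $W(x^{\ast}) = \sum_i w_i > C \ge C - \max_i w_i$ by the non-triviality assumption $\sum_i w_i > C$, contradicting the hypothesis. Pick any such $j$ and define $x'$ by flipping $x^{\ast}_j$ from $0$ to $1$. Then
\begin{align*}
W(x') = W(x^{\ast}) + w_j \le (C - \max_i w_i) + w_j \le C,
\end{align*}
so $x'$ is feasible, and
\begin{align*}
H(x') - H(x^{\ast}) = p_j + \sum_{i:\, x^{\ast}_i = 1} p_{ij} \ge 0
\end{align*}
since all profits are non-negative, so $x'$ is also optimal.

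I would then iterate this step, replacing $x^{\ast}$ by $x'$ each time. Because $W$ strictly increases at every iteration and stays bounded above by $C$, the loop must terminate, and termination occurs precisely when $W(x^{\ast}) > C - \max_i w_i$. The resulting solution is feasible, satisfies the claimed weight bound, and is still optimal, proving the proposition. I do not expect any serious obstacle: the only subtle point is guaranteeing that some excluded item is available at each step, which the assumption $C < \sum_i w_i$ handles. One can note that the case $\max_i w_i > C$ is automatic, since then $C - \max_i w_i < 0 \le W(x)$ for every feasible $x$, so the genuine content lies in the regime $\max_i w_i \le C$ treated above.
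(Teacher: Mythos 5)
Your proposal is correct and follows essentially the same argument as the paper's proof in Appendix~\ref{app:proof}: starting from an optimum of weight at most $C-\max_i w_i$, add an arbitrary excluded item (which exists since $C<\sum_i w_i$), observe that feasibility and optimality are preserved by non-negativity of the profits, and iterate until the weight bound holds. Your write-up is slightly more explicit about termination and the degenerate case $\max_i w_i > C$, but the underlying idea is identical.
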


The QKP can be reformulated as QUBO
in the following way~\cite{glover2002solving}.
First, an integer slack variable $z\ge 0$ is introduced to represent the inequality constraint $\sum_{i=1}^n w_i x_i \le C$ as an equality constraint $\sum_{i=1}^n w_i x_i + z = C$.
By transforming the equality constraint into a penalty term in the standard way, we get a quadratic optimization problem:
\begin{align}\label{eq:qkp_qubo}
    &\operatorname{minimize} \ -H(x) + \lambda H_\mathrm{ineq} (x, z), \\
    &H_\mathrm{ineq}(x,z) = \left(\sum_{i=1}^n w_i x_i + z - C \right)^2,
\end{align}
where $\lambda > 0$ is a sufficiently large positive number.
To further translate it into a QUBO problem, the integer variable $z$ is represented by binary variables typically with binary expansion~\cite{glover2002solving}.
That is, taking sufficiently large integer $D>0$ which is an upper bound of $z$, $z$ is represented by
\begin{align}\label{eq:binary_encoding}
    k &\coloneqq \lfloor \log D \rfloor + 1, \ 
    R \coloneqq D+1 - 2^{k-1} ,\notag \\
    z &= \sum_{i=1}^{k-1} 2^{i-1}y_i + R y_k
\end{align}
using additional binary variables $y_1, \cdots,y_k \in \{0,1\}$.
Other encoding methods of the integer variable are proposed and evaluated for the use of Ising machine (without post-processing)~\cite{tamura2021performance,jimbo2022hybrid,bontekoe2023translating}.
Their performance will be compared in Section~\ref{sec:simulation} under the existence of post-processing.

We remark that a local optimum of the QUBO problem (\ref{eq:qkp_qubo}) does \emph{not} necessarily correspond to that of the QKP (\ref{eq:qkp}).
Recall that a local optimum of an optimization problem over binary variables is defined as the objective value of a feasible solution for which any flip (i.e. changing value from 0 to 1 or 1 to 0) of a variable cannot improve the objective value maintaining feasibility.
For example, we consider a trivial feasible solution $x=(0,\cdots,0)$ which clearly does not attain a local optimum of the QKP.
In the QUBO setting, a solution with $x=(0,\cdots,0)$ and $y$ which gives $z=C$ corresponds to the solution.
In fact, it attains a local minimum of the QUBO problem (\ref{eq:qkp}) for large $\lambda$ since flipping $x_i$ for any $i \in \{1,\cdots,n\}$ leads to a change of the objective value by $-p_i + \lambda w_i^2 > 0$ and similarly flipping $y_i$ for any $i \in \{1,\cdots,k\}$ increases the objective value.
In other words, a flip of $x_i$ in the QKP is realized by multiple flips involving auxiliary variables $y_i$ in the QUBO form.
Hereafter, unless otherwise noted, we use the word ``local'' in the sense of the QKP and not of QUBO.

\subsection{Challenges in Ising Machines Solving QKP}
Since the QKP can be naturally formulated with a quadratic objective function of binary variables as above, it is presumably suited for benchmarks of Ising machines.
However, in contrast to the max-cut problem on which Ising machines have achieved successful results~\cite{honjo2021100,goto2019combinatorial},
even medium-sized QKP instances that can be handled by exact methods are not adequately optimally solved by Ising machines or simulation in the previous studies~\cite{parizy2021analysis,jimbo2022hybrid,bontekoe2023translating}.
The biggest challenge is that Ising machines might output solutions violating the inequality constraint since the constraint is imposed only implicitly with the penalty term.

There is a trade-off that a large penalty is required to obtain feasible solutions with high probability whereas it also degrades the objective value.
As shown in Section~\ref{sec:simulation} below, the recently proposed encoding methods of the inequality constraint~\cite{tamura2021performance,jimbo2022hybrid,bontekoe2023translating} have a role to control this trade-off.
Nevertheless, their improvement in Ising machine performance is not satisfactory, since they are still outperformed by a simple greedy method (see simulation results in Section~\ref{sec:simulation}).
Our approach is to directly resolve the trade-off by incorporating local post-processing into Ising machines, instead of exploring the optimal encoding method.

\section{Proposed Method}\label{sec:proposed_method}

We propose to incorporate post-processing utilizing the problem structure into the solving process with Ising machines.
The post-processing consists of two steps: repair and improvement.
The repair procedure converts an infeasible solution into a feasible solution.
It is commonly used for other meta-heuristics such as evolutionary algorithms~\cite{patvardhan2016parallel,chu1998genetic}.
The improvement procedure takes a feasible solution as an input and improves the objective value by locally modifying the solution.
Both procedures are building blocks of most heuristic combinatorial optimization algorithms, often combined with randomized operations to enable global search~\cite{chen2017iterated,jovanovic2022solving}.
In our case, they are used deterministically (i.e., without randomness) following a greedy strategy, since Ising machines have a role in the global search.
We expect that Ising machines and the local post-processing work complementarily to efficiently enhance the solving performance.
One important advantage of the proposed method is that the repair procedure enables us to set the penalty coefficient $\lambda$ in (\ref{eq:qkp_qubo}) to 
small values and to tune $\lambda$ according to the objective value, not to the rate of feasible solutions, since obtained solutions are always feasible.
This effect, coupled with the local improvement, helps us to obtain the optimal solution more easily with Ising machines, as we will see in Sections~\ref{sec:simulation} and \ref{sec:experiment}.
We explain the details of the method below.

\subsection{Post-processing Algorithm on QKP}

\begin{algorithm}[t]
\small
 \caption{Post-processing on QKP}\label{alg:repair_solution}
 \begin{algorithmic}[1]
 \Require{Solution $x=(x_1,\cdots,x_n) \in \{0,1\}^n$ (possibly infeasible), Profits $(p_i)_i, (p_{ij})_{ij}$, Weights $(w_i)_i$, Capacity $C$}
 \Ensure{Feasible solution $x$}
    \For {$i=1,\cdots, n$} 
        \State $e_i \leftarrow (p_{i} + 
        \sum_{j=1}^{i-1} p_{ji} x_j +
        \sum_{j=i+1}^{n} p_{ij} x_j ) / w_i$
    \EndFor
    \While {$\sum_k w_k x_k > C$} \label{state:repair_start}
        \State Take $j \in  {\operatorname{argmin}} \{e_i \mid x_i=1\}$
        \State $x_j \leftarrow 0$ \Comment{Remove an item}
        \State Update $(e_i)_i$
    \EndWhile \label{state:repair_end}
    \For{$j$ s.t. $x_j=0$ in decreasing order of $e_j$} \label{state:add_start}
        \If {$\sum_k w_k x_k + w_j \le C$}
            \State $x_j \leftarrow 1$ \Comment{Add an item}
            \State Update $(e_i)_i$
        \EndIf
    \EndFor \label{state:add_end}
    \For{$i$ s.t. $x_i=1$ in increasing order of $e_i$} \label{state:swap_start}
            \For{$j$ s.t. $x_j=0$ in decreasing order of $e_j$}
                \If{$\sum_k w_k x_k - w_i + w_j \le C$ and $e_i w_i < e_j w_j - p_{ij}$}
                    \State $x_i \leftarrow 0$, $x_j \leftarrow 1$ \Comment{Swap items}
                    \State Update $(e_i)_i$
                \EndIf
            \EndFor
    \EndFor \label{state:swap_end}
 \State \Return $x$ 
 \end{algorithmic} 
\end{algorithm}

Both the repair and improvement procedures are built upon well-known greedy heuristics used in the previous studies~\cite{gallo1980quadratic,chaillou1989best,billionnet1996linear}.
We review the ideas of both procedures briefly to make the argument self-contained.

For the repair procedure, we note that an infeasible solution can be made into a feasible solution by removing several items from the knapsack since the weights are positive and there is a trivial feasible solution $x=(0,\cdots,0)$.
To reduce the loss of the objective value, items to be removed are selected one by one greedily.
On the simple knapsack problem with the linear objective, a greedy strategy is typically based on a metric called \emph{efficiency} defined by a ratio of the profit and weight of the item.
In the QKP, the efficiency $e_i(x)$ of item $i$ with respect to 
an incumbent solution $x$ is defined as
\begin{align}
    e_i(x) \coloneqq \frac{p_i + 
    \sum_{j=1}^{i-1} p_{ji} x_j +
    \sum_{j=i+1}^{n} p_{ij} x_j
    }{w_i}.
\end{align}
Consequently, item $i$ with $x_i=1$ achieving minimum $e_i(x)$
is removed iteratively until the constraint is satisfied.
Note that this greedy removal operation is previously used for a constructive heuristic with an input $x=(1,\cdots,1)$
~\cite{chaillou1989best,billionnet1996linear}.

The improvement procedure consists of so-called \emph{fill-up and exchange} (FE) operation~\cite{gallo1980quadratic}, which is widely used in heuristic methods on the QKP~\cite{fomeni2014dynamic,patvardhan2016parallel}.
The fill-up operation puts items into the knapsack unless it violates the capacity constraint.
Then, the exchange operation replaces an item in the knapsack with another item that is not in the knapsack, so that it improves the objective value maintaining feasibility.
In other words, the fill-up operation modifies a feasible solution to a local optimum, and the exchange operation searches neighborhood local optima.
In our method, the order of item selection for FE operation is again based on the greedy strategy with the efficiency $e_i(x)$.
An item to be included in the knapsack is chosen following the descending order of $e_i(x)$ and an item to be removed from the knapsack is chosen following the ascending order of $e_i(x)$.

The overall process is summarized in Algorithm~\ref{alg:repair_solution}.
Every time the solution $x$ is changed, the efficiency $e_i$ is updated with computational cost of order $O(n)$.
The total complexity of the algorithm is $O(n^3)$ in the worst case, but the number of the exchange operation (which is the bottleneck) is typically much less than $n^2$, and so the algorithm runs practically fast.
Indeed, a quadratic scaling of the processing time is observed in our experiments in Section~\ref{sec:experiment}.

The post-processing above is closely related to a greedy heuristic proposed by Billionnet and Calmels~\cite{billionnet1996linear}.
Their method is to first obtain a feasible solution with the greedy removal operation for $x=(1,\cdots,1)$ and then apply the FE operation.
In particular, when the penalty coefficient $\lambda$ in (\ref{eq:qkp_qubo}) is set to $0$, then the optimal solution is obviously $x=(1,\cdots,1)$.
Thus, for sufficiently small $\lambda$, an Ising machine with the post-processing outputs the same solution as the one obtained by the greedy method.

The ideas of the repair and improvement procedures are not new as mentioned above.
Besides, more elaboration on the post-processing can be made to improve the solving performance further with additional computational costs.
In this study, however, the specific implementation is not of much interest.
Rather, we aim to show that combining the simple post-processing based on the well-known ideas effectively overcomes the critical performance issue of Ising machines,
\hl{which does not seem to be understood well in the existing studies~\cite{parizy2021analysis,tamura2021performance}.
The simplicity of the proposed method is preferable in terms of extensibility:
establishing the effectiveness with the naive implementation
leads to expectation that the approach also works on other problems.
}

\begin{figure*}[t]
     \centering
     \subfloat[$n=100$\label{fig:sa_n100}]{
         \includegraphics[width=0.31\textwidth]{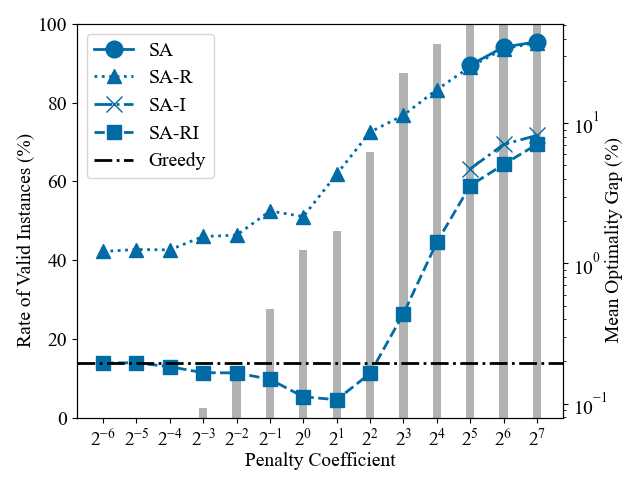}
     }
     \hfil
     \subfloat[$n=200$\label{fig:sa_n200}]{
         \includegraphics[width=0.31\textwidth]{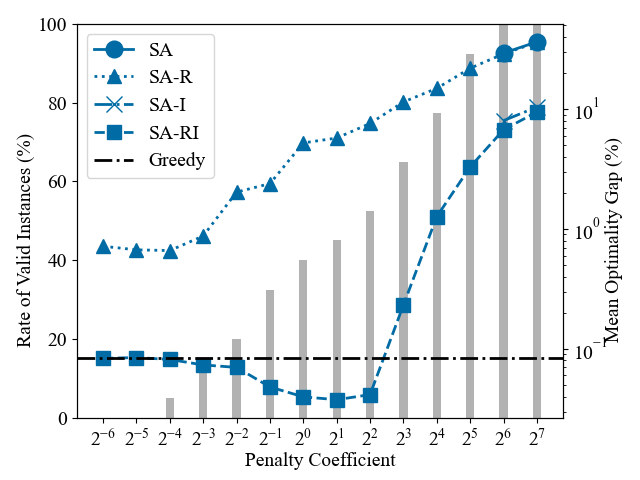}
     }
     \hfil
     \subfloat[$n=300$\label{fig:sa_3100}]{
         \includegraphics[width=0.31\textwidth]{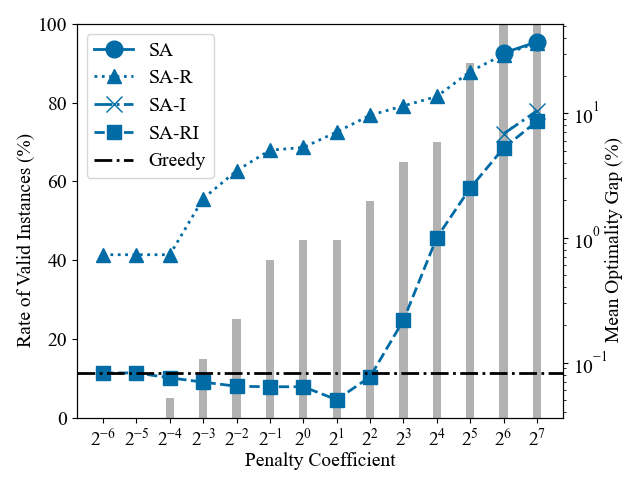}
     }
    \caption{
    Optimality gap (line graph) and number of instances on which feasible solutions are obtained with SA (bar chart) for each problem size $n$ of QKP instances.
    Optimality gap for SA and SA-I is plotted only for $\lambda$ producing feasible solutions on all instances.
    By combining repair and improvement procedures, SA-RI achieves smaller optimality gap than greedy method.
    }
    \label{fig:sa_small_plot}
\end{figure*}

\section{Simulation Experiments}\label{sec:simulation}

We validate the proposed method via simulation of Ising machines on the basis of simulated annealing (SA) that takes a QUBO problem as an input.
Note that most digital Ising machines are based on SA~\cite{yamaoka201520k,aramon2019physics}, and also SA is treated as a classical counterpart of quantum annealing~\cite{battaglia2005optimization,heim2015quantum}.
Therefore, controlled experiments with SA provide informative insights on the use of Ising machines.
For a test bed, we use a data set of 100 medium-sized QKP instances generated in the previous study~\cite{billionnet2004exact}.
There are 10 generated instances 
for each combination of the problem size $n \in \{ 100, 200, 300 \}$ and density $d\%$ of the objective function for $d \in \{25, 50, 75, 100\}$ except for $(n,d) = (300,75)$ and $(300,100)$.
Specifically, the pairwise profit $p_{ij} \ (i<j)$ is non-zero with probability $d/100$ in the generation procedure.
The exact optimal solutions of these instances are known and
the data set has been used in the existing benchmark of Ising machines~\cite{parizy2021analysis,jimbo2022hybrid,bontekoe2023translating}.
Things to be verified are as follows:
(i) better solutions (in particular, the optimal solutions) are obtained by utilizing the post-processing and (ii) the computational cost for the post-processing is sufficiently small compared to the rest of the whole process.
Furthermore, we re-evaluate various encoding methods of the inequality constraints~\cite{tamura2021performance,jimbo2022hybrid,bontekoe2023translating} under the existence of the post-processing to verify the robustness of the proposed method.

\subsection{Computational Set-up}

Each QKP instance is translated into a QUBO problem (\ref{eq:qkp_qubo}) with binary encoding (\ref{eq:binary_encoding}) of the integer variable $z$ where the upper bound $D$ of $z$ is set to the capacity $C$.
The penalty coefficient $\lambda$ is varied for $\lambda = 2^i, i=-6,-5,\cdots,6,7$.
For each $\lambda$, SA is executed 10 times to obtain 10 solutions.
The setting of SA is as follows.
We use the public implementation of SA on D-Wave Ocean SDK\footnote{https://github.com/dwavesystems/dwave-ocean-sdk} of version 6.4.1.
In the algorithm, the temperature is successively decreased from the initial value to the end value, iterating an inner loop consisting of Monte-Carlo (MC) steps for all variables.
Following the previous studies~\cite{tamura2021performance,jimbo2022hybrid}, the number of inner loops is set to $10^6$ and the initial and end temperatures are set to $n\max_{i,j} |Q_{i,j}|$ and $0.1$, respectively.
Here, $Q_{i,j}$ is the QUBO matrix for (\ref{eq:qkp_qubo}), i.e.,
\begin{align}
    \sum_{i,j:i\le j} Q_{i,j} \hat x_i \hat x_j = -H(x) + \lambda H_\mathrm{ineq} (x,z),
\end{align}
where $\hat x = (x_1, \cdots, x_n, y_1, \cdots, y_k)$ is a vector of the whole variables including $y_1,\cdots,y_k$ in (\ref{eq:binary_encoding}).
The experiment program is coded with python 3.11.4 and run on a CentOS (version 7.6.1810) server with Intel Xeon Gold 6130 chip.

We set SA without post-processing (which we simply call SA) and the greedy algorithm described in Section~\ref{sec:proposed_method} as baselines, and compare them to SA with the repair and/or improvement procedure (which we call SA-R, SA-I, and SA-RI, respectively).
We summarize the compared methods in Table~\ref{tab:baselines}.
The quality of a solution is evaluated via the optimality gap
\begin{align}\label{eq:optimality_gap}
    \mathrm{Optimality\ Gap} = {\frac{S_\mathrm{best} - S}{S_\mathrm{best}}} \times 100 \ (\%),
\end{align}
where $S_\mathrm{best}$ is the optimal value for the QKP instance and $S$ is the objective value of the solution.
For methods other than the (deterministic) greedy method, the optimality gap is taken as the minimum over all feasible solutions obtained for each $\lambda$.
For SA, we also count the number of instances on which a feasible solution is obtained, for each $\lambda$.
The optimality gap for SA and SA-I is reported only for instances on which they obtain at least one feasible solution.

\begin{table}[t]
  \caption{Description of Compared Methods.}
  \label{tab:baselines}
  \centering
  \begin{tabular}{c|l}
    \hline
    Name & Description \\
    \hline \hline
    Greedy & Equivalent to post-processing on $x= (1,\cdots,1)$ \\
    SA     & SA without post-processing (may output infeasible solutions) \\
    SA-R   & SA with repair procedure \\
    SA-I   & SA with improvement procedure (only for feasible solutions) \\
    SA-RI  & SA with both repair and improvement procedures \\
    \hline
  \end{tabular}
\end{table}

\subsection{Results}\label{subsec:sa_small_result}

\subsubsection{Observations from Tuning of Penalty Coefficients}

The optimality gap of each method aligned with the penalty coefficient $\lambda$ averaged over instances of the same size are shown in Fig.~\ref{fig:sa_small_plot}.
We also show the rate of the number of instances where SA outputs at least one feasible solution (which we call valid instances) as bar charts.
For SA and SA-I, the optimality gap is plotted only when feasible solutions are obtained on all instances for each $\lambda$ and not shown otherwise.
The first thing to observe from the results of SA is that the rate of valid instances increases for large $\lambda$, whereas large $\lambda$ degrades the optimality gap.
Therefore, SA achieves its smallest optimality gap on the minimum $\lambda_\mathrm{SA}$ among those giving feasible solutions on all instances, i.e., $\lambda_\mathrm{SA}=32$ for $n=100$ and $\lambda_\mathrm{SA}=64$ otherwise.
Note that the best optimality gap of SA is much worse than that of the greedy method.
Since the greedy method runs several orders of magnitude faster than SA, we conclude that SA without post-processing is completely inferior to the greedy method on the QKP.
When the repair method is applied, the optimality gap of SA-R roughly extrapolates that of SA, as expected.
Accordingly, the optimality gap of SA-R achieves smaller values than that of SA for $\lambda<\lambda_\mathrm{SA}$.
A similar phenomenon was observed by Fukada et al.~\cite{fukada2021three} on a variant of the QAP.
This result indicates the effectiveness of tuning $\lambda$ based on the objective value instead of the rate of feasible solutions, which is realized thanks to the repair procedure.
The optimality gap is further reduced after combining with the improvement procedure.
Although using only either of the two procedures is not sufficient to outperform the greedy method, SA-RI using both procedures achieves a smaller optimality gap than that of the greedy method.
This suggests that the two procedures improve the solving performance of Ising machines synergistically.
Note that as $\lambda$ gets closer to 0, the optimality gap of SA-RI converges to that of the greedy method.
This is expected as we argued in Section~\ref{sec:proposed_method}, that is, SA outputs the trivial solution $x=(1,\cdots,1)$ for extremely small $\lambda$.
The same argument applies to SA-R; as $\lambda\to 0$, the optimality gap converges to that of a weak version of the greedy algorithm that only repairs $x=(1,\cdots,1)$.

There are two other interesting observations from Fig.~\ref{fig:sa_small_plot} regarding the optimal penalty coefficient.
One is that penalty coefficient $\lambda$ minimizing the averaged optimality gap of SA-RI seems independent of the problem size $n$.
We discuss this phenomenon in Section~\ref{subsec:optimal_penalty}, where
the dependence of the optimal $\lambda$ on instance data including $n$ and other factors is analyzed quantitatively.
The other observation is that $\lambda$ minimizing the optimality gap of SA-R and that of SA-RI completely differ: $\lambda_{\text{SA-R}}$ for SA-R is near 0 and $\lambda_{\text{SA-RI}}$ for SA-RI is around 2 for all problem size $n$.
\hl{We analyze this in detail in Appendix~\ref{app:aux_experiments}.
}

\begin{table}[t]
  \caption{Number of Medium-sized Instances Optimally Solved.}
  \label{tab:small_optimal_count_sa}
  \centering
  \begin{tabular}{c|ccccc}
    \hline
    $n\_d$ & Greedy & SA & SA-R & SA-I & SA-RI \\ 
    \hline \hline
    100\_25   & 3 & 0 & 3 & 6 & \hl{\textbf{9}} \\
    100\_50   & 4 & 1 & 1 & 8 & \hl{\textbf{10}} \\
    100\_75   & 4 & 1 & 4 & 6 & \hl{\textbf{9}} \\
    100\_100  & 4 & 0 & 2 & 8 & \hl{\textbf{10}} \\
    200\_25   & 2 & 0 & 0 & 5 & \hl{\textbf{9}} \\
    200\_50   & 4 & 0 & 2 & 3 & \hl{\textbf{6}} \\
    200\_75   & 4 & 0 & 1 & 3 & \hl{\textbf{8}} \\
    200\_100  & 2 & 0 & 1 & \hl{\textbf{5}} & \hl{\textbf{5}} \\
    300\_25   & 4 & 0 & 2 & 3 & \hl{\textbf{8}} \\
    300\_50   & 4 & 0 & 1 & 5 & \hl{\textbf{8}} \\
    \hline
    Total     & 35 & 2 & 17 & 52 & \hl{\textbf{82}} \\
    \hline
  \end{tabular}
\end{table}

\begin{table}[t]
  \caption{Average Optimality Gap (\%).}
  \label{tab:small_optimal_gap_sa}
  \centering
  \begin{tabular}{c|rrrrr}
    \hline
    $n\_d$ & \multicolumn{1}{c}{Greedy} &
    \multicolumn{1}{c}{SA} & 
    \multicolumn{1}{c}{SA-R} & 
    \multicolumn{1}{c}{SA-I} & 
    \multicolumn{1}{c}{SA-RI} \\ 
    \hline \hline
    100\_25  & 0.370 & 6.651 & 0.797 & 0.139 & \hl{\textbf{0.047}} \\
    100\_50  & 0.101 & 6.358 & 0.272 & 0.103 & \hl{\textbf{0.000}} \\
    100\_75  & 0.115 & 5.821 & 0.208 & 0.426 & \hl{\textbf{8.4E-3}} \\
    100\_100  & 0.196 & 10.202 & 0.395 & 7.0E-3 & \hl{\textbf{0.000}} \\
    200\_25  & 0.173 & 9.404 & 0.325 & 0.318 & \hl{\textbf{5.1E-3}} \\
    200\_50  & 0.049 & 8.624 & 0.122 & 0.421 & \hl{\textbf{0.011}} \\
    200\_75  & 0.049 & 8.624 & 0.200 & 2.259 & \hl{\textbf{2.9E-3}} \\
    200\_100  & 0.062 & 10.357 & 0.206 & 0.995 & \hl{\textbf{0.034}} \\
    300\_25  & 0.127 & 10.098 & 0.230 & 0.484 & \hl{\textbf{1.4E-3}} \\
    300\_50  & 0.038 & 11.329 & 0.245 & 1.415 & \hl{\textbf{4.4E-4}} \\
    \hline
    Mean  & 0.128 & 8.747 & 0.300 & 0.657 & \hl{\textbf{0.011}} \\
    \hline
  \end{tabular}
\end{table}

\subsubsection{Results on Best Optimality Gap}

The number of instances on which each method achieved the optimal solution is reported in Table~\ref{tab:small_optimal_count_sa}.
We also summarize the optimality gap averaged over 10 instances for each pair $(n,d)$ in Table~\ref{tab:small_optimal_gap_sa}.
SA achieves the optimal solutions on only two instances among 100 instances in total.
Although SA-R achieves the optimum on several instances, the total number of such instances is less than that of the greedy method.
SA-I obtains the optimal solutions more frequently than SA-R and the greedy method, but its averaged optimality gap is worse than the others.
This means that the quality of solutions of SA-I has much variance over instances, which is often undesirable.
SA-RI, the proposed method, successfully attains the optimum on \hl{82} instances in total and achieves the smallest optimality gap for all pairs of $(n,d)$.
These results clearly demonstrate the effectiveness of combining the repair and improvement procedures as the post-processing for SA.
For full results on each instance, see Appendix~\ref{app:sa_full_results}.

\begin{table}[t]
  \caption{Average Processing Time.}
  \label{tab:sa_running_time}
  \centering
  \begin{tabular}{c|cc|cc}
    \hline
    & \multicolumn{2}{c|}{Before Post-process (s)} & \multicolumn{2}{c}{Post-process (ms)} \\
    $n\_d$  & Formulation & SA & Repair & Improve \\
    \hline \hline
    100\_25  & 0.07 & 4.4 & 0.9 & 1.0 \\
    100\_50  & 0.07 & 4.3 & 1.0 & 1.0 \\
    100\_75  & 0.07 & 4.0 & 1.0 & 0.9 \\
    100\_100  & 0.07 & 3.7 & 1.0 & 0.8 \\
    200\_25  & 0.23 & 17.3 & 3.5 & 3.7 \\
    200\_50  & 0.24 & 17.0 & 3.8 & 3.7 \\
    200\_75  & 0.24 & 14.0 & 4.1 & 2.9 \\
    200\_100  & 0.25 & 12.8 & 3.9 & 2.8 \\
    300\_25  & 0.51 & 34.4 & 8.1 & 7.0 \\
    300\_50  & 0.52 & 36.5 & 8.8 & 7.0 \\
    \hline
  \end{tabular}
\end{table}

\subsubsection{Results on Processing Time}

We evaluate the computational overhead of the post-processing.
The average processing time for each process is reported in Table~\ref{tab:sa_running_time}.
In addition to the execution time of SA and the repair and improvement procedures, we include the processing time to create the input QUBO object after reading data of the corresponding QKP instance as the ``Formulation'' column.
We see that time for each process increases roughly with an order of $n^2$.
Note that we report processing time before the post-processing in seconds and time for the post-processing in milliseconds.
The time required for the post-processing is more than 1000 times less than that of the annealing, and also much less than the formulation.
Therefore, the proposed method improves the accuracy with a negligibly small amount of additional computational cost.

\subsection{Dependency on Encoding Methods}\label{subsec:compare_encoding}

\begin{figure*}[t]
     \centering
     \subfloat[Rate of Feasible Solutions\label{fig:encoding_feasibility}]{
         \includegraphics[width=0.31\textwidth]{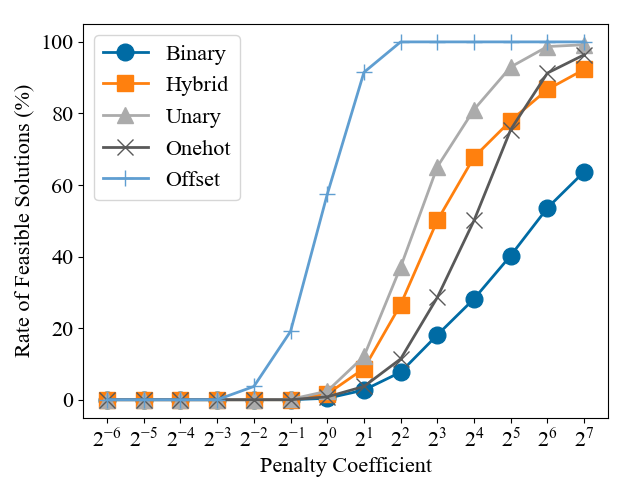}
     }
     \hfil
     \subfloat[Optimality Gap of SA\label{fig:encoding_raw}]{
         \includegraphics[width=0.31\textwidth]{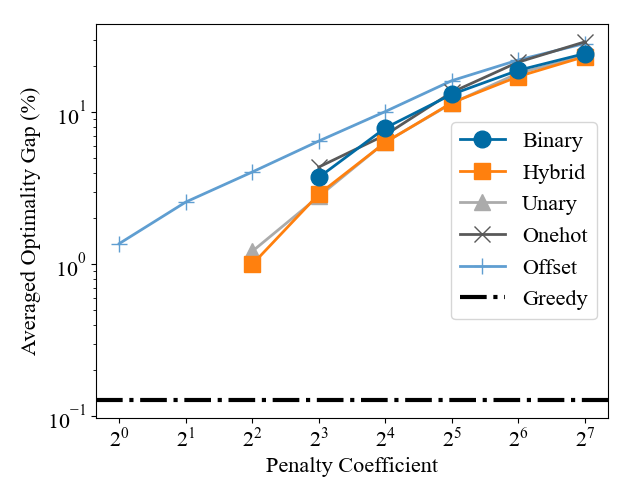}
     }
     \hfil
     \subfloat[Optimality Gap with Post-processing\label{fig:encoding_score}]{
         \includegraphics[width=0.31\textwidth]{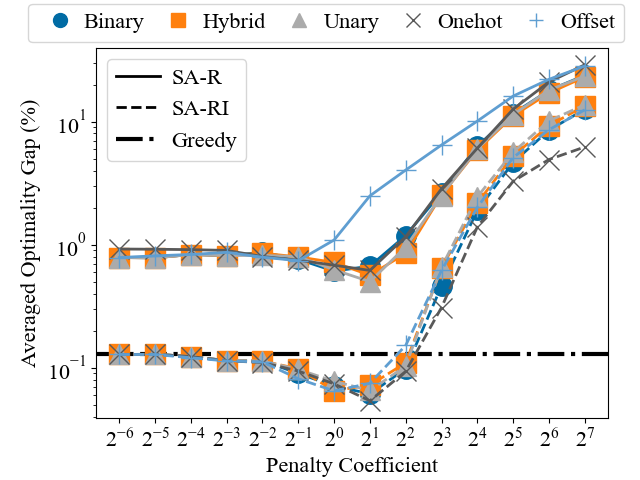}
     }
    \caption{
    Performance comparison among various encoding methods of inequality constraint on 100 medium-sized QKP instances.
    (a)(b) Choice of encoding methods controls trade-off between rates of feasible solutions and objective values.
    (c) Solving performance of proposed method is much less dependent on choice of encoding methods.
    }
    \label{fig:sa_compare_encoding}
\end{figure*}

As described in Section~\ref{sec:preliminaries}, the previous studies~\cite{tamura2021performance,jimbo2022hybrid,bontekoe2023translating} suggest that other encoding methods of the slack variable $z$ in (\ref{eq:qkp_qubo}) than the standard binary encoding (\ref{eq:binary_encoding}) might enhance the quality of solutions obtained by Ising machines.
Since their evaluation has been conducted without any post-processing, we re-evaluate various encoding methods with the proposed post-processing in this section.
\hl{We report simulation results based on SA here since it reproduces well the results of the previous studies~\cite{tamura2021performance,jimbo2022hybrid,bontekoe2023translating} as shown below.
We also conducted the same experiment with a real Ising machine and obtained mostly similar results, see Appendix~\ref{app:aux_experiments} for details.
}

The setting is as follows.
We consider the following five variations of encoding methods of $z$ in the QUBO problem (\ref{eq:qkp_qubo}) of the QKP.
The first is the binary encoding shown in (\ref{eq:binary_encoding}).
Recall that it involves $k$ auxiliary variables $y_1, \cdots, y_k$ with $k = \lfloor \log D \rfloor + 1$, where $D$ denotes the upper bound of $z$.
The second is the unary encoding defined as 
\begin{align}
    z = \sum_{i=1}^D y_i,
\end{align}
which involves $D$ auxiliary variables $y_1, \cdots, y_D$.
The third is the hybrid encoding~\cite{jimbo2022hybrid}, which hybridizes the unary and binary encoding.
As it has several degrees of freedom, we adopt the following form close to a method called $HE(1)$ in the previous experiment~\cite{jimbo2022hybrid}:
\begin{align}
    z = \sum_{i=1}^k y_i + \sum_{i=k+1}^{2k} 2 y_i, \ k \coloneqq \lceil D/3 \rceil.
\end{align}
The hybrid encoding involves $2\lceil D/3 \rceil$ auxiliary variables.
The fourth is the one-hot encoding, which uses an additional penalty term $H_\mathrm{onehot}$ and modify the objective function of the QUBO problem as 
\begin{align}
    -H(x) + \lambda \left( H_\mathrm{ineq}(x, z) + H_\mathrm{onehot}\right),
\end{align}
defining
\begin{align}
    z = \sum_{i=0}^D i y_i, \ H_\mathrm{onehot} = \left(\sum_{i=0}^D y_i -1 \right)^2.
\end{align}
The one-hot encoding involves $D+1$ auxiliary variables.
The last is the offset encoding~\cite{bontekoe2023translating},
which set $z$ to a constant 
\begin{align}
    z=W_\mathrm{offset}    
\end{align}
with some small number $W_\mathrm{offset}\ge 0$.
Since $z$ does not work as a slack variable any more,
the offset encoding does not preserve the equivalence of the optimization problems.
Nevertheless, Bontekoe et al.~\cite{bontekoe2023translating} reported that it outperformed other encoding methods.
We set $W_\mathrm{offset}=3$ following the previous result.
All methods other than the offset encoding involve the upper bound $D$ of $z$.
Note that it suffices to set $D$ to a value greater than or equal to $\max_i w_i$ to translate the QKP to the QUBO problem preserving the optimum according to Proposition~\ref{prop:qkp_property}.
On the other hand, since methods other than the binary encoding uses $O(D)$ auxiliary variables, $D$ should be sufficiently small to effectively apply Ising machines.
Therefore, we set $D$ to $\max_i w_i$ in this experiment.
Other settings are identical to those in the earlier experiment.

We remark that an output of Ising machines or SA can have a positive penalty $H_\mathrm{ineq}(x, z) > 0$ (or $H_\mathrm{onehot} > 0$ for the one-hot encoding)
even if the solution $x$ is feasible.
Such situations include a case where $z \ne \sum_i w_i x_i$, as well as a case where $\sum_{i=0}^D y_i\ne 1$ for the one-hot encoding.
It is in contrast to the previous evaluation~\cite{tamura2021performance} treating the solution as feasible only when it has zero penalty, and this difference in definition could lead to different results.
In particular, for the one-hot encoding above, it is actually not necessary to impose the one-hot constraint $\sum_{i=0}^D y_i=1$, since the inequality constraint can be satisfied even when $\sum_{i=0}^D y_i=0$ or $\sum_{i=0}^D y_i \ge 2$.
Note that this fact is also used in the previous study~\cite{bontekoe2023translating}.
Therefore, the aforementioned case where $x$ is feasible and $H_\mathrm{onehot} > 0$ can particularly often occur, and we indeed observed this phenomenon in our experiment.

Fig.~\ref{fig:sa_compare_encoding} shows the results over various $\lambda$.
Fig.~\ref{fig:encoding_feasibility} shows the rate of feasible solutions (we call FS rate) over all instances for each encoding.
On all methods, a larger penalty coefficient results in a high FS rate.
Among the tested encoding methods, the binary encoding leads to the lowest, while the offset encoding achieves the highest.
The difference might be explained by the number of flips of auxiliary variables $y_i$ required for a flip of a variable $x_i$, which is mentioned in Section~\ref{sec:preliminaries}.
More precisely, multiple MC steps in SA are required to realize a single flip on the QKP.
The offset encoding uses no auxiliary variables, and thus the penalty $H_\mathrm{ineq}$ might be easily decreased by local operations in SA, leading to the high FS rate.
In contrast, a lot of MC steps are required for changing the value of $z$ for the binary encoding, resulting in a low FS rate.
The redundancy of the representation (i.e. representing a value of $z$ by multiple combinations of values of $y_1,y_2,\cdots$) in the unary and hybrid encoding might help to make the number of required MC steps small~\cite{tamura2021performance}, and thus they give the intermediate results.
For the one-hot encoding, most solutions violate the one-hot constraint and as a result obtain a similar redundancy, which again explains the intermediate result.
The optimality gap of the feasible solutions obtained by SA is shown in Fig.~\ref{fig:encoding_raw}.
Here, we plot the optimality gap for $\lambda$ that obtains a feasible solution on more than half of all instances to exclude outlier values. 
Again, for all methods, a smaller penalty coefficient leads to better objective values.
The hybrid, unary, and offset encodings achieve a lower optimality gap than the others, due to the high FS rate at small $\lambda$.
These results on the FS rate and optimality gap agree well with the previous studies~\cite{tamura2021performance,jimbo2022hybrid,bontekoe2023translating}.

Fig.~\ref{fig:encoding_score} shows the optimality gap for each method combined with the post-processing.
Interestingly, after the post-processing, the difference among the encoding methods gets almost negligible and all methods reach a similar minimum optimality gap at the similar value of $\lambda$.
A subtle exception is the offset encoding; SA-R with the offset encoding attains the minimum optimality gap at $\lambda_{\text{SA-R}}=0.5$, unlike the others.
This is presumably because fixing the slack variable $z$ to a constant changes the effect of penalty $H_\mathrm{ineq}$ on the behavior of SA.
The overall result indicates that the proposed method is much robust to the choice of encoding methods, compared to SA without post-processing.
A fundamental reason for the somewhat surprising similarity of the post-processed outputs over the various encoding methods is unclear and might be related to the behavior of the SA algorithm.
Since a precise algorithmic analysis is beyond the scope of this paper, further investigation is left as future work.

\begin{table}[t]
  \caption{Number of Medium-sized Instances Optimally Solved.}
  \label{tab:compare_optimal_count_sa}
  \centering
  \begin{tabular}{c|ccccc}
    \hline
    $n\_d$ &  Binary  &  Hybrid  &  Unary  &  One-hot  &  Offset \\
    \hline \hline
    100\_25  & 10 & 9 & 8 & 10 & 10 \\
    100\_50  & 9 & 9 & 9 & 9 & 9 \\
    100\_75  & 9 & 8 & 8 & 8 & 8 \\
    100\_100 & 8 & 8 & 8 & 9 & 7 \\
    200\_25  & 9 & 8 & 10 & 8 & 8 \\
    200\_50  & 7 & 7 & 6 & 7 & 7 \\
    200\_75  & 8 & 9 & 9 & 8 & 8 \\
    200\_100 & 6 & 6 & 6 & 6 & 6 \\
    300\_25  & 7 & 7 & 7 & 8 & 8 \\
    300\_50  & 10 & 9 & 9 & 8 & 9 \\
    \hline
    Total    & \hl{\textbf{83}} & 80 & 80 & 81 & 80 \\
    \hline
  \end{tabular}
\end{table}

\begin{table}[t]
  \caption{Averaged Optimality Gap ($\times 0.01$ \%).}
  \label{tab:compare_optimal_gap_sa}
  \centering
  \begin{tabular}{c|rrrrr}
    \hline
    $n\_d$  &  Binary  &  Hybrid  &  Unary  &  One-hot  &  Offset \\
    \hline \hline
    100\_25  & 0.000 & 9.328 & 4.355 & 0.000 & 0.000 \\
    100\_50  & 0.384 & 0.610 & 0.610 & 0.666 & 0.610 \\
    100\_75  & 0.537 & 1.590 & 1.590 & 1.140 & 1.140 \\
    100\_100  & 0.412 & 0.412 & 14.338 & 0.205 & 14.546 \\
    200\_25  & 0.510 & 0.659 & 0.000 & 0.253 & 3.585 \\
    200\_50  & 0.343 & 0.888 & 0.761 & 0.260 & 0.888 \\
    200\_75  & 0.917 & 0.213 & 0.213 & 0.297 & 0.884 \\
    200\_100  & 0.995 & 1.079 & 1.129 & 0.624 & 0.803 \\
    300\_25  & 0.418 & 3.710 & 0.180 & 0.135 & 0.246 \\
    300\_50  & 0.000 & 0.035 & 0.241 & 0.055 & 0.184 \\
    \hline
    Mean  & 0.452 & 1.853 & 2.342 & \hl{\textbf{0.363}} & 2.288 \\
    \hline
  \end{tabular}
\end{table}

For a quantitative performance comparison, we summarize the number of instances optimally solved and the optimality gap for each encoding with the proposed method in Table~\ref{tab:compare_optimal_count_sa} and \ref{tab:compare_optimal_gap_sa}.
We see that the binary and one-hot encodings slightly outperform the other methods on average in terms of both metrics.
In particular, among the binary, unary, and one-hot encodings, the unary encoding performs the worst (by a possibly negligible margin), in contrast to the previous evaluation without the post-processing~\cite{tamura2021performance}.
In other words, whether or not a specific encoding method performs well can be easily changed by additional operations.
This leads to an insight important to practitioners that performance evaluation of Ising machines should be carefully done in a practical situation when it involves pre- or post-processing of the problem or solutions.

\subsection{Analysis of Optimal Penalty Coefficients}\label{subsec:optimal_penalty}

In the earlier experiments, we observed that the optimal penalty coefficient $\lambda_{\text{SA-RI}}$ for the proposed method varies depending on the problem instances (see Appendix~\ref{app:sa_full_results} for full results including $\lambda_{\text{SA-RI}}$ for each instance).
The optimal penalty coefficient could be estimated by some representative features of the instance data~\cite{fukada2021three}.
In this section, we analyze $\lambda_{\text{SA-RI}}$ over the tested instances to utilize the result for solving larger instances in the later section.

As representative features of the QKP, we consider the problem size $n$, density $d$ of the objective function, and tightness ratio $\alpha = C/\sum_i w_i$ of the inequality constraint.
Note that the tightness ratio $\alpha$ has not been mentioned in the QKP literature,
whereas it is recognized as an important factor in the context of the multi-dimensional knapsack problem~\cite{chu1998genetic,hill2000effects}.
We expect that $n$ has weak correlation with $\lambda_{\text{SA-RI}}$, as we see from Fig.~\ref{fig:sa_small_plot} for each $n$.
On the other hand, the density $d$ involves the scale of the increase in the objective value for putting an item into the knapsack.
Since it is typically considered that the scales of the objective function and penalty should be balanced when applying the penalty method, we expect that $\lambda_{\text{SA-RI}}$ tends to be large for large $d$.

\begin{table}[t]
  \caption{Regression Coefficients for Optimal Penalty Coefficients.}
  \label{tab:qkp_penalty_regression}
  \centering
  \begin{tabular}{cccc}
    \hline
    $A$ & $c_n$ & $c_d$ & $c_\alpha$ \\
    \hline
    $1.14$ & $0.09$ & $0.84$ & $-0.21$ \\
    \hline
  \end{tabular}
\end{table}

We model $\lambda_{\text{SA-RI}}$ as the product of the features by
\begin{align}
    \lambda_\mathrm{Estimate} = A n^{c_n} d^{c_d} \alpha^{c_\alpha},
\end{align}
where $A, c_n, c_d,$ and $c_\alpha$ are parameters to be fit.
We show the results of log-linear regression on $\lambda_{\text{SA-RI}}$ for the tested 100 instances in Table~\ref{tab:qkp_penalty_regression}.
As expected, the resulting coefficient for $n$ is close to 0 and that for $d$ is a large positive value.
The coefficient $c_\alpha$ for $\alpha$ is negative, which means that $\lambda$ should be lowered for large capacity $C$.
This might be because large $\alpha$ implies that feasible solutions occupy a large fraction of the total space $\{0,1\}^n$, and thus the penalty is not required to be much emphasized for solving the QKP.
Note that the overall analysis is on a data set created following a specific procedure of instance generation, and the result might depend on the distribution of problem instances.
Since larger instances used in the later section are based on the same generating protocol as that of the instances used above, we make use of the analysis result to solve the larger instances.

\section{Evaluation using Ising Machine}\label{sec:experiment}

In this section, we evaluate the proposed method using one of the state-of-the-art Ising machines, Amplify Annealing Engine (AE)~\cite{amplify}, on a broader set of QKP instances.
Our aim in this experiment is to verify that the proposed method works also for a high-performance Ising machine as well as for naive SA.
\hl{We use large QKP instances for which exact methods require high computational time to solve.
We compare the performance of the Ising machine with that of existing heuristic solvers.}

\begin{table}[t]
  \caption{Number of Medium-sized Instances Optimally Solved.}
  \label{tab:small_optimal_count_ae}
  \centering
  \begin{tabular}{c|ccccc}
    \hline
    $n\_d$   & Gurobi & AE & AE-R & AE-I & AE-RI \\
    \hline \hline
    100\_25  & 10 & 10 & 10 & 10 & 10 \\
    100\_50  & 10 & 10 & 10 & 10 & 10 \\
    100\_75  & 10 & 10 & 10 & 10 & 10 \\
    100\_100  & 10 & 10 & 10 & 10 & 10 \\
    200\_25  & 10 & 7 & 8 & 10 & 10 \\
    200\_50  & 9 & 8 & 8 & 10 & 10 \\
    200\_75  & 10 & 7 & 8 & 10 & 10 \\
    200\_100  & 10 & 7 & 7 & 10 & 10 \\
    300\_25  & 10 & 5 & 7 & 10 & 10 \\
    300\_50  & 10 & 6 & 8 & 10 & 10 \\
    \hline
    Total    & 99 & 80 & 86 & \hl{\textbf{100}} & \hl{\textbf{100}} \\
    \hline
  \end{tabular}
\end{table}

\begin{table*}[t]
  \caption{Number of Best Known Solutions Obtained and Average Optimality Gap ($\times 0.01$ \%).}
  \label{tab:large_optimality_gap}
  \centering
  \begin{tabular}{@{\hspace{2pt}}c|c@{\hspace{3pt}}cc@{\hspace{3pt}}cc@{\hspace{3pt}}cc@{\hspace{3pt}}cc@{\hspace{3pt}}c|c@{\hspace{3pt}}cc@{\hspace{3pt}}cc@{\hspace{3pt}}cc@{\hspace{3pt}}c@{\hspace{3pt}}}
    \hline
    & 
    \multicolumn{2}{c}{\multirow{2}{*}{Greedy}} & 
    \multicolumn{2}{c}{\multirow{2}{*}{DP+FE~\cite{fomeni2014dynamic}}} & 
    \multicolumn{2}{c}{GRASP+} & 
    \multicolumn{2}{c}{\multirow{2}{*}{IQIEA~\cite{patvardhan2016parallel}}} &
    \multicolumn{2}{c}{\multirow{2}{*}{Gurobi}} & 
    \multicolumn{2}{|c}{\multirow{2}{*}{AE}} & 
    \multicolumn{2}{c}{\multirow{2}{*}{AE-R}} & 
    \multicolumn{2}{c}{\multirow{2}{*}{AE-I}} & 
    \multicolumn{2}{c}{\multirow{2}{*}{AE-RI}}\\
    &&&&&\multicolumn{2}{c}{Tabu~\cite{yang2013effective}} &&&&&\multicolumn{2}{|c}{}\\
    $n\_d$ & \#BKS & Gap & \#BKS & Gap & \#BKS & Gap & \#BKS & Gap & \#BKS & Gap & \#BKS & Gap & \#BKS & Gap & \#BKS & Gap & \#BKS & Gap \\
    \hline \hline
    1000\_25  & 4 & 2.452 & 3 & 11.153 & \hl{\textbf{10}} & \hl{\textbf{0.000}} & \hl{\textbf{10}} & \hl{\textbf{0.000}} & 8 & 0.041 & 1 & - & 4 & 2.830 & 3&-& \hl{\textbf{10}} & \hl{\textbf{0.000}} \\
    1000\_50  & 1 & 1.928 & 1 & 0.434  & \hl{\textbf{10}} & \hl{\textbf{0.000}} & \hl{\textbf{10}} & \hl{\textbf{0.000}} & 8 & 0.010 & 0 & - & 4 & 0.428 & 2&-& 8 & 0.184 \\
    1000\_75  & 0 & 7.760 & 1 & 0.675  & 9  & 0.043 & 9  & 0.043 & 8 & 0.011 & 0 & - & 2 & 4.002 & 0&-& 8 & 0.062 \\
    1000\_100  & 0 & 3.782 & 2 & 0.495 & 9  & 0.121 & \hl{\textbf{10}} & \hl{\textbf{0.000}} & 7 & 0.259 & 0 & - & 1 & 1.833 & 0&-& 7 & 0.032 \\
    2000\_25  & 1 & 0.763 & 0 & 0.330  & \hl{\textbf{10}} & \hl{\textbf{0.000}} & \hl{\textbf{10}} & \hl{\textbf{0.000}} & 5 & 0.032 & 0 & - & 4 & 0.141 & 0&-& 8 & 0.010 \\
    2000\_50  & 3 & 1.297 & 2 & 0.337  & 9  & 0.034 & 9  & 0.037 & 7 & 0.042 & 0 & - & 4 & 0.360 & 0&-& 8 & 0.101 \\
    2000\_75  & 1 & 1.097 & 1 & 0.173  & 8  & 0.375 & 8  & 0.375 & 9 & 0.054 & 0 & - & 2 & 1.229 & 0&-& 7 & 0.393 \\
    2000\_100  & 0 & 2.577 & 1 & 0.257 & 9  & 0.533 & 9  & 0.512 & 5 & 0.152 & 0 & - & 2 & 2.873 & 0&-& 6 & 0.597 \\
    \hline
    All     & 10 & 2.707 & 11 & 1.732 & 74 & 0.138 & 75 & 0.121 & 57 & 0.075 & 1 & - & 20 & 1.712 & 5 &-& 62 & 0.172 \\
    \hline
    \multicolumn{19}{l}{\hl{
    Results are shown in boldface when the algorithm achieves the zero optimality gap on all instances, i.e., reaches the best known scores of IHEA~\cite{chen2017iterated}.}}
  \end{tabular}
\end{table*}

\begin{table}[t]
  \caption{\hl{
  Performance comparison of AE-RI and each baseline.}
  }
  \label{tab:ae_vs_baselines}
  \centering
  \begin{tabular}{l|rr|rr}
    \hline
    \multirow{2}{*}{\hl{Method}} & 
    \multirow{2}{*}{\hl{\# better}} &
    \multirow{2}{*}{\hl{\# worse}} & 
    \multicolumn{2}{c}{\hl{Wilcoxon test}} \\
    &&& \hl{statistic} & \hl{p-value} \\
    \hline
    Greedy & 70 & 0 & 2485.0 & 2E-13\\
    DP+FE~\cite{fomeni2014dynamic} & 66 & 5 & 2293.5 & 3E-9\\ 
    GRASP+Tabu~\cite{yang2013effective} & 1 & 15 & 16.0 & 0.996\\
    IQIEA~\cite{patvardhan2016parallel} & 1 & 15 & 6.0 & 0.999\\
    IHEA~\cite{chen2017iterated} & 0 & 18 & 0.0 & $>$0.999\\
    Gurobi & 16 & 13 & 217.0 & 0.504 \\
    \hline
    \multicolumn{5}{l}{\# better/worse represents the number of instances on which AE-RI} \\
    \multicolumn{5}{l}{performs better/worse.}
  \end{tabular}
\end{table}

\begin{table*}[t]
  \caption{Average Running Time (second) to Sample a Solution.}
  \label{tab:large_qkp_runtime}
  \centering
  \begin{tabular}{c|cccccc|cccc}
    \hline
    
    \multirow{2}{*}{$n$} & 
    \multirow{2}{*}{Greedy} & 
    \multirow{2}{*}{DP+FE~\cite{fomeni2014dynamic}} & 
    GRASP+ &
    \multirow{2}{*}{IQIEA~\cite{patvardhan2016parallel}} & 
    \multirow{2}{*}{IHEA~\cite{chen2017iterated}} & 
    \multirow{2}{*}{Gurobi} & 
    \multicolumn{4}{c}{AE-RI} \\
    &&& Tabu~\cite{yang2013effective} &&&& Formulation & AE & Repair & Improve  \\
    \hline \hline
    1000  & 0.27 &  2917.7 &  28.0 &  307.4 & 6.0 &  60.0 & 2.17 & 9.93 & 0.08 & 0.06 \\
    2000  & 1.08 & 51695.8 & 329.7 & 3034.0 & 22.7 & 60.4 & 10.74 & 20.50 & 0.33 & 0.32 \\
    \hline
  \end{tabular}
\end{table*}

\subsection{Setting}

In addition to the medium-sized instances in Section~\ref{sec:simulation},
we use another group of QKP instances generated in the previous study~\cite{yang2013effective}.
There are 10 instances for each combination of the problem size $n\in \{1000, 2000\}$ and density $d\in \{25,50,75,100\}$ of the objective function in the data set.
Their exact optimal solutions have not been known and the current best known objective values are reported by Chen and Hao~\cite{chen2017iterated}.
Therefore, we use their result to compute the optimality gap (\ref{eq:optimality_gap}) in which $S_\mathrm{best}$ denotes the best known objective value.

The computational environment is the same as in Section~\ref{sec:simulation}. 
We provide additional details on the use of the Ising machine.
We use AE of version 0.7.4 with A100 GPU.
The timeout for the execution of AE is set to $0.01n$ seconds for problem size $n$, which is
comparable with that of the existing solver~\cite{chen2017iterated}.
We use Amplify SDK~\cite{amplify} of version 0.11.2 to translate the QKP into a QUBO problem and input it to AE.
The slack variable $z$ is encoded into binary variables by binary expansion (\ref{eq:binary_encoding}) with $D=C$.

The penalty coefficient $\lambda$ is heuristically varied as
\begin{align}\label{eq:penalty_scale}
    \lambda = a \frac{d}{100}\sqrt{1/\alpha}, \ a=1,2, \cdots,
\end{align}
using the density $d$ and tightness ratio $\alpha = C/\sum_i w_i$, based on the result in Section~\ref{subsec:optimal_penalty}.
The upper bound of $a$ is set to 10 for medium-sized instances and 20 for large instances, which is found to be sufficient to obtain good solutions with the proposed method.
The Ising machine is executed 10 times to sample 10 solutions for each $\lambda$.
The solutions are evaluated by optimality gap with and without the post-processing.

We compare the performance of AE with and without the post-processing.
We call them AE, AE-R, AE-I, and AE-RI, respectively, following the same notation in Table~\ref{tab:baselines}.
We use the greedy method described in Section~\ref{sec:proposed_method} as a baseline.
Besides, the results of the following heuristic solvers tailored for the QKP are taken from the existing papers~\cite{patvardhan2016parallel,chen2017iterated} and included as baselines:
dynamic programming with fill-up and exchange (DP+FE)~\cite{fomeni2014dynamic}, GRASP combined with tabu search (GRASP+Tabu)~\cite{yang2013effective}, improved quantum-inspired evolutionary algorithm (IQIEA)~\cite{patvardhan2016parallel}, and iterated hyper-plane exploration approach (IHEA)~\cite{chen2017iterated}.
We also use Gurobi~\cite{gurobi}, one of the state-of-the-art commercial solvers, to provide an insight into performance comparison with a general-purpose method.
For each instance,
we run Gurobi (version 9.1.2) with a time limit of 1 minute and report the best solution found.

\subsection{Results}

We discuss the benchmark results on medium-sized and large QKP instances.
For the full results on each instance, we refer to Appendix~\ref{app:ae_full_results}.
Since the best known solutions (BKS) produced by the IHEA algorithm are used for evaluation, IHEA trivially achieves zero optimality gap for all instances and thus is omitted from the results.

The results on the medium-sized instances are summarized in Table~\ref{tab:small_optimal_count_ae}.
For the previous methods,
we omit the results since these instances are rather easy to reach optimality, and refer to the original results~\cite{chen2017iterated}.
For the instances of size $n=100$, AE successfully obtains the optimal solutions without the post-processing.
As $n$ increases, however, the number of instances solved optimally by AE decreases.
Meanwhile, the post-processing enables us to obtain the optimal solutions on all instances of sizes up to 300.
The result ensures that the proposed method can further enhance the solving performance of the state-of-the-art Ising machine.

The results on the large instances are shown in Table~
\ref{tab:large_optimality_gap}.
The averaged optimality gap for AE and AE-I are omitted in Table~\ref{tab:large_optimality_gap} since they could not obtain a feasible solution on some instances for every $(n,d)$.
AE-RI achieves the best known solutions on 62 instances out of 80 instances, whereas AE obtains the best known solution on only one instance.
Also, the greedy method performs poorly compared to other methods.
Note that the greedy method applies the same local operations as the post-processing.
Therefore, the result indicates that global search by the Ising machine and local search by the post-processing work well in a complementary manner in the proposed method.
Furthermore, AE-RI 
achieves comparable results with other heuristic solvers. 
This is the first result to achieve such high accuracy on the large-scale QKP using Ising machines, which might shed the light on the practical utility of Ising machines.
\hl{To compare the performance of AE-RI and each baseline more directly, we also provide a result of statistical testing of the AE-RI performance against each baseline in Table~\ref{tab:ae_vs_baselines}.
We conducted the one-sided Wilcoxon rank sum test with a null hypothesis that the performance of AE-RI is not better than a baseline.
More precisely, we count the number of QKP instances on which AE-RI achieved a larger/smaller objective value than each baseline method and calculate the test statistic and p-value.
The result shows that AE-RI indeed performs significantly better than the greedy and DP+FE methods, while it does not hold for other baselines.
In summary,
although our result significantly outperforms the previous
benchmarks of Ising machines, there is still a performance
gap between Ising machines and the state-of-the-art heuristic solvers such as IHEA.
Filling the gap could be an important milestone for further software and hardware development of Ising machines.
}

\begin{figure}
    \centering
    \includegraphics[width=0.48\textwidth]{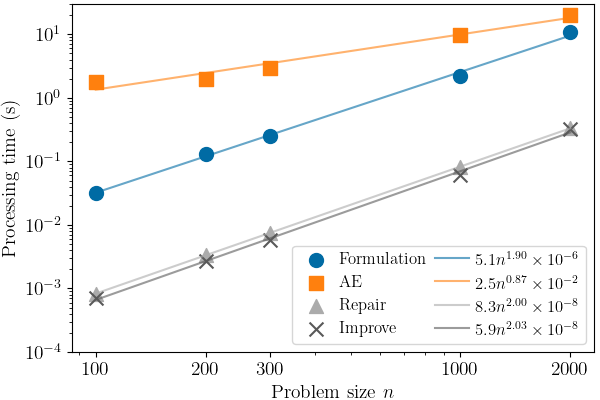}
    \caption{\hl{
    Processing time of processes in AE-RI.
    Lines are fitted with log-log regression.
    The execution time of AE is set to $O(n)$ and the other processes empirically take $O(n^2)$ runtime.
    }}
    \label{fig:runtime_scaling}
\end{figure}

We report the computational time of the proposed method in Table~\ref{tab:large_qkp_runtime}.
The processing time for formulation, repair, and improvement procedures shows an expected scaling behavior extending Table~\ref{tab:sa_running_time} to larger $n$.
The execution time of AE is around the timeout we set.
\hl{We plot the processing times against the number of variables $n$ in Fig.~\ref{fig:runtime_scaling} with log-log regression curves.
As in the case of SA-RI, we observe the quadratic scaling of processing time for formulation, repair, and improvement procedures.
}
Overall, the results imply that the post-processing causes negligible computational overhead also for the Ising machine.

Averaged running time to obtain one solution for each baseline is also listed in Table~\ref{tab:large_qkp_runtime}.
For methods other than the greedy method and Gurobi, the results are taken from the previous studies~\cite{patvardhan2016parallel,chen2017iterated}.
We do not intend a fair comparison of running time across the baselines, due to differences in the computational environments.
Moreover, since AE is a cloud service involving queue and communication time, defining a reasonable metric on computational time is itself a hard task.
Here, we just aim to get insights into the scaling behavior of running time.
The IHEA algorithm scales quite well for large $n$, and thus the comparable amount of time has been adopted for the timeout of AE in our experiments.
Further precise benchmarks including evaluation of practical solving time should be conducted in the future after establishing a method for Ising machines to achieve sufficiently high accuracy.

\section{Related Work and Discussion}\label{sec:related_work}

There are several previous studies aiming to solve the QKP using Ising machines~\cite{parizy2021analysis,tamura2021performance,jimbo2022hybrid,bontekoe2023translating}.
All of them use relatively easy QKP instances which can be handled by exact methods.
Our work is the first to solve large QKP instances ranging from 1000 to 2000 variables with Ising machines.
Parizy et al.~\cite{parizy2021analysis} propose an improvement algorithm for feasible solutions of the QKP, but their rate of instances optimally solved is only 77\% with a high-performance Ising machine while ours achieves higher rates using naive SA.
The difference might be caused by the use of the repair method.
The other studies explore a good way of encoding inequality constraints~\cite {tamura2021performance,jimbo2022hybrid,bontekoe2023translating}.
Our work takes a completely different approach and shows that
an encoding method is not an important factor for accuracy on the QKP under the existence of the post-processing
as in Section~\ref{sec:simulation}.

\hl{
The comprehensive experiments in the previous sections show that the naive post-processing leads to drastic improvement of solving performance of SA and the Ising machine.
This finding is somewhat surprising given the simplicity of the method, and seems to have been overlooked by the existing studies.
Considering that the Ising machine hardware is rapidly evolving to obtain better results and there could be room for enhancing and extending the post-processing, 
it also indicates the possibilities that Ising machines will 
be competent with other heuristic approaches 
in the future.
}

The proposed method could be extended to other problems on which a greedy heuristic is known.
Such problems may involve other types of constraints such as one-hot constraints. 
\hl{For example, the max k-cut~\cite{frieze1997improved} problem, a generalization of the max cut problem, admits an efficient implementation of a greedy local search algorithm~\cite{ma2017multiple}.
On the other hand, Ising machines do not perform well on the max k-cut problem as it involves a lot of one-hot constraints.
We expect that the post-processing approach using the greedy local search can be utilized to develop a high-performance Ising machine-based solver.
We will investigate such extensions in our future work.
}


\section{Conclusion}\label{sec:conclusion}

Toward more practical benchmarks of Ising machines,
we proposed a method to solve the QKP with Ising machines using the two-stage post-processing.
The repair and improvement procedures improve the solving performance of Ising machines synergistically.
From an empirical study using both simulation and an Ising machine, we demonstrated the effectiveness of the proposed method.
We found that the performance of the proposed method was much less dependent on a choice of the encoding methods of the inequality constraint.
Evaluation on large QKP instances showed that the Amplify Annealing Engine with the proposed post-processing achieved comparable performance with Gurobi and other heuristic methods tailored for the QKP, which is an important step toward practical utility of Ising machines.
Future work includes the extension of the proposed method to other optimization problems and establishing a reasonable benchmarking framework considering computational time required for Ising machines.

\bibliographystyle{IEEEtran}
\bibliography{ref}

\begin{IEEEbiography}[{\includegraphics[width=1in,height=1.25in,clip,keepaspectratio]{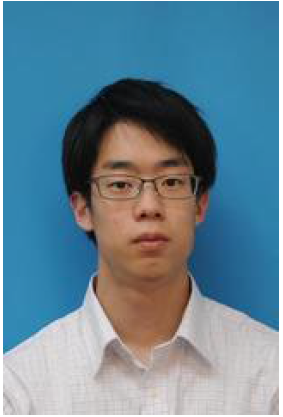}}]{Kentaro Ohno}
is a Ph.~D. student at Waseda University and works at NTT as a researcher.
He received the B.~Sci. and M.~Sci. degrees in mathematics from the University of Tokyo in 2017 and 2019, respectively. He is currently studying combinatorial optimization using Ising machines.
\end{IEEEbiography}

\begin{IEEEbiography}[{\includegraphics[width=1in,height=1.25in,clip,keepaspectratio]{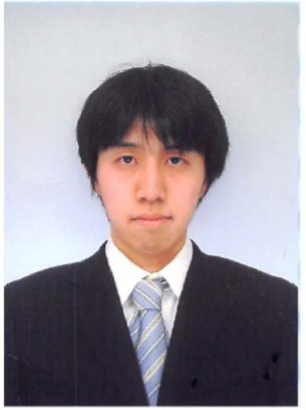}}]{Tatsuhiko Shirai}
received B.~Sci., M.~Sci., and Dr.~Sci. degrees from The University of Tokyo in 2011, 2013, and 2016, respectively. He is presently an assistant professor in the Waseda Institute for Advanced Study, Waseda University. His research interests are quantum dynamics, statistical mechanics, and computational science. He is a member of JPS.
\end{IEEEbiography}

\begin{IEEEbiography}[{\includegraphics[width=1in,height=1.25in,clip,keepaspectratio]{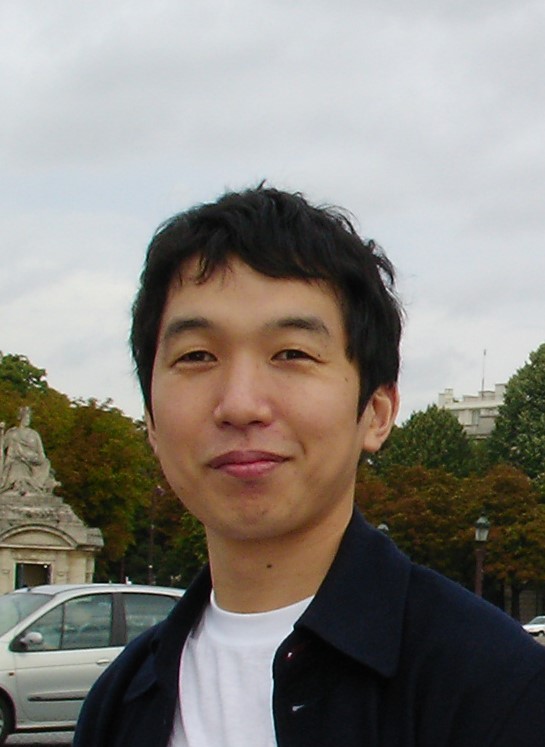}}]{Nozomu Togawa}
received the B.~Eng., M.~Eng., and Dr.~Eng. degrees from Waseda University in 1992, 1994, and 1997, respectively, all in electrical engineering. He is presently a professor in the Department of Computer Science and Communications Engineering, Waseda University. His research interests are quantum computation and integrated system design. He is a member of ACM, IEICE, and IPSJ.
\end{IEEEbiography}

\appendices

\section{Proof}\label{app:proof}

We provide a proof of Proposition~\ref{prop:qkp_property}.

\begin{proposition}
    For a QKP instance defined as Eq.~(\ref{eq:qkp}),
    an optimum is attained by a solution $x \in \{0,1\}^n$ satisfying $C-\max_i w_i < \sum_{i=1}^n w_i x_i \le C$.
\end{proposition}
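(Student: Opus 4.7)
The plan is to start with any optimal solution $x^*$ of the QKP and show that it can be modified, without decreasing the objective value, into a solution satisfying the stated weight bound. The main idea is that because all profits $p_i$ and $p_{ij}$ are non-negative, adding an item to the knapsack never decreases the objective $H$; hence the only reason not to add an item is infeasibility.

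First I would argue that if $\sum_{i=1}^n w_i x_i^* \le C - \max_i w_i$, then there must exist an index $j$ with $x_j^* = 0$. Indeed, otherwise $x^* = (1,\ldots,1)$, which would give $\sum_i w_i x_i^* = \sum_i w_i > C$ by the non-triviality assumption $C < \sum_i w_i$, contradicting the upper bound on the slack. Next, for such a $j$ we have
\begin{equation*}
\sum_{i=1}^n w_i x_i^* + w_j \;\le\; C - \max_i w_i + w_j \;\le\; C,
\end{equation*}
so the solution $x'$ obtained by flipping $x_j^*$ from $0$ to $1$ remains feasible. Its objective value is
\begin{equation*}
H(x') \;=\; H(x^*) + p_j + \sum_{i : x_i^*=1} p_{ij} \;\ge\; H(x^*),
\end{equation*}
so $x'$ is also optimal.

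Finally, I would iterate this item-insertion step. At each iteration the number of items in the knapsack strictly increases, so the process terminates after at most $n$ steps, yielding an optimal solution $\tilde x$ with $\sum_i w_i \tilde x_i > C - \max_i w_i$ while preserving feasibility $\sum_i w_i \tilde x_i \le C$. This produces the desired optimum.

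I do not expect a genuine obstacle; the only subtle point is handling the boundary case $x^* = (1,\ldots,1)$, which is precisely where the non-triviality hypothesis $C < \sum_i w_i$ is used. The non-negativity of all profits is essential for the insertion step to be monotone, and the positivity of the weights ensures the iteration terminates.
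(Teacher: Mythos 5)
Your proof is correct and follows essentially the same argument as the paper's: repeatedly insert an unused item (which exists by $C<\sum_i w_i$), note feasibility is preserved because the slack is at least $\max_i w_i \ge w_j$, and use non-negativity of the profits to keep optimality. The only difference is that you make the termination of the iteration explicit, which the paper leaves implicit.
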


\begin{proof}
Assume an optimal solution $x\in \{0,1\}^n$ satisfies $\sum_{i=1}^n w_i x_i \le C-\max_i w_i$.
We take another solution $\tilde x\in \{0,1\}^n$ obtained by 
changing the value of $x_j$ to $1$ for arbitrarily chosen $j$ such that $x_j=0$.
Note that such $j$ exists since we assume $C<\sum_i w_i$.
Note also that $\tilde x$ is a feasible solution since $\sum_{i=1}^n w_i \tilde x_i 
 = \sum_{i=1}^n w_i x_i + w_j \le  C-\max_i w_i + w_j \le C$.
Let $\phi(x)$ denote the objective value for $x$.
Since profits $p_{ij}, p_{i}$ are non-negative, we have $\phi(x) \le \phi(\tilde x)$.
Since $x$ is optimal, we get $\phi(x) = \phi(\tilde x)$ and thus $\tilde x$ is also optimal.
We replace $x$ with $\tilde x$ and repeat the same procedure, then we obtain an optimal solution satisfying $C-\max_i w_i < \sum_{i=1}^n w_i x_i \le C$.

\end{proof}

\section{Full Results of Experiments}

\begin{figure*}[t]
     \centering
     \subfloat[$n=100$\label{fig:sa_fillup_n100}]{
         \includegraphics[width=0.31\textwidth]{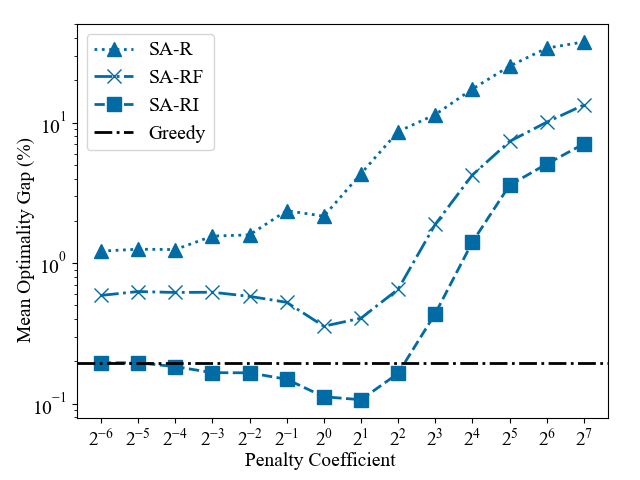}
     }
     \hfil
     \subfloat[$n=200$\label{fig:sa_fillup_n200}]{
         \includegraphics[width=0.31\textwidth]{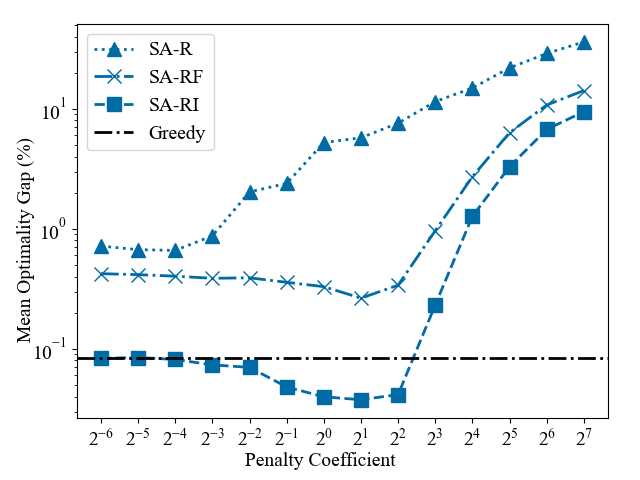}
     }
     \hfil
     \subfloat[$n=300$\label{fig:sa_fillup_3100}]{
         \includegraphics[width=0.31\textwidth]{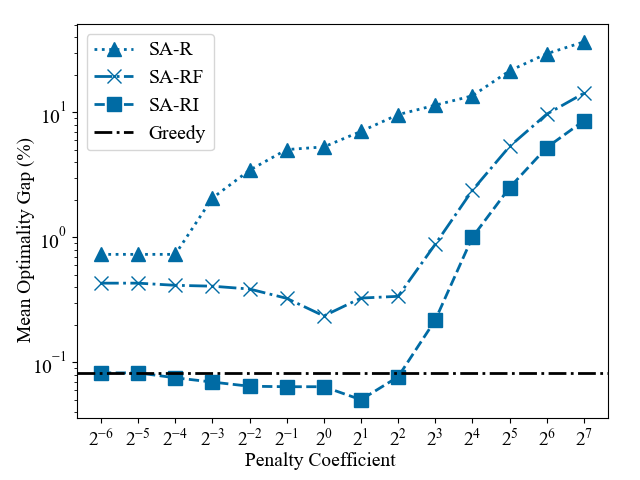}
     }
    \caption{
    Optimality gap for each problem size $n$ of QKP instances.
    Optimality gap for SA is plotted only for $\lambda$ producing feasible solutions on all instances.
    SA-RF denotes SA-R followed by fill-up operation, which produces locally optimal solutions.
    Fill-up operation improves solutions of SA-R particularly around $\lambda=2$, which is optimal penalty coefficient for SA-RI.    }
    \label{fig:sa_small_fillup_appendix}
\end{figure*}

\begin{figure*}[t]
     \centering
     \subfloat[Rate of Feasible Solutions\label{fig:ae_encoding_feasibility}]{
         \includegraphics[width=0.31\textwidth]{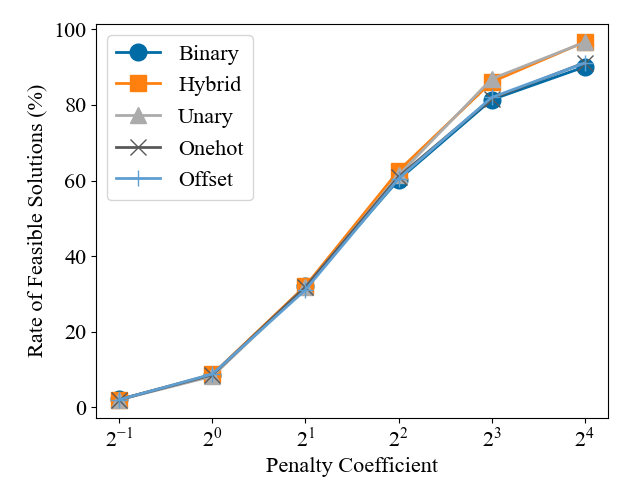}
     }
     \hfil
     \subfloat[Optimality Gap of AE\label{fig:ae_encoding_raw}]{
         \includegraphics[width=0.31\textwidth]{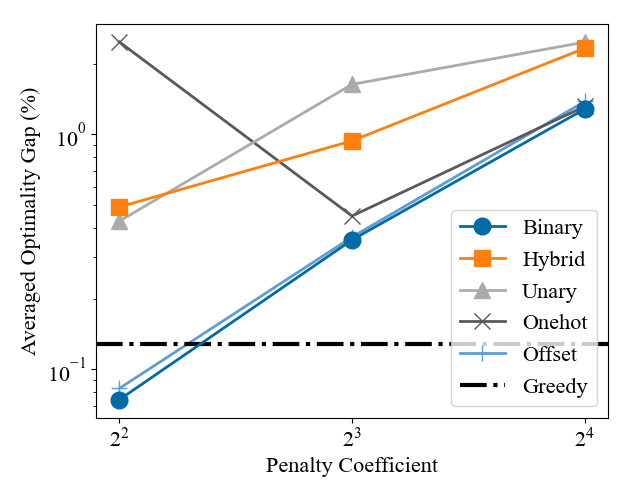}
     }
     \hfil
     \subfloat[Optimality Gap with Post-processing\label{fig:ae_encoding_score}]{
         \includegraphics[width=0.31\textwidth]{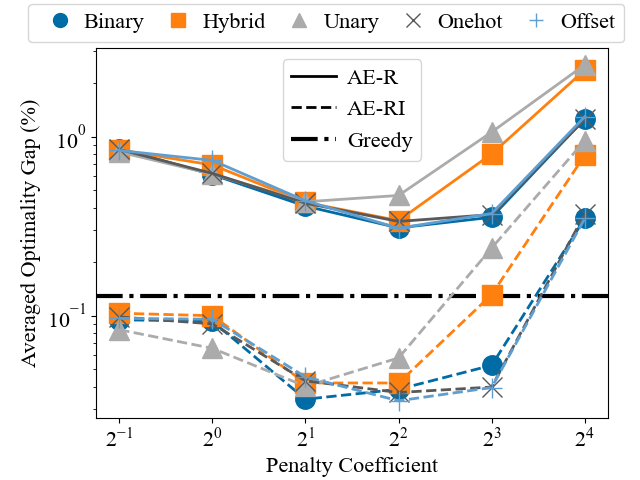}
     }
    \caption{\hl{
    Performance comparison among various encoding methods of inequality constraint on 100 medium-sized QKP instances using AE (cf. Fig.~\ref{fig:sa_compare_encoding}).
    (a)(b) Choice of encoding methods controls trade-off between rates of feasible solutions and objective values. Raw optimality gap is shown over only penalty coefficients attaining at least one feasible solution on more than 50 instances.
    (c) Solving performance of proposed method is much less dependent on choice of encoding methods.
    }}
    \label{fig:ae_compare_encoding}
\end{figure*}

\subsection{
\hl{Auxiliary Experiments}
}\label{app:aux_experiments}

\subsubsection{
\hl{Analysis of transition of best penalty coefficients}
}

We saw in Section~\ref{subsec:sa_small_result} in the main text that
$\lambda$ minimizing the optimality gap of SA-R and that of SA-RI can completely differ: $\lambda_{\text{SA-R}}$ for SA-R is near 0 and $\lambda_{\text{SA-RI}}$ for SA-RI is around 2 for all problem size $n$ in Fig.~\ref{fig:sa_small_plot}.
The result leads to apparently strange inconsistency that the outputs of SA-R around $\lambda=2$ can be improved to good solutions, but themselves are far from optimal.
We hypothesized that this is because SA-R outputs solutions distant from local optima particularly when $\lambda$ is around 2. 
To see this, we plot the optimality gap for SA-R followed by only the fill-up operation (Section~\ref{sec:proposed_method}), which we call SA-RF, in Fig.~\ref{fig:sa_small_fillup_appendix}.
The optimality gap of SA-RF attains its minimum around $\lambda=2$, which is similar to SA-RI, and the difference between SA-R and SA-RF is significantly large there.
Since the fill-up operation makes a solution locally optimal,
the result implies that the solutions obtained with SA-R are far from local optima around $\lambda=2$.
This finding contains an important suggestion on the use of Ising machines: by carefully tuning the penalty coefficient, we can obtain a solution that is itself not good but \emph{globally} (i.e., up to greedy local operations) near-optimal.
We also note that the gap between SA-R and SA-RF cannot be easily filled by emphasizing the local search phase in SA, e.g., by lowering the end temperature.
Indeed, 
we conducted additional experiments with the end temperature of SA lowered to 0.01, and got almost the same results.
This indicates that the inability to get locally optimal solutions cannot be easily resolved by tuning annealing schedules.
This is because the local operations on the QUBO problem do not correspond to those on the QKP, as described in Section~\ref{sec:preliminaries} in the main text.

\subsubsection{
\hl{Comparing encoding methods on an Ising machine}
}

\hl{
We conducted the same experiments using various encoding methods of the inequality constraints as in Section~\ref{subsec:compare_encoding} with AE.
The performance behavior over various penalty coefficients is shown in Fig.~\ref{fig:ae_compare_encoding}.
The penalty coefficient $\lambda$ is varied as $\lambda = 2^i, i\in \{-1,\cdots, 4\}$. 
As in the case of SA, we again observe that the unary and hybrid encodings achieve higher FS rate, but with the smaller gap.
We observe a slight difference in the behavior of AE-RI curves compared with those of SA-RI (Fig.~\ref{fig:sa_compare_encoding}) in that the unary and hybrid encodings obtain larger optimality gap than the other encoding methods for large penalty coefficients.
Since the specification of AE is undisclosed, we could not investigate the cause of this phenomenon further.
In any case, all encoding methods achieve similar optimality gap at the best parameter coefficient.
We further evaluate the accuracy for each method by the number of instances optimally solved and optimality gap, using the best parameter tuned based on the rescaling Eq.~(\ref{eq:penalty_scale}).
The results are shown in Table~\ref{tab:compare_optimal_count_ae} and Table~\ref{tab:compare_optimal_gap_ae}.
Again, we observe there is almost no performance gap among the encoding methods. 
Overall, the results show the robustness of the proposed method also for the real Ising machine.
}

\begin{table}[t]
  \caption{\hl{
  Number of Instances Optimally Solved Using AE-RI.
  }}
  \label{tab:compare_optimal_count_ae}
  \centering
  \begin{tabular}{c|ccccc}
    \hline
    $n\_d$ &  Binary  &  Hybrid  &  Unary  &  One-hot  &  Offset \\
    \hline \hline
    100\_25  & 10 & 10 & 10 & 10 & 10 \\
    100\_50  & 10 & 10 & 10 & 10 & 10 \\
    100\_75  & 10 & 10 & 10 & 10 & 10 \\
    100\_100  & 10 & 10 & 10 & 10 & 10 \\
    200\_25  & 10 & 9 & 10 & 9 & 9 \\
    200\_50  & 10 & 10 & 9 & 10 & 10 \\
    200\_75  & 10 & 10 & 10 & 9 & 10 \\
    200\_100  & 10 & 10 & 10 & 10 & 10 \\
    300\_25  & 10 & 10 & 10 & 10 & 10 \\
    300\_50  & 10 & 10 & 10 & 10 & 10 \\
    \hline
    Total    & \textbf{100} & 99 & 99 & 98 & 99 \\
    \hline
  \end{tabular}
\end{table}

\begin{table}[t]
  \caption{\hl{Averaged Optimality Gap ($\times 0.01$ \%) Solved Using AE-RI.}}
  \label{tab:compare_optimal_gap_ae}
  \centering
  \begin{tabular}{c|rrrrr}
    \hline
    $n\_d$  &  Binary  &  Hybrid  &  Unary  &  One-hot  &  Offset \\
    \hline \hline
    100\_25  & 0.000 & 0.000 & 0.000 & 0.000 & 0.000 \\
    100\_50  & 0.000 & 0.000 & 0.000 & 0.000 & 0.000 \\
    100\_75  & 0.000 & 0.000 & 0.000 & 0.000 & 0.000 \\
    100\_100  & 0.000 & 0.000 & 0.000 & 0.000 & 0.000 \\
    200\_25  & 0.000 & 0.338 & 0.000 & 0.338 & 3.390 \\
    200\_50  & 0.000 & 0.000 & 0.095 & 0.000 & 0.000 \\
    200\_75  & 0.000 & 0.000 & 0.000 & 0.072 & 0.000 \\
    200\_100  & 0.000 & 0.000 & 0.000 & 0.000 & 0.000 \\
    300\_25  & 0.000 & 0.000 & 0.000 & 0.000 & 0.000 \\
    300\_50  & 0.000 & 0.000 & 0.000 & 0.000 & 0.000 \\
    \hline
    Mean  & \textbf{0.000} & 0.034 & 0.009 & 0.041 & 0.339 \\
    \hline
  \end{tabular}
\end{table}

\clearpage

\subsection{Full Results on Simulated Annealing}\label{app:sa_full_results}

Full results of the simulation experiments on medium-sized instances conducted in Section~\ref{sec:simulation} are shown in Table~\ref{tab:full_sa_1}.
In addition to the best objective value over 10 solutions obtained, we show a success rate, that is, the rate of the number of times to hit the optimum out of 10 executions, and the mean objective value.
$\lambda$ denotes the optimal penalty coefficient for each method based on a lexicographic order for tuples of the best objective value, success rate, and mean objective value (e.g., if two values of $\lambda$ have the same best objective, then their success rates are compared).
If multiple values of $\lambda$ obtain the same values for all metrics, then a smaller one is reported.
`FS' stands for the rate of feasible solutions over 10 outputs of SA.
`Mean' and `Best' denote the mean and best objective values over 10 solutions for the optimal $\lambda$, respectively.
`Gap' is the optimality gap computed by the best objective value and known optimal value.
`SR' stands for a success rate.
SR takes positive values only when Gap attains zero.
Note that for instances that can be optimally solved by the greedy method, the optimal $\lambda$ for SA-RI takes the minimal value $2^{-6}$ among those tested.
This is because an output of SA-RI under $\lambda\to 0$ coincides with that of the greedy algorithm as explained in Section~\ref{sec:proposed_method}.

The full results of the performance comparison of SA-RI over various encoding methods conducted in Section~\ref{subsec:compare_encoding} are shown in Table~\ref{tab:full_sa_compare_enc_1}.
Beyond the similarity of averaged performance in the main text, we also see instance-wise similarity:
if an instance cannot be optimally solved by some encoding method, there tends to be another encoding that cannot reach optimality on that instance.
We also observe that the hybrid encoding has a relatively large optimality gap on instance 100\_25\_3, and so do the unary and offset encodings on instance 100\_100\_7.
This is because the optima on these instances are small compared to other instances.
Such a large optimality gap on one instance has a large effect on the mean values in Table~\ref{tab:compare_optimal_gap_sa}.
Due to this instability of the optimality gap as a performance metric, it might be hard to conclude the best encoding method for the proposed method.

We explain some details on the estimation of the optimal penalty coefficient $\lambda_{\text{SA-RI}}$ conducted in Section~\ref{subsec:optimal_penalty}.
We excluded instances that can be optimally solved by the greedy method,
since the optimal values of $\lambda_{\text{SA-RI}}$ on those instances are trivially $0$ and so considered as the outlier value for the analysis.
We used \texttt{LinearRegression} module in the scikit-learn library of version 1.2.2 for the regression.

To get insight into the behavior of the optimal penalty coefficient $\lambda$ for SA-RI, we plot the optimality gap for each instance on all $\lambda$ as a heat map in Fig.~\ref{fig:heatmap_small_sa}.
Here we slightly modify the performance metric as 
\begin{align*}
    &\text{Aggregated \ Optimality \ Gap} = \\
    &\frac{S_\mathrm{best} - S + (1-\mathrm{Success \ Rate})}{S_\mathrm{best}} \times 100 \ (\%),
\end{align*}
where $S$ is 
the best objective value obtained by SA-RI, Success Rate is the rate of hitting the optimum ($\le 1$), and $S_\mathrm{best}$ is the known optimum.
The aggregated optimality gap takes zero if and only if all solutions obtained are optimal.
Note also that $\lambda$ minimizing the aggregated optimality gap corresponds (if it is unique) to $\lambda$ shown in Table~\ref{tab:full_sa_1} since $S$ is an integer and Success Rate is less than 1 unless $S=S_\mathrm{best}$.
In Fig.~\ref{fig:heatmap_small_sa}, good penalty coefficients correspond to an area of dark colors.
We observe that the tested range of $\lambda$ is sufficiently wide to cover good penalty coefficients for each instance.
We see a trend that the area slightly moves to the right as $d$ gets large, which agrees with the analysis result in Section~\ref{subsec:optimal_penalty}.

\begin{landscape}

\begin{table}[t]
  \caption{Full Results of Simulated Annealing.}
    \label{tab:full_sa_1}
  \centering
\resizebox{\linewidth}{!}{
  \begin{tabular}{@{\hspace{1pt}}c@{\hspace{4pt}}r@{\hspace{4pt}}|
  r@{\hspace{6pt}}r
  @{\hspace{6pt}}r@{\hspace{6pt}} 
  r@{\hspace{6pt}}r@{\hspace{6pt}}r@{\hspace{6pt}}r@{\hspace{6pt}}r@{\hspace{6pt}}r 
  @{\hspace{6pt}}r@{\hspace{6pt}} 
  r@{\hspace{6pt}}r@{\hspace{6pt}}r@{\hspace{6pt}}r@{\hspace{6pt}}r
  @{\hspace{6pt}}r@{\hspace{6pt}} 
  r@{\hspace{6pt}}r@{\hspace{6pt}}r@{\hspace{6pt}}r@{\hspace{6pt}}r
  @{\hspace{6pt}}r@{\hspace{6pt}}
  r@{\hspace{6pt}}r@{\hspace{6pt}}r@{\hspace{6pt}}r@{\hspace{6pt}}r
  @{\hspace{3pt}}}
    \hline
    \multicolumn{2}{c|}{Instance} & 
    \multicolumn{2}{c}{Greedy} &&
    \multicolumn{6}{c}{SA} && 
    \multicolumn{5}{c}{SA-R} && 
    \multicolumn{5}{c}{SA-I} && 
    \multicolumn{5}{c}{SA-RI} \\
    \cline{6-11}
    \cline{13-17}
    \cline{19-23}
    \cline{25-29}
    $n$\_$d$\_id & Optimal & Score & Gap &  & $\lambda$ & FS & Mean & Best & Gap & SR &  & $\lambda$ & Mean & Best & Gap & SR &  & $\lambda$ & Mean & Best & Gap & SR &  & $\lambda$ & Mean & Best & Gap & SR \\
    \hline \hline
  100\_25\_1 &  18558 &  18511 & 0.253 &  &  $2^{2}$ & 0.5 &  17063.40 &  17456 &  5.938 & 0.0 &  & $2^{-1}$ &  18250.90 &  18514 &  0.237 & 0.0 &  &  $2^{2}$ &  18137.40 &  18325 &  1.256 & 0.0 &  & $2^{-3}$ &  18558.00 &  18558 &  0.000 & 1.0 \\
  100\_25\_2 &  56525 &  56525 & 0.000 &  &  $2^{0}$ & 1.0 &  41780.10 &  48803 & 13.661 & 0.0 &  & $2^{-6}$ &  55578.00 &  55578 &  1.675 & 0.0 &  & $2^{-1}$ &  56491.10 &  56525 &  0.000 & 0.7 &  & $2^{-6}$ &  56525.00 &  56525 &  0.000 & 1.0 \\
  100\_25\_3 &   3752 &   3702 & 1.333 &  &  $2^{3}$ & 0.4 &   3263.00 &   3646 &  2.825 & 0.0 &  &  $2^{3}$ &   3072.10 &   3646 &  2.825 & 0.0 &  &  $2^{3}$ &   3667.50 &   3752 &  0.000 & 0.1 &  &  $2^{3}$ &   3635.40 &   3752 &  0.000 & 0.1 \\
  100\_25\_4 &  50382 &  50382 & 0.000 &  & $2^{-1}$ & 0.7 &  39337.29 &  44631 & 11.415 & 0.0 &  & $2^{-6}$ &  50382.00 &  50382 &  0.000 & 1.0 &  & $2^{-1}$ &  50356.14 &  50382 &  0.000 & 0.2 &  & $2^{-6}$ &  50382.00 &  50382 &  0.000 & 1.0 \\
  100\_25\_5 &  61494 &  61494 & 0.000 &  & $2^{-3}$ & 1.0 &  50121.40 &  53275 & 13.366 & 0.0 &  & $2^{-6}$ &  60983.00 &  60983 &  0.831 & 0.0 &  & $2^{-2}$ &  61465.50 &  61494 &  0.000 & 0.9 &  & $2^{-6}$ &  61494.00 &  61494 &  0.000 & 1.0 \\
  100\_25\_6 &  36360 &  36189 & 0.470 &  &  $2^{2}$ & 0.7 &  32881.29 &  34982 &  3.790 & 0.0 &  & $2^{-1}$ &  36131.50 &  36360 &  0.000 & 0.1 &  &  $2^{2}$ &  35939.43 &  36198 &  0.446 & 0.0 &  &  $2^{0}$ &  36206.90 &  36360 &  0.000 & 0.1 \\
  100\_25\_7 &  14657 &  14553 & 0.710 &  &  $2^{2}$ & 0.3 &  14082.00 &  14217 &  3.002 & 0.0 &  &  $2^{0}$ &  14312.40 &  14555 &  0.696 & 0.0 &  &  $2^{2}$ &  14338.00 &  14456 &  1.371 & 0.0 &  & $2^{-3}$ &  14657.00 &  14657 &  0.000 & 1.0 \\
  100\_25\_8 &  20452 &  20307 & 0.709 &  &  $2^{2}$ & 0.3 &  18370.67 &  19544 &  4.440 & 0.0 &  &  $2^{0}$ &  19914.40 &  20106 &  1.692 & 0.0 &  &  $2^{2}$ &  20066.67 &  20337 &  0.562 & 0.0 &  &  $2^{1}$ &  20243.30 &  20355 &  0.474 & 0.0 \\
  100\_25\_9 &  35438 &  35365 & 0.206 &  &  $2^{1}$ & 0.6 &  30686.83 &  32783 &  7.492 & 0.0 &  & $2^{-2}$ &  34871.70 &  35438 &  0.000 & 0.2 &  &  $2^{1}$ &  35311.83 &  35357 &  0.229 & 0.0 &  & $2^{-2}$ &  35382.50 &  35438 &  0.000 & 0.4 \\
 100\_25\_10 &  24930 &  24926 & 0.016 &  &  $2^{1}$ & 0.3 &  24656.33 &  24784 &  0.586 & 0.0 &  & $2^{-1}$ &  24763.00 &  24926 &  0.016 & 0.0 &  &  $2^{1}$ &  24854.00 &  24926 &  0.016 & 0.0 &  & $2^{-1}$ &  24901.10 &  24930 &  0.000 & 0.1 \\
  100\_50\_1 &  83742 &  83712 & 0.036 &  &  $2^{2}$ & 0.7 &  68493.57 &  81874 &  2.231 & 0.0 &  &  $2^{0}$ &  76492.50 &  83168 &  0.685 & 0.0 &  &  $2^{2}$ &  83555.43 &  83742 &  0.000 & 0.4 &  &  $2^{2}$ &  83611.40 &  83742 &  0.000 & 0.7 \\
  100\_50\_2 & 104856 & 104770 & 0.082 &  &  $2^{0}$ & 0.8 &  83383.25 &  92819 & 11.480 & 0.0 &  & $2^{-1}$ & 102182.20 & 104766 &  0.086 & 0.0 &  &  $2^{3}$ & 104567.60 & 104856 &  0.000 & 0.2 &  &  $2^{3}$ & 104567.60 & 104856 &  0.000 & 0.2 \\
  100\_50\_3 &  34006 &  33902 & 0.306 &  &  $2^{3}$ & 0.1 &  31130.00 &  31130 &  8.457 & 0.0 &  &  $2^{1}$ &  33462.00 &  33775 &  0.679 & 0.0 &  &  $2^{3}$ &  32738.00 &  32738 &  3.729 & 0.0 &  &  $2^{1}$ &  33936.20 &  34006 &  0.000 & 0.4 \\
  100\_50\_4 & 105996 & 105996 & 0.000 &  &  $2^{0}$ & 0.9 &  82037.33 &  94211 & 11.118 & 0.0 &  & $2^{-6}$ & 105876.00 & 105876 &  0.113 & 0.0 &  &  $2^{1}$ & 105966.30 & 105996 &  0.000 & 0.8 &  & $2^{-6}$ & 105996.00 & 105996 &  0.000 & 1.0 \\
  100\_50\_5 &  56464 &  56464 & 0.000 &  &  $2^{2}$ & 0.1 &  56227.00 &  56227 &  0.420 & 0.0 &  &  $2^{1}$ &  55846.90 &  56388 &  0.135 & 0.0 &  &  $2^{2}$ &  56398.00 &  56398 &  0.117 & 0.0 &  & $2^{-6}$ &  56464.00 &  56464 &  0.000 & 1.0 \\
  100\_50\_6 &  16083 &  16083 & 0.000 &  &  $2^{3}$ & 0.1 &  15091.00 &  15091 &  6.168 & 0.0 &  & $2^{-6}$ &  16080.00 &  16080 &  0.019 & 0.0 &  &  $2^{3}$ &  16083.00 &  16083 &  0.000 & 0.1 &  & $2^{-6}$ &  16083.00 &  16083 &  0.000 & 1.0 \\
  100\_50\_7 &  52819 &  52784 & 0.066 &  &  $2^{3}$ & 0.3 &  44920.33 &  47827 &  9.451 & 0.0 &  & $2^{-6}$ &  52646.00 &  52646 &  0.328 & 0.0 &  &  $2^{3}$ &  49795.67 &  50240 &  4.883 & 0.0 &  &  $2^{1}$ &  52687.30 &  52819 &  0.000 & 0.3 \\
  100\_50\_8 &  54246 &  54030 & 0.398 &  &  $2^{2}$ & 0.1 &  52562.00 &  52562 &  3.104 & 0.0 &  &  $2^{0}$ &  53946.30 &  54176 &  0.129 & 0.0 &  &  $2^{2}$ &  53408.00 &  53408 &  1.545 & 0.0 &  &  $2^{0}$ &  54166.00 &  54246 &  0.000 & 0.3 \\
  100\_50\_9 &  68974 &  68974 & 0.000 &  &  $2^{1}$ & 0.2 &  62401.00 &  68974 &  0.000 & 0.1 &  & $2^{-6}$ &  68974.00 &  68974 &  0.000 & 1.0 &  &  $2^{1}$ &  68974.00 &  68974 &  0.000 & 0.2 &  & $2^{-6}$ &  68974.00 &  68974 &  0.000 & 1.0 \\
 100\_50\_10 &  88634 &  88527 & 0.121 &  &  $2^{3}$ & 1.0 &  63369.90 &  78751 & 11.150 & 0.0 &  &  $2^{0}$ &  68151.10 &  88146 &  0.551 & 0.0 &  &  $2^{3}$ &  88505.20 &  88634 &  0.000 & 0.1 &  &  $2^{3}$ &  88505.20 &  88634 &  0.000 & 0.1 \\
  100\_75\_1 & 189137 & 189137 & 0.000 &  & $2^{-1}$ & 1.0 & 141711.30 & 163058 & 13.788 & 0.0 &  & $2^{-6}$ & 189137.00 & 189137 &  0.000 & 1.0 &  & $2^{-1}$ & 189137.00 & 189137 &  0.000 & 1.0 &  & $2^{-6}$ & 189137.00 & 189137 &  0.000 & 1.0 \\
  100\_75\_2 &  95074 &  94980 & 0.099 &  &  $2^{3}$ & 0.6 &  77125.33 &  90450 &  4.864 & 0.0 &  &  $2^{0}$ &  94276.70 &  94822 &  0.265 & 0.0 &  &  $2^{2}$ &  94980.00 &  94980 &  0.099 & 0.0 &  &  $2^{2}$ &  94956.70 &  95074 &  0.000 & 0.1 \\
  100\_75\_3 &  62098 &  62098 & 0.000 &  &  $2^{4}$ & 0.3 &  52999.67 &  53802 & 13.360 & 0.0 &  &  $2^{1}$ &  62061.60 &  62098 &  0.000 & 0.3 &  &  $2^{4}$ &  55379.33 &  56115 &  9.635 & 0.0 &  & $2^{-6}$ &  62098.00 &  62098 &  0.000 & 1.0 \\
  100\_75\_4 &  72245 &  72167 & 0.108 &  &  $2^{3}$ & 0.3 &  68158.00 &  69522 &  3.769 & 0.0 &  &  $2^{0}$ &  71302.90 &  71995 &  0.346 & 0.0 &  &  $2^{3}$ &  70151.33 &  70843 &  1.941 & 0.0 &  &  $2^{2}$ &  72029.40 &  72245 &  0.000 & 0.3 \\
  100\_75\_5 &  27616 &  27616 & 0.000 &  &  $2^{3}$ & 0.1 &  27543.00 &  27543 &  0.264 & 0.0 &  &  $2^{1}$ &  27354.40 &  27557 &  0.214 & 0.0 &  &  $2^{3}$ &  27616.00 &  27616 &  0.000 & 0.1 &  & $2^{-6}$ &  27616.00 &  27616 &  0.000 & 1.0 \\
  100\_75\_6 & 145273 & 145224 & 0.034 &  &  $2^{2}$ & 0.9 & 116283.78 & 144022 &  0.861 & 0.0 &  &  $2^{0}$ & 131337.00 & 144746 &  0.363 & 0.0 &  &  $2^{3}$ & 143779.20 & 145273 &  0.000 & 0.1 &  &  $2^{3}$ & 143779.20 & 145273 &  0.000 & 0.1 \\
  100\_75\_7 & 110979 & 110451 & 0.476 &  &  $2^{3}$ & 0.8 &  88068.50 & 106227 &  4.282 & 0.0 &  &  $2^{0}$ & 110477.10 & 110718 &  0.235 & 0.0 &  &  $2^{3}$ & 110283.62 & 110937 &  0.038 & 0.0 &  &  $2^{1}$ & 110910.70 & 110979 &  0.000 & 0.5 \\
  100\_75\_8 &  19570 &  19570 & 0.000 &  &  $2^{2}$ & 0.2 &  19547.50 &  19570 &  0.000 & 0.1 &  &  $2^{2}$ &  18693.60 &  19570 &  0.000 & 0.1 &  &  $2^{2}$ &  19570.00 &  19570 &  0.000 & 0.2 &  & $2^{-6}$ &  19570.00 &  19570 &  0.000 & 1.0 \\
  100\_75\_9 & 104341 & 103916 & 0.407 &  &  $2^{3}$ & 0.8 &  85243.88 & 100546 &  3.637 & 0.0 &  &  $2^{0}$ & 103036.60 & 103655 &  0.657 & 0.0 &  &  $2^{3}$ & 102979.00 & 104148 &  0.185 & 0.0 &  &  $2^{2}$ & 103934.40 & 104253 &  0.084 & 0.0 \\
 100\_75\_10 & 143740 & 143695 & 0.031 &  &  $2^{1}$ & 0.9 & 108421.11 & 124507 & 13.380 & 0.0 &  &  $2^{0}$ & 119219.30 & 143740 &  0.000 & 0.1 &  &  $2^{1}$ & 143721.56 & 143740 &  0.000 & 0.7 &  &  $2^{1}$ & 143600.40 & 143740 &  0.000 & 0.7 \\
 100\_100\_1 &  81978 &  81961 & 0.021 &  &  $2^{4}$ & 0.1 &  75824.00 &  75824 &  7.507 & 0.0 &  &  $2^{2}$ &  81867.40 &  81961 &  0.021 & 0.0 &  &  $2^{5}$ &  75186.80 &  80348 &  1.988 & 0.0 &  &  $2^{3}$ &  81803.60 &  81978 &  0.000 & 0.1 \\
 100\_100\_2 & 190424 & 189998 & 0.224 &  &  $2^{1}$ & 0.7 & 148425.86 & 167736 & 11.914 & 0.0 &  &  $2^{0}$ & 167000.80 & 190338 &  0.045 & 0.0 &  &  $2^{0}$ & 190170.40 & 190424 &  0.000 & 0.1 &  &  $2^{0}$ & 190230.50 & 190424 &  0.000 & 0.1 \\
 100\_100\_3 & 225434 & 225434 & 0.000 &  &  $2^{1}$ & 1.0 & 167360.40 & 201873 & 10.451 & 0.0 &  & $2^{-1}$ & 211582.00 & 225434 &  0.000 & 0.4 &  &  $2^{1}$ & 225434.00 & 225434 &  0.000 & 1.0 &  & $2^{-6}$ & 225434.00 & 225434 &  0.000 & 1.0 \\
 100\_100\_4 &  63028 &  63028 & 0.000 &  &  $2^{4}$ & 0.1 &  54077.00 &  54077 & 14.202 & 0.0 &  & $2^{-6}$ &  63028.00 &  63028 &  0.000 & 1.0 &  &  $2^{5}$ &  53484.75 &  58212 &  7.641 & 0.0 &  & $2^{-6}$ &  63028.00 &  63028 &  0.000 & 1.0 \\
 100\_100\_5 & 230076 & 230076 & 0.000 &  &  $2^{0}$ & 0.6 & 166587.67 & 184411 & 19.848 & 0.0 &  & $2^{-6}$ & 229660.00 & 229660 &  0.181 & 0.0 &  &  $2^{2}$ & 230070.20 & 230076 &  0.000 & 0.9 &  & $2^{-6}$ & 230076.00 & 230076 &  0.000 & 1.0 \\
 100\_100\_6 &  74358 &  74358 & 0.000 &  &  $2^{4}$ & 0.1 &  68717.00 &  68717 &  7.586 & 0.0 &  &  $2^{1}$ &  74035.30 &  74103 &  0.343 & 0.0 &  &  $2^{3}$ &  74358.00 &  74358 &  0.000 & 0.1 &  & $2^{-6}$ &  74358.00 &  74358 &  0.000 & 1.0 \\
 100\_100\_7 &  10330 &  10184 & 1.413 &  &  $2^{5}$ & 0.2 &   8249.50 &   9664 &  6.447 & 0.0 &  & $2^{-6}$ &  10108.00 &  10108 &  2.149 & 0.0 &  &  $2^{6}$ &   8256.86 &   9731 &  5.799 & 0.0 &  &  $2^{5}$ &   9717.40 &  10330 &  0.000 & 0.1 \\
 100\_100\_8 &  62582 &  62422 & 0.256 &  &  $2^{3}$ & 0.2 &  62427.00 &  62457 &  0.200 & 0.0 &  &  $2^{1}$ &  62274.30 &  62484 &  0.157 & 0.0 &  &  $2^{3}$ &  62582.00 &  62582 &  0.000 & 0.2 &  &  $2^{3}$ &  62509.90 &  62582 &  0.000 & 0.6 \\
 100\_100\_9 & 232754 & 232693 & 0.026 &  &  $2^{1}$ & 1.0 & 164711.60 & 208886 & 10.255 & 0.0 &  & $2^{-6}$ & 231402.00 & 231402 &  0.581 & 0.0 &  &  $2^{6}$ & 232741.80 & 232754 &  0.000 & 0.8 &  &  $2^{6}$ & 232741.80 & 232754 &  0.000 & 0.8 \\
100\_100\_10 & 193262 & 193218 & 0.023 &  &  $2^{2}$ & 0.8 & 138825.50 & 166966 & 13.606 & 0.0 &  &  $2^{1}$ & 142628.90 & 192352 &  0.471 & 0.0 &  &  $2^{4}$ & 193174.10 & 193262 &  0.000 & 0.5 &  &  $2^{4}$ & 193174.10 & 193262 &  0.000 & 0.5 \\
  200\_25\_1 & 204441 & 204399 & 0.021 &  & $2^{-1}$ & 1.0 & 156096.80 & 180652 & 11.636 & 0.0 &  & $2^{-6}$ & 203061.00 & 203061 &  0.675 & 0.0 &  & $2^{-1}$ & 204397.00 & 204441 &  0.000 & 0.2 &  & $2^{-1}$ & 204402.20 & 204441 &  0.000 & 0.2 \\
  200\_25\_2 & 239573 & 239573 & 0.000 &  & $2^{-3}$ & 0.7 & 184058.14 & 199793 & 16.605 & 0.0 &  & $2^{-6}$ & 238734.00 & 238734 &  0.350 & 0.0 &  & $2^{-2}$ & 239564.00 & 239573 &  0.000 & 0.4 &  & $2^{-6}$ & 239573.00 & 239573 &  0.000 & 1.0 \\
  200\_25\_3 & 245463 & 244446 & 0.414 &  & $2^{-3}$ & 1.0 & 192841.00 & 223037 &  9.136 & 0.0 &  & $2^{-6}$ & 243967.00 & 243967 &  0.609 & 0.0 &  &  $2^{4}$ & 244828.60 & 245463 &  0.000 & 0.2 &  &  $2^{4}$ & 244828.60 & 245463 &  0.000 & 0.2 \\
  200\_25\_4 & 222361 & 221591 & 0.346 &  & $2^{-1}$ & 1.0 & 160720.70 & 198848 & 10.574 & 0.0 &  & $2^{-6}$ & 221054.00 & 221054 &  0.588 & 0.0 &  & $2^{-2}$ & 222314.20 & 222361 &  0.000 & 0.8 &  & $2^{-2}$ & 222314.20 & 222361 &  0.000 & 0.8 \\
  200\_25\_5 & 187324 & 187315 & 4.8E-3 &  &  $2^{1}$ & 1.0 & 145902.90 & 173607 &  7.323 & 0.0 &  & $2^{-2}$ & 168024.20 & 186861 &  0.247 & 0.0 &  &  $2^{0}$ & 187256.80 & 187324 &  0.000 & 0.4 &  &  $2^{0}$ & 187256.80 & 187324 &  0.000 & 0.4 \\
  200\_25\_6 &  80351 &  80276 & 0.093 &  &  $2^{2}$ & 0.1 &  75841.00 &  75841 &  5.613 & 0.0 &  & $2^{-5}$ &  80229.00 &  80229 &  0.152 & 0.0 &  &  $2^{2}$ &  78282.00 &  78282 &  2.575 & 0.0 &  & $2^{-2}$ &  80279.40 &  80310 &  0.051 & 0.0 \\
  200\_25\_7 &  59036 &  58858 & 0.302 &  &  $2^{3}$ & 0.2 &  52897.50 &  53754 &  8.947 & 0.0 &  &  $2^{0}$ &  58597.30 &  59002 &  0.058 & 0.0 &  &  $2^{3}$ &  55971.00 &  56795 &  3.796 & 0.0 &  &  $2^{1}$ &  58938.20 &  59036 &  0.000 & 0.5 \\
  200\_25\_8 & 149433 & 149125 & 0.206 &  &  $2^{3}$ & 1.0 & 128227.00 & 141491 &  5.315 & 0.0 &  &  $2^{0}$ & 143627.50 & 148952 &  0.322 & 0.0 &  &  $2^{1}$ & 149319.50 & 149384 &  0.033 & 0.0 &  & $2^{-1}$ & 149208.20 & 149433 &  0.000 & 0.2 \\
  200\_25\_9 &  49366 &  49366 & 0.000 &  &  $2^{3}$ & 0.3 &  44759.33 &  45303 &  8.230 & 0.0 &  & $2^{-6}$ &  49293.00 &  49293 &  0.148 & 0.0 &  &  $2^{3}$ &  47749.00 &  47984 &  2.799 & 0.0 &  & $2^{-6}$ &  49366.00 &  49366 &  0.000 & 1.0 \\
 200\_25\_10 &  48459 &  48291 & 0.347 &  &  $2^{3}$ & 0.3 &  41439.67 &  43291 & 10.665 & 0.0 &  &  $2^{0}$ &  48267.20 &  48412 &  0.097 & 0.0 &  &  $2^{3}$ &  45235.00 &  45870 &  5.343 & 0.0 &  &  $2^{0}$ &  48358.20 &  48459 &  0.000 & 0.4 \\
     \hline
  \end{tabular}
}
\end{table}
\end{landscape}
\addtocounter{table}{-1}
\begin{landscape}
\begin{table}[t]
  \caption{Full Results of Simulated Annealing (Continued).}
  \label{tab:full_sa_2}
  \centering
\resizebox{\linewidth}{!}{
  \begin{tabular}{@{\hspace{1pt}}c@{\hspace{4pt}}r@{\hspace{4pt}}|
  r@{\hspace{6pt}}r
  @{\hspace{6pt}}r@{\hspace{6pt}} 
  r@{\hspace{6pt}}r@{\hspace{6pt}}r@{\hspace{6pt}}r@{\hspace{6pt}}r@{\hspace{6pt}}r 
  @{\hspace{6pt}}r@{\hspace{6pt}} 
  r@{\hspace{6pt}}r@{\hspace{6pt}}r@{\hspace{6pt}}r@{\hspace{6pt}}r
  @{\hspace{6pt}}r@{\hspace{6pt}} 
  r@{\hspace{6pt}}r@{\hspace{6pt}}r@{\hspace{6pt}}r@{\hspace{6pt}}r
  @{\hspace{6pt}}r@{\hspace{6pt}}
  r@{\hspace{6pt}}r@{\hspace{6pt}}r@{\hspace{6pt}}r@{\hspace{6pt}}r
  @{\hspace{3pt}}}
    \hline
    \multicolumn{2}{c|}{Instance} & 
    \multicolumn{2}{c}{Greedy} &&
    \multicolumn{6}{c}{SA} && 
    \multicolumn{5}{c}{SA-R} && 
    \multicolumn{5}{c}{SA-I} && 
    \multicolumn{5}{c}{SA-RI} \\
    \cline{6-11}
    \cline{13-17}
    \cline{19-23}
    \cline{25-29}
    $n$\_$d$\_id & Optimal & Score & Gap &  & $\lambda$ & FS & Mean & Best & Gap & SR &  & $\lambda$ & Mean & Best & Gap & SR &  & $\lambda$ & Mean & Best & Gap & SR &  & $\lambda$ & Mean & Best & Gap & SR \\
    \hline \hline
  200\_50\_1 & 372097 & 372097 & 0.000 &  &  $2^{0}$ & 0.9 & 294501.33 & 326397 & 12.282 & 0.0 &  & $2^{-6}$ & 372097.00 & 372097 &  0.000 & 1.0 &  &  $2^{1}$ & 372097.00 & 372097 &  0.000 & 0.9 &  & $2^{-6}$ & 372097.00 & 372097 &  0.000 & 1.0 \\
  200\_50\_2 & 211130 & 210485 & 0.305 &  &  $2^{4}$ & 0.3 & 176418.67 & 193428 &  8.384 & 0.0 &  &  $2^{1}$ & 209220.60 & 210624 &  0.240 & 0.0 &  &  $2^{4}$ & 202804.00 & 205228 &  2.795 & 0.0 &  &  $2^{1}$ & 210838.30 & 211090 &  0.019 & 0.0 \\
  200\_50\_3 & 227185 & 227185 & 0.000 &  &  $2^{4}$ & 0.3 & 205635.67 & 214039 &  5.786 & 0.0 &  &  $2^{1}$ & 226135.00 & 226428 &  0.333 & 0.0 &  &  $2^{3}$ & 225172.00 & 225172 &  0.886 & 0.0 &  & $2^{-6}$ & 227185.00 & 227185 &  0.000 & 1.0 \\
  200\_50\_4 & 228572 & 228572 & 0.000 &  &  $2^{3}$ & 0.1 & 219575.00 & 219575 &  3.936 & 0.0 &  & $2^{-6}$ & 228572.00 & 228572 &  0.000 & 1.0 &  &  $2^{3}$ & 223641.00 & 223641 &  2.157 & 0.0 &  & $2^{-6}$ & 228572.00 & 228572 &  0.000 & 1.0 \\
  200\_50\_5 & 479651 & 479451 & 0.042 &  & $2^{-2}$ & 0.9 & 370026.67 & 426263 & 11.131 & 0.0 &  & $2^{-6}$ & 479004.00 & 479004 &  0.135 & 0.0 &  &  $2^{7}$ & 479436.40 & 479451 &  0.042 & 0.0 &  & $2^{-6}$ & 479451.00 & 479451 &  0.042 & 0.0 \\
  200\_50\_6 & 426777 & 426436 & 0.080 &  & $2^{-2}$ & 0.5 & 347722.20 & 348287 & 18.391 & 0.0 &  & $2^{-6}$ & 425835.00 & 425835 &  0.221 & 0.0 &  & $2^{-2}$ & 426657.00 & 426657 &  0.028 & 0.0 &  & $2^{-1}$ & 426601.20 & 426657 &  0.028 & 0.0 \\
  200\_50\_7 & 220890 & 220806 & 0.038 &  &  $2^{3}$ & 0.1 & 201893.00 & 201893 &  8.600 & 0.0 &  & $2^{-1}$ & 218989.00 & 220456 &  0.196 & 0.0 &  &  $2^{2}$ & 220842.00 & 220842 &  0.022 & 0.0 &  &  $2^{0}$ & 220683.20 & 220842 &  0.022 & 0.0 \\
  200\_50\_8 & 317952 & 317880 & 0.023 &  &  $2^{3}$ & 1.0 & 256469.50 & 288786 &  9.173 & 0.0 &  & $2^{-1}$ & 317150.10 & 317742 &  0.066 & 0.0 &  &  $2^{0}$ & 317952.00 & 317952 &  0.000 & 0.1 &  & $2^{-3}$ & 317952.00 & 317952 &  0.000 & 1.0 \\
  200\_50\_9 & 104936 & 104936 & 0.000 &  &  $2^{4}$ & 0.3 &  95284.33 &  98644 &  5.996 & 0.0 &  &  $2^{1}$ & 103506.60 & 104913 &  0.022 & 0.0 &  &  $2^{4}$ &  99015.33 & 101660 &  3.122 & 0.0 &  & $2^{-6}$ & 104936.00 & 104936 &  0.000 & 1.0 \\
 200\_50\_10 & 284751 & 284741 & 3.5E-3 &  &  $2^{3}$ & 0.7 & 259466.86 & 277449 &  2.564 & 0.0 &  &  $2^{0}$ & 281643.60 & 284741 & 3.5E-3 & 0.0 &  &  $2^{2}$ & 283984.00 & 283984 &  0.269 & 0.0 &  &  $2^{2}$ & 284054.30 & 284751 &  0.000 & 0.1 \\
  200\_75\_1 & 442894 & 442423 & 0.106 &  &  $2^{4}$ & 0.7 & 383841.14 & 426329 &  3.740 & 0.0 &  &  $2^{0}$ & 439884.10 & 441263 &  0.368 & 0.0 &  &  $2^{4}$ & 437349.86 & 442602 &  0.066 & 0.0 &  &  $2^{2}$ & 441433.60 & 442862 & 7.2E-3 & 0.0 \\
  200\_75\_2 & 286643 & 286632 & 3.8E-3 &  &  $2^{4}$ & 0.2 & 268944.00 & 274508 &  4.233 & 0.0 &  &  $2^{1}$ & 284667.30 & 286615 & 9.8E-3 & 0.0 &  &  $2^{4}$ & 279890.50 & 281342 &  1.849 & 0.0 &  &  $2^{2}$ & 286183.10 & 286643 &  0.000 & 0.2 \\
  200\_75\_3 &  61924 &  61924 & 0.000 &  &  $2^{5}$ & 0.1 &  59806.00 &  59806 &  3.420 & 0.0 &  &  $2^{3}$ &  61171.00 &  61475 &  0.725 & 0.0 &  &  $2^{5}$ &  60365.00 &  60365 &  2.518 & 0.0 &  & $2^{-6}$ &  61924.00 &  61924 &  0.000 & 1.0 \\
  200\_75\_4 & 128351 & 128351 & 0.000 &  &  $2^{5}$ & 0.3 &  94229.67 &  96894 & 24.509 & 0.0 &  &  $2^{1}$ & 128196.50 & 128351 &  0.000 & 0.6 &  &  $2^{5}$ & 102443.67 & 104265 & 18.766 & 0.0 &  & $2^{-6}$ & 128351.00 & 128351 &  0.000 & 1.0 \\
  200\_75\_5 & 137885 & 137764 & 0.088 &  &  $2^{5}$ & 0.3 & 120425.00 & 123930 & 10.121 & 0.0 &  &  $2^{2}$ & 137000.80 & 137690 &  0.141 & 0.0 &  &  $2^{5}$ & 125643.00 & 127511 &  7.524 & 0.0 &  &  $2^{3}$ & 137751.30 & 137885 &  0.000 & 0.2 \\
  200\_75\_6 & 229631 & 229250 & 0.166 &  &  $2^{5}$ & 0.1 & 204220.00 & 204220 & 11.066 & 0.0 &  &  $2^{2}$ & 227701.70 & 229186 &  0.194 & 0.0 &  &  $2^{5}$ & 215268.00 & 215268 &  6.255 & 0.0 &  &  $2^{2}$ & 229316.40 & 229631 &  0.000 & 0.4 \\
  200\_75\_7 & 269887 & 269887 & 0.000 &  &  $2^{4}$ & 0.1 & 255693.00 & 255693 &  5.259 & 0.0 &  &  $2^{1}$ & 267588.90 & 269558 &  0.122 & 0.0 &  &  $2^{4}$ & 261943.00 & 261943 &  2.943 & 0.0 &  & $2^{-6}$ & 269887.00 & 269887 &  0.000 & 1.0 \\
  200\_75\_8 & 600858 & 600806 & 8.7E-3 &  &  $2^{0}$ & 1.0 & 451760.70 & 530210 & 11.758 & 0.0 &  & $2^{-6}$ & 600659.00 & 600659 &  0.033 & 0.0 &  &  $2^{0}$ & 599597.40 & 600858 &  0.000 & 0.2 &  &  $2^{0}$ & 599597.40 & 600858 &  0.000 & 0.2 \\
  200\_75\_9 & 516771 & 516151 & 0.120 &  &  $2^{2}$ & 0.9 & 420729.67 & 482853 &  6.563 & 0.0 &  &  $2^{0}$ & 494623.20 & 515058 &  0.331 & 0.0 &  & $2^{-1}$ & 516494.00 & 516494 &  0.054 & 0.0 &  &  $2^{1}$ & 516395.70 & 516661 &  0.021 & 0.0 \\
 200\_75\_10 & 142694 & 142694 & 0.000 &  &  $2^{4}$ & 0.1 & 134749.00 & 134749 &  5.568 & 0.0 &  &  $2^{2}$ & 141502.70 & 142585 &  0.076 & 0.0 &  &  $2^{4}$ & 138983.00 & 138983 &  2.601 & 0.0 &  & $2^{-6}$ & 142694.00 & 142694 &  0.000 & 1.0 \\
 200\_100\_1 & 937149 & 937123 & 2.8E-3 &  & $2^{-1}$ & 0.7 & 717350.86 & 778851 & 16.891 & 0.0 &  & $2^{-6}$ & 935700.00 & 935700 &  0.155 & 0.0 &  &  $2^{4}$ & 937093.10 & 937149 &  0.000 & 0.3 &  &  $2^{4}$ & 937093.10 & 937149 &  0.000 & 0.3 \\
 200\_100\_2 & 303058 & 302690 & 0.121 &  &  $2^{6}$ & 0.5 & 237593.60 & 250511 & 17.339 & 0.0 &  &  $2^{2}$ & 301369.40 & 302035 &  0.338 & 0.0 &  &  $2^{6}$ & 246849.60 & 257617 & 14.994 & 0.0 &  &  $2^{4}$ & 301856.00 & 303050 & 2.6E-3 & 0.0 \\
 200\_100\_3 &  29367 &  29296 & 0.242 &  &  $2^{6}$ & 0.2 &  27624.00 &  27834 &  5.220 & 0.0 &  &  $2^{4}$ &  28956.90 &  29176 &  0.650 & 0.0 &  &  $2^{6}$ &  28285.00 &  28285 &  3.684 & 0.0 &  & $2^{-6}$ &  29296.00 &  29296 &  0.242 & 0.0 \\
 200\_100\_4 & 100838 & 100838 & 0.000 &  &  $2^{5}$ & 0.1 &  93818.00 &  93818 &  6.962 & 0.0 &  &  $2^{3}$ & 100508.50 & 100837 & 9.9E-4 & 0.0 &  &  $2^{5}$ &  95446.00 &  95446 &  5.347 & 0.0 &  & $2^{-6}$ & 100838.00 & 100838 &  0.000 & 1.0 \\
 200\_100\_5 & 786635 & 786482 & 0.019 &  &  $2^{1}$ & 0.9 & 586694.89 & 689990 & 12.286 & 0.0 &  &  $2^{1}$ & 606642.30 & 786169 &  0.059 & 0.0 &  &  $2^{4}$ & 786458.40 & 786490 &  0.018 & 0.0 &  &  $2^{4}$ & 786458.40 & 786490 &  0.018 & 0.0 \\
 200\_100\_6 &  41171 &  41171 & 0.000 &  &  $2^{6}$ & 0.1 &  39199.00 &  39199 &  4.790 & 0.0 &  &  $2^{4}$ &  39688.60 &  41171 &  0.000 & 0.2 &  &  $2^{6}$ &  39980.00 &  39980 &  2.893 & 0.0 &  & $2^{-6}$ &  41171.00 &  41171 &  0.000 & 1.0 \\
 200\_100\_7 & 701094 & 700965 & 0.018 &  &  $2^{1}$ & 0.6 & 540193.33 & 609513 & 13.063 & 0.0 &  &  $2^{1}$ & 604114.80 & 700570 &  0.075 & 0.0 &  &  $2^{1}$ & 700970.50 & 700998 &  0.014 & 0.0 &  &  $2^{1}$ & 700993.00 & 701094 &  0.000 & 0.2 \\
 200\_100\_8 & 782443 & 781455 & 0.126 &  &  $2^{1}$ & 1.0 & 597174.90 & 693221 & 11.403 & 0.0 &  & $2^{-6}$ & 779797.00 & 779797 &  0.338 & 0.0 &  &  $2^{3}$ & 781926.60 & 782408 & 4.5E-3 & 0.0 &  &  $2^{3}$ & 781926.60 & 782408 & 4.5E-3 & 0.0 \\
 200\_100\_9 & 628992 & 628893 & 0.016 &  &  $2^{3}$ & 0.8 & 546637.50 & 582449 &  7.400 & 0.0 &  &  $2^{0}$ & 626696.10 & 627799 &  0.190 & 0.0 &  &  $2^{3}$ & 627774.00 & 628992 &  0.000 & 0.1 &  &  $2^{2}$ & 627240.90 & 628992 &  0.000 & 0.1 \\
200\_100\_10 & 378442 & 378169 & 0.072 &  &  $2^{5}$ & 0.4 & 340558.25 & 347358 &  8.214 & 0.0 &  &  $2^{2}$ & 376514.00 & 377460 &  0.259 & 0.0 &  &  $2^{5}$ & 355961.50 & 363794 &  3.871 & 0.0 &  & $2^{-6}$ & 378169.00 & 378169 &  0.072 & 0.0 \\
  300\_25\_1 &  29140 &  29140 & 0.000 &  &  $2^{5}$ & 0.4 &  24376.50 &  25337 & 13.051 & 0.0 &  &  $2^{2}$ &  28985.20 &  29104 &  0.124 & 0.0 &  &  $2^{5}$ &  26907.75 &  27951 &  4.080 & 0.0 &  & $2^{-6}$ &  29140.00 &  29140 &  0.000 & 1.0 \\
  300\_25\_2 & 281990 & 281268 & 0.256 &  &  $2^{2}$ & 0.3 & 265680.00 & 279136 &  1.012 & 0.0 &  & $2^{-1}$ & 280505.20 & 281931 &  0.021 & 0.0 &  &  $2^{2}$ & 280874.33 & 281077 &  0.324 & 0.0 &  &  $2^{0}$ & 281882.10 & 281973 & 6.0E-3 & 0.0 \\
  300\_25\_3 & 231075 & 231075 & 0.000 &  &  $2^{3}$ & 0.1 & 205225.00 & 205225 & 11.187 & 0.0 &  & $2^{-1}$ & 230662.90 & 231075 &  0.000 & 0.3 &  &  $2^{3}$ & 226661.00 & 226661 &  1.910 & 0.0 &  & $2^{-6}$ & 231075.00 & 231075 &  0.000 & 1.0 \\
  300\_25\_4 & 444759 & 444712 & 0.011 &  & $2^{-2}$ & 1.0 & 333689.10 & 377515 & 15.119 & 0.0 &  & $2^{-6}$ & 443909.00 & 443909 &  0.191 & 0.0 &  & $2^{-2}$ & 444559.20 & 444725 & 7.6E-3 & 0.0 &  & $2^{-2}$ & 444559.20 & 444725 & 7.6E-3 & 0.0 \\
  300\_25\_5 &  14988 &  14879 & 0.727 &  &  $2^{5}$ & 0.2 &  12814.50 &  13524 &  9.768 & 0.0 &  &  $2^{3}$ &  14729.90 &  14958 &  0.200 & 0.0 &  &  $2^{5}$ &  13494.00 &  14245 &  4.957 & 0.0 &  &  $2^{3}$ &  14894.40 &  14988 &  0.000 & 0.1 \\
  300\_25\_6 & 269782 & 269671 & 0.041 &  &  $2^{3}$ & 0.3 & 256370.00 & 257802 &  4.441 & 0.0 &  & $2^{-1}$ & 267575.50 & 269671 &  0.041 & 0.0 &  &  $2^{3}$ & 263143.00 & 263753 &  2.235 & 0.0 &  &  $2^{1}$ & 269338.00 & 269782 &  0.000 & 0.3 \\
  300\_25\_7 & 485263 & 484539 & 0.149 &  & $2^{-4}$ & 0.6 & 373124.17 & 378224 & 22.058 & 0.0 &  & $2^{-6}$ & 483078.00 & 483078 &  0.450 & 0.0 &  &  $2^{1}$ & 484619.00 & 485263 &  0.000 & 0.1 &  &  $2^{1}$ & 484619.00 & 485263 &  0.000 & 0.1 \\
  300\_25\_8 &   9343 &   9343 & 0.000 &  &  $2^{6}$ & 0.6 &   7148.50 &   8223 & 11.988 & 0.0 &  & $2^{-6}$ &   9224.00 &   9224 &  1.274 & 0.0 &  &  $2^{6}$ &   8061.83 &   8863 &  5.138 & 0.0 &  & $2^{-6}$ &   9343.00 &   9343 &  0.000 & 1.0 \\
  300\_25\_9 & 250761 & 250551 & 0.084 &  &  $2^{3}$ & 0.3 & 238705.67 & 243136 &  3.041 & 0.0 &  & $2^{-1}$ & 250406.50 & 250751 & 4.0E-3 & 0.0 &  &  $2^{3}$ & 248275.33 & 248900 &  0.742 & 0.0 &  &  $2^{1}$ & 250398.50 & 250761 &  0.000 & 0.1 \\
 300\_25\_10 & 383377 & 383377 & 0.000 &  &  $2^{0}$ & 0.9 & 309068.11 & 347659 &  9.317 & 0.0 &  & $2^{-1}$ & 331814.50 & 383377 &  0.000 & 0.1 &  &  $2^{0}$ & 382790.67 & 383377 &  0.000 & 0.2 &  & $2^{-6}$ & 383377.00 & 383377 &  0.000 & 1.0 \\
  300\_50\_1 & 513379 & 513361 & 3.5E-3 &  &  $2^{4}$ & 0.4 & 470796.75 & 483335 &  5.852 & 0.0 &  &  $2^{1}$ & 511127.00 & 512984 &  0.077 & 0.0 &  &  $2^{2}$ & 513084.00 & 513084 &  0.057 & 0.0 &  &  $2^{1}$ & 513057.50 & 513379 &  0.000 & 0.1 \\
  300\_50\_2 & 105543 & 105543 & 0.000 &  &  $2^{5}$ & 0.2 &  90216.00 &  95103 &  9.892 & 0.0 &  &  $2^{2}$ & 103926.40 & 105543 &  0.000 & 0.3 &  &  $2^{5}$ &  95996.50 &  99371 &  5.848 & 0.0 &  & $2^{-6}$ & 105543.00 & 105543 &  0.000 & 1.0 \\
  300\_50\_3 & 875788 & 874561 & 0.140 &  & $2^{-1}$ & 1.0 & 665029.60 & 745587 & 14.867 & 0.0 &  & $2^{-6}$ & 871417.00 & 871417 &  0.499 & 0.0 &  &  $2^{0}$ & 875387.40 & 875769 & 2.2E-3 & 0.0 &  &  $2^{0}$ & 875387.40 & 875769 & 2.2E-3 & 0.0 \\
  300\_50\_4 & 307124 & 307124 & 0.000 &  &  $2^{5}$ & 0.3 & 265725.00 & 270113 & 12.051 & 0.0 &  & $2^{-6}$ & 306937.00 & 306937 &  0.061 & 0.0 &  &  $2^{5}$ & 281834.00 & 285461 &  7.054 & 0.0 &  & $2^{-6}$ & 307124.00 & 307124 &  0.000 & 1.0 \\
  300\_50\_5 & 727820 & 727486 & 0.046 &  &  $2^{1}$ & 0.9 & 616767.22 & 655825 &  9.892 & 0.0 &  & $2^{-1}$ & 726349.50 & 727463 &  0.049 & 0.0 &  &  $2^{0}$ & 727650.00 & 727684 &  0.019 & 0.0 &  &  $2^{1}$ & 727634.60 & 727820 &  0.000 & 0.1 \\
  300\_50\_6 & 734053 & 733855 & 0.027 &  &  $2^{1}$ & 0.9 & 619285.56 & 662004 &  9.815 & 0.0 &  &  $2^{0}$ & 662379.00 & 733923 &  0.018 & 0.0 &  &  $2^{2}$ & 733833.10 & 734053 &  0.000 & 0.1 &  &  $2^{0}$ & 733963.40 & 734053 &  0.000 & 0.1 \\
  300\_50\_7 &  43595 &  43524 & 0.163 &  &  $2^{6}$ & 0.3 &  31501.33 &  33915 & 22.204 & 0.0 &  &  $2^{3}$ &  42913.00 &  43203 &  0.899 & 0.0 &  &  $2^{6}$ &  35597.00 &  37920 & 13.018 & 0.0 &  &  $2^{4}$ &  43479.40 &  43595 &  0.000 & 0.2 \\
  300\_50\_8 & 767977 & 767959 & 2.3E-3 &  &  $2^{0}$ & 0.7 & 640103.00 & 703023 &  8.458 & 0.0 &  &  $2^{0}$ & 678100.80 & 767311 &  0.087 & 0.0 &  &  $2^{3}$ & 766792.70 & 767960 & 2.2E-3 & 0.0 &  &  $2^{3}$ & 766792.70 & 767960 & 2.2E-3 & 0.0 \\
  300\_50\_9 & 761351 & 761351 & 0.000 &  &  $2^{1}$ & 0.8 & 642154.75 & 689821 &  9.395 & 0.0 &  & $2^{-6}$ & 760565.00 & 760565 &  0.103 & 0.0 &  &  $2^{0}$ & 761154.12 & 761351 &  0.000 & 0.2 &  & $2^{-6}$ & 761351.00 & 761351 &  0.000 & 1.0 \\
 300\_50\_10 & 996070 & 996070 & 0.000 &  & $2^{-2}$ & 1.0 & 710165.70 & 887845 & 10.865 & 0.0 &  & $2^{-6}$ & 989559.00 & 989559 &  0.654 & 0.0 &  &  $2^{5}$ & 994305.30 & 996070 &  0.000 & 0.1 &  & $2^{-6}$ & 996070.00 & 996070 &  0.000 & 1.0 \\
    \hline
  \end{tabular}
}
\end{table}
\end{landscape}

\begin{landscape}

\begin{table}[t]
  \caption{Full Results of SA-RI with Various Encoding Methods.}
  \label{tab:full_sa_compare_enc_1}
  \centering
\resizebox{\linewidth}{!}{
  \begin{tabular}{@{\hspace{1pt}}c@{\hspace{4pt}}r@{\hspace{4pt}}|r@{\hspace{3pt}}
  r@{\hspace{5pt}}r@{\hspace{5pt}}r@{\hspace{5pt}}r@{\hspace{5pt}}r
  @{\hspace{5pt}}r@{\hspace{5pt}}
  r@{\hspace{5pt}}r@{\hspace{5pt}}r@{\hspace{5pt}}r@{\hspace{5pt}}r
  @{\hspace{5pt}}r@{\hspace{5pt}}
  r@{\hspace{5pt}}r@{\hspace{5pt}}r@{\hspace{5pt}}r@{\hspace{5pt}}r
  @{\hspace{5pt}}r@{\hspace{5pt}}
  r@{\hspace{5pt}}r@{\hspace{5pt}}r@{\hspace{5pt}}r@{\hspace{5pt}}r
  @{\hspace{5pt}}r@{\hspace{5pt}}
  r@{\hspace{5pt}}r@{\hspace{5pt}}r@{\hspace{5pt}}r@{\hspace{5pt}}r
  @{\hspace{3pt}}}
    \hline
    \multicolumn{2}{c|}{Instance} &&
    \multicolumn{5}{c}{Binary} && 
    \multicolumn{5}{c}{Hybrid} && 
    \multicolumn{5}{c}{Unary} && 
    \multicolumn{5}{c}{One-hot} && 
    \multicolumn{5}{c}{Offset} \\
    \cline{4-8}
    \cline{10-14}
    \cline{16-20}
    \cline{22-26}
    \cline{28-32}
    $n$\_$d$\_id & Optimal &
    & $\lambda$ & Mean & Best & Gap & SR &  
    & $\lambda$ & Mean & Best & Gap & SR &  
    & $\lambda$ & Mean & Best & Gap & SR &  
    & $\lambda$ & Mean & Best & Gap & SR &  
    & $\lambda$ & Mean & Best & Gap & SR \\
    \hline \hline
  100\_25\_1 & 18558 & &$2^{-3}$ & 18558.0 & 18558 & 0.000 &1.0 & &$2^{-3}$ & 18558.0 & 18558 & 0.000 &1.0 & &$2^{-3}$ & 18558.0 & 18558 & 0.000 &1.0 & &$2^{-4}$ & 18558.0 & 18558 & 0.000 &1.0 & &$2^{-4}$ & 18558.0 & 18558 & 0.000 &1.0 \\
  100\_25\_2 & 56525 & &$2^{-6}$ & 56525.0 & 56525 & 0.000 &1.0 & &$2^{-6}$ & 56525.0 & 56525 & 0.000 &1.0 & &$2^{-6}$ & 56525.0 & 56525 & 0.000 &1.0 & &$2^{-6}$ & 56525.0 & 56525 & 0.000 &1.0 & &$2^{-6}$ & 56525.0 & 56525 & 0.000 &1.0 \\
  100\_25\_3 &  3752 & & $2^{3}$ &  3664.4 &  3752 & 0.000 &0.2 & & $2^{1}$ &  3705.0 &  3717 & 0.933 &0.0 & & $2^{3}$ &  3532.2 &  3752 & 0.000 &0.1 & & $2^{3}$ &  3562.3 &  3752 & 0.000 &0.1 & & $2^{3}$ &  3582.6 &  3752 & 0.000 &0.1 \\
  100\_25\_4 & 50382 & &$2^{-6}$ & 50382.0 & 50382 & 0.000 &1.0 & &$2^{-6}$ & 50382.0 & 50382 & 0.000 &1.0 & &$2^{-6}$ & 50382.0 & 50382 & 0.000 &1.0 & &$2^{-6}$ & 50382.0 & 50382 & 0.000 &1.0 & &$2^{-6}$ & 50382.0 & 50382 & 0.000 &1.0 \\
  100\_25\_5 & 61494 & &$2^{-6}$ & 61494.0 & 61494 & 0.000 &1.0 & &$2^{-6}$ & 61494.0 & 61494 & 0.000 &1.0 & &$2^{-6}$ & 61494.0 & 61494 & 0.000 &1.0 & &$2^{-6}$ & 61494.0 & 61494 & 0.000 &1.0 & &$2^{-6}$ & 61494.0 & 61494 & 0.000 &1.0 \\
  100\_25\_6 & 36360 & &$2^{-1}$ & 36206.1 & 36360 & 0.000 &0.1 & &$2^{-1}$ & 36223.2 & 36360 & 0.000 &0.2 & & $2^{1}$ & 36169.3 & 36303 & 0.157 &0.0 & &$2^{-1}$ & 36225.2 & 36360 & 0.000 &0.2 & & $2^{0}$ & 36200.2 & 36360 & 0.000 &0.1 \\
  100\_25\_7 & 14657 & &$2^{-3}$ & 14657.0 & 14657 & 0.000 &1.0 & &$2^{-3}$ & 14657.0 & 14657 & 0.000 &1.0 & &$2^{-3}$ & 14657.0 & 14657 & 0.000 &1.0 & &$2^{-3}$ & 14657.0 & 14657 & 0.000 &1.0 & &$2^{-3}$ & 14657.0 & 14657 & 0.000 &1.0 \\
  100\_25\_8 & 20452 & & $2^{0}$ & 20290.9 & 20452 & 0.000 &0.2 & & $2^{0}$ & 20286.2 & 20452 & 0.000 &0.1 & & $2^{1}$ & 20319.4 & 20395 & 0.279 &0.0 & & $2^{1}$ & 20330.2 & 20452 & 0.000 &0.3 & & $2^{1}$ & 20304.9 & 20452 & 0.000 &0.2 \\
  100\_25\_9 & 35438 & &$2^{-2}$ & 35401.6 & 35438 & 0.000 &0.6 & &$2^{-2}$ & 35401.6 & 35438 & 0.000 &0.6 & &$2^{-1}$ & 35370.3 & 35438 & 0.000 &0.4 & &$2^{-3}$ & 35385.6 & 35438 & 0.000 &0.5 & &$2^{-4}$ & 35438.0 & 35438 & 0.000 &1.0 \\
 100\_25\_10 & 24930 & &$2^{-1}$ & 24909.8 & 24930 & 0.000 &0.1 & & $2^{0}$ & 24904.7 & 24930 & 0.000 &0.2 & & $2^{0}$ & 24901.4 & 24930 & 0.000 &0.2 & & $2^{1}$ & 24864.0 & 24930 & 0.000 &0.3 & &$2^{-2}$ & 24926.4 & 24930 & 0.000 &0.1 \\
  100\_50\_1 & 83742 & & $2^{1}$ & 83695.7 & 83742 & 0.000 &0.5 & & $2^{2}$ & 83361.3 & 83742 & 0.000 &0.4 & & $2^{1}$ & 83711.9 & 83742 & 0.000 &0.5 & & $2^{2}$ & 83190.4 & 83742 & 0.000 &0.4 & & $2^{1}$ & 83736.0 & 83742 & 0.000 &0.8 \\
  100\_50\_2 &104856 & & $2^{1}$ &104775.1 &104856 & 0.000 &0.1 & & $2^{1}$ &104776.5 &104792 & 0.061 &0.0 & & $2^{1}$ &104774.7 &104792 & 0.061 &0.0 & & $2^{1}$ &104749.7 &104856 & 0.000 &0.1 & &$2^{-1}$ &104751.2 &104792 & 0.061 &0.0 \\
  100\_50\_3 & 34006 & & $2^{1}$ & 33930.0 & 34006 & 0.000 &0.5 & & $2^{0}$ & 33933.2 & 34006 & 0.000 &0.3 & & $2^{0}$ & 33933.2 & 34006 & 0.000 &0.3 & & $2^{2}$ & 33969.6 & 34006 & 0.000 &0.5 & & $2^{2}$ & 33875.8 & 34006 & 0.000 &0.3 \\
  100\_50\_4 &105996 & &$2^{-6}$ &105996.0 &105996 & 0.000 &1.0 & &$2^{-6}$ &105996.0 &105996 & 0.000 &1.0 & &$2^{-6}$ &105996.0 &105996 & 0.000 &1.0 & &$2^{-6}$ &105996.0 &105996 & 0.000 &1.0 & &$2^{-6}$ &105996.0 &105996 & 0.000 &1.0 \\
  100\_50\_5 & 56464 & &$2^{-6}$ & 56464.0 & 56464 & 0.000 &1.0 & &$2^{-6}$ & 56464.0 & 56464 & 0.000 &1.0 & &$2^{-6}$ & 56464.0 & 56464 & 0.000 &1.0 & &$2^{-6}$ & 56464.0 & 56464 & 0.000 &1.0 & &$2^{-6}$ & 56464.0 & 56464 & 0.000 &1.0 \\
  100\_50\_6 & 16083 & &$2^{-6}$ & 16083.0 & 16083 & 0.000 &1.0 & &$2^{-6}$ & 16083.0 & 16083 & 0.000 &1.0 & &$2^{-6}$ & 16083.0 & 16083 & 0.000 &1.0 & &$2^{-6}$ & 16083.0 & 16083 & 0.000 &1.0 & &$2^{-6}$ & 16083.0 & 16083 & 0.000 &1.0 \\
  100\_50\_7 & 52819 & & $2^{2}$ & 52347.7 & 52819 & 0.000 &0.2 & & $2^{1}$ & 52735.7 & 52819 & 0.000 &0.4 & & $2^{0}$ & 52787.5 & 52819 & 0.000 &0.1 & & $2^{1}$ & 52760.1 & 52819 & 0.000 &0.2 & & $2^{0}$ & 52697.5 & 52819 & 0.000 &0.3 \\
  100\_50\_8 & 54246 & & $2^{0}$ & 54089.4 & 54246 & 0.000 &0.2 & & $2^{2}$ & 53755.0 & 54246 & 0.000 &0.1 & & $2^{2}$ & 53480.8 & 54246 & 0.000 &0.1 & & $2^{2}$ & 53889.0 & 54246 & 0.000 &0.3 & & $2^{1}$ & 54144.4 & 54246 & 0.000 &0.2 \\
  100\_50\_9 & 68974 & &$2^{-6}$ & 68974.0 & 68974 & 0.000 &1.0 & &$2^{-6}$ & 68974.0 & 68974 & 0.000 &1.0 & &$2^{-6}$ & 68974.0 & 68974 & 0.000 &1.0 & &$2^{-6}$ & 68974.0 & 68974 & 0.000 &1.0 & &$2^{-6}$ & 68974.0 & 68974 & 0.000 &1.0 \\
 100\_50\_10 & 88634 & & $2^{0}$ & 88470.4 & 88600 & 0.038 &0.0 & & $2^{0}$ & 88507.4 & 88634 & 0.000 &0.2 & & $2^{0}$ & 88528.5 & 88634 & 0.000 &0.1 & & $2^{0}$ & 88492.2 & 88575 & 0.067 &0.0 & & $2^{2}$ & 88346.8 & 88634 & 0.000 &0.2 \\
  100\_75\_1 &189137 & &$2^{-6}$ &189137.0 &189137 & 0.000 &1.0 & &$2^{-6}$ &189137.0 &189137 & 0.000 &1.0 & &$2^{-6}$ &189137.0 &189137 & 0.000 &1.0 & &$2^{-6}$ &189137.0 &189137 & 0.000 &1.0 & &$2^{-6}$ &189137.0 &189137 & 0.000 &1.0 \\
  100\_75\_2 & 95074 & & $2^{2}$ & 94970.1 & 95074 & 0.000 &0.1 & & $2^{0}$ & 94968.1 & 95003 & 0.075 &0.0 & &$2^{-1}$ & 94978.9 & 95003 & 0.075 &0.0 & &$2^{-1}$ & 94982.3 & 95003 & 0.075 &0.0 & &$2^{-1}$ & 94988.6 & 95003 & 0.075 &0.0 \\
  100\_75\_3 & 62098 & &$2^{-6}$ & 62098.0 & 62098 & 0.000 &1.0 & &$2^{-6}$ & 62098.0 & 62098 & 0.000 &1.0 & &$2^{-6}$ & 62098.0 & 62098 & 0.000 &1.0 & &$2^{-6}$ & 62098.0 & 62098 & 0.000 &1.0 & &$2^{-6}$ & 62098.0 & 62098 & 0.000 &1.0 \\
  100\_75\_4 & 72245 & & $2^{1}$ & 72113.1 & 72245 & 0.000 &0.2 & & $2^{2}$ & 72097.2 & 72245 & 0.000 &0.1 & & $2^{2}$ & 72134.7 & 72245 & 0.000 &0.4 & & $2^{1}$ & 72053.0 & 72245 & 0.000 &0.2 & & $2^{0}$ & 72136.8 & 72245 & 0.000 &0.3 \\
  100\_75\_5 & 27616 & &$2^{-6}$ & 27616.0 & 27616 & 0.000 &1.0 & &$2^{-6}$ & 27616.0 & 27616 & 0.000 &1.0 & &$2^{-6}$ & 27616.0 & 27616 & 0.000 &1.0 & &$2^{-6}$ & 27616.0 & 27616 & 0.000 &1.0 & &$2^{-6}$ & 27616.0 & 27616 & 0.000 &1.0 \\
  100\_75\_6 &145273 & & $2^{1}$ &144452.1 &145273 & 0.000 &0.1 & & $2^{1}$ &144995.9 &145273 & 0.000 &0.1 & & $2^{1}$ &145049.0 &145273 & 0.000 &0.3 & & $2^{1}$ &145104.7 &145273 & 0.000 &0.1 & & $2^{1}$ &144143.2 &145273 & 0.000 &0.1 \\
  100\_75\_7 &110979 & & $2^{1}$ &110824.5 &110979 & 0.000 &0.3 & & $2^{2}$ &110412.9 &110979 & 0.000 &0.2 & & $2^{1}$ &110802.9 &110979 & 0.000 &0.2 & & $2^{1}$ &110910.8 &110979 & 0.000 &0.5 & & $2^{1}$ &110873.5 &110979 & 0.000 &0.4 \\
  100\_75\_8 & 19570 & &$2^{-6}$ & 19570.0 & 19570 & 0.000 &1.0 & &$2^{-6}$ & 19570.0 & 19570 & 0.000 &1.0 & &$2^{-6}$ & 19570.0 & 19570 & 0.000 &1.0 & &$2^{-6}$ & 19570.0 & 19570 & 0.000 &1.0 & &$2^{-6}$ & 19570.0 & 19570 & 0.000 &1.0 \\
  100\_75\_9 &104341 & & $2^{2}$ &104028.4 &104285 & 0.054 &0.0 & & $2^{2}$ &104078.8 &104253 & 0.084 &0.0 & & $2^{2}$ &103668.9 &104253 & 0.084 &0.0 & & $2^{2}$ &104138.8 &104300 & 0.039 &0.0 & & $2^{2}$ &103659.2 &104300 & 0.039 &0.0 \\
 100\_75\_10 &143740 & & $2^{2}$ &143457.8 &143740 & 0.000 &0.6 & & $2^{2}$ &143318.1 &143740 & 0.000 &0.6 & & $2^{2}$ &143575.6 &143740 & 0.000 &0.7 & & $2^{2}$ &143472.4 &143740 & 0.000 &0.7 & & $2^{2}$ &143149.9 &143740 & 0.000 &0.4 \\
 100\_100\_1 & 81978 & &$2^{-6}$ & 81961.0 & 81961 & 0.021 &0.0 & &$2^{-6}$ & 81961.0 & 81961 & 0.021 &0.0 & & $2^{3}$ & 81962.7 & 81978 & 0.000 &0.1 & & $2^{3}$ & 81801.4 & 81978 & 0.000 &0.1 & &$2^{-6}$ & 81961.0 & 81961 & 0.021 &0.0 \\
 100\_100\_2 &190424 & & $2^{0}$ &190301.4 &190385 & 0.020 &0.0 & & $2^{2}$ &190221.7 &190385 & 0.020 &0.0 & & $2^{0}$ &190262.7 &190385 & 0.020 &0.0 & & $2^{2}$ &189529.8 &190385 & 0.020 &0.0 & & $2^{0}$ &190241.0 &190385 & 0.020 &0.0 \\
 100\_100\_3 &225434 & &$2^{-6}$ &225434.0 &225434 & 0.000 &1.0 & &$2^{-6}$ &225434.0 &225434 & 0.000 &1.0 & &$2^{-6}$ &225434.0 &225434 & 0.000 &1.0 & &$2^{-6}$ &225434.0 &225434 & 0.000 &1.0 & &$2^{-6}$ &225434.0 &225434 & 0.000 &1.0 \\
 100\_100\_4 & 63028 & &$2^{-6}$ & 63028.0 & 63028 & 0.000 &1.0 & &$2^{-6}$ & 63028.0 & 63028 & 0.000 &1.0 & &$2^{-6}$ & 63028.0 & 63028 & 0.000 &1.0 & &$2^{-6}$ & 63028.0 & 63028 & 0.000 &1.0 & &$2^{-6}$ & 63028.0 & 63028 & 0.000 &1.0 \\
 100\_100\_5 &230076 & &$2^{-6}$ &230076.0 &230076 & 0.000 &1.0 & &$2^{-6}$ &230076.0 &230076 & 0.000 &1.0 & &$2^{-6}$ &230076.0 &230076 & 0.000 &1.0 & &$2^{-6}$ &230076.0 &230076 & 0.000 &1.0 & &$2^{-6}$ &230076.0 &230076 & 0.000 &1.0 \\
 100\_100\_6 & 74358 & &$2^{-6}$ & 74358.0 & 74358 & 0.000 &1.0 & &$2^{-6}$ & 74358.0 & 74358 & 0.000 &1.0 & &$2^{-6}$ & 74358.0 & 74358 & 0.000 &1.0 & &$2^{-6}$ & 74358.0 & 74358 & 0.000 &1.0 & &$2^{-6}$ & 74358.0 & 74358 & 0.000 &1.0 \\
 100\_100\_7 & 10330 & & $2^{5}$ & 10037.3 & 10330 & 0.000 &0.2 & & $2^{5}$ &  9658.8 & 10330 & 0.000 &0.2 & &$2^{-6}$ & 10184.0 & 10184 & 1.413 &0.0 & & $2^{5}$ & 10083.3 & 10330 & 0.000 &0.3 & &$2^{-6}$ & 10184.0 & 10184 & 1.413 &0.0 \\
 100\_100\_8 & 62582 & & $2^{3}$ & 62481.0 & 62582 & 0.000 &0.4 & & $2^{3}$ & 62499.5 & 62582 & 0.000 &0.5 & & $2^{3}$ & 62084.8 & 62582 & 0.000 &0.4 & & $2^{3}$ & 62131.2 & 62582 & 0.000 &0.3 & & $2^{3}$ & 61924.8 & 62582 & 0.000 &0.3 \\
 100\_100\_9 &232754 & & $2^{2}$ &232735.7 &232754 & 0.000 &0.7 & & $2^{1}$ &232741.8 &232754 & 0.000 &0.8 & & $2^{1}$ &232735.7 &232754 & 0.000 &0.7 & & $2^{2}$ &232741.8 &232754 & 0.000 &0.8 & & $2^{1}$ &232746.9 &232754 & 0.000 &0.8 \\
100\_100\_10 &193262 & & $2^{3}$ &193217.4 &193262 & 0.000 &0.6 & & $2^{3}$ &193093.8 &193262 & 0.000 &0.4 & & $2^{3}$ &192524.2 &193262 & 0.000 &0.5 & & $2^{3}$ &193178.5 &193262 & 0.000 &0.5 & & $2^{1}$ &193189.0 &193262 & 0.000 &0.5 \\
  200\_25\_1 &204441 & &$2^{-1}$ &204373.3 &204441 & 0.000 &0.1 & & $2^{1}$ &204252.8 &204441 & 0.000 &0.2 & & $2^{0}$ &204389.2 &204441 & 0.000 &0.2 & &$2^{-6}$ &204399.0 &204399 & 0.021 &0.0 & & $2^{1}$ &204267.8 &204401 & 0.020 &0.0 \\
  200\_25\_2 &239573 & &$2^{-6}$ &239573.0 &239573 & 0.000 &1.0 & &$2^{-6}$ &239573.0 &239573 & 0.000 &1.0 & &$2^{-6}$ &239573.0 &239573 & 0.000 &1.0 & &$2^{-6}$ &239573.0 &239573 & 0.000 &1.0 & &$2^{-6}$ &239573.0 &239573 & 0.000 &1.0 \\
  200\_25\_3 &245463 & &$2^{-1}$ &244649.4 &245463 & 0.000 &0.2 & & $2^{1}$ &244815.7 &245380 & 0.034 &0.0 & & $2^{1}$ &244885.1 &245463 & 0.000 &0.4 & & $2^{2}$ &244737.3 &245463 & 0.000 &0.1 & & $2^{2}$ &244590.1 &244631 & 0.339 &0.0 \\
  200\_25\_4 &222361 & &$2^{-1}$ &221822.0 &222361 & 0.000 &0.3 & & $2^{1}$ &221928.1 &222361 & 0.000 &0.2 & & $2^{0}$ &222110.5 &222361 & 0.000 &0.6 & &$2^{-1}$ &221928.3 &222361 & 0.000 &0.3 & &$2^{-1}$ &221991.0 &222361 & 0.000 &0.3 \\
  200\_25\_5 &187324 & & $2^{0}$ &187126.5 &187324 & 0.000 &0.1 & &$2^{-1}$ &187272.9 &187324 & 0.000 &0.1 & & $2^{1}$ &186945.8 &187324 & 0.000 &0.1 & &$2^{-6}$ &187315.0 &187315 &4.8E-3 &0.0 & & $2^{0}$ &187102.9 &187324 & 0.000 &0.1 \\
  200\_25\_6 & 80351 & &$2^{-2}$ & 80293.0 & 80310 & 0.051 &0.0 & & $2^{0}$ & 80244.5 & 80351 & 0.000 &0.1 & & $2^{1}$ & 80223.4 & 80351 & 0.000 &0.1 & &$2^{-1}$ & 80276.7 & 80351 & 0.000 &0.1 & &$2^{-1}$ & 80311.6 & 80351 & 0.000 &0.3 \\
  200\_25\_7 & 59036 & & $2^{1}$ & 58863.9 & 59036 & 0.000 &0.4 & & $2^{1}$ & 58907.7 & 59036 & 0.000 &0.3 & & $2^{1}$ & 58943.4 & 59036 & 0.000 &0.7 & & $2^{1}$ & 58921.4 & 59036 & 0.000 &0.2 & & $2^{0}$ & 58973.4 & 59036 & 0.000 &0.6 \\
  200\_25\_8 &149433 & & $2^{0}$ &149256.1 &149433 & 0.000 &0.1 & & $2^{0}$ &149231.6 &149385 & 0.032 &0.0 & &$2^{-1}$ &149187.6 &149433 & 0.000 &0.2 & & $2^{0}$ &149268.8 &149433 & 0.000 &0.1 & &$2^{-1}$ &149216.6 &149433 & 0.000 &0.2 \\
  200\_25\_9 & 49366 & &$2^{-6}$ & 49366.0 & 49366 & 0.000 &1.0 & &$2^{-6}$ & 49366.0 & 49366 & 0.000 &1.0 & &$2^{-6}$ & 49366.0 & 49366 & 0.000 &1.0 & &$2^{-6}$ & 49366.0 & 49366 & 0.000 &1.0 & &$2^{-6}$ & 49366.0 & 49366 & 0.000 &1.0 \\
 200\_25\_10 & 48459 & & $2^{0}$ & 48341.4 & 48459 & 0.000 &0.3 & & $2^{2}$ & 48241.5 & 48459 & 0.000 &0.5 & & $2^{2}$ & 48179.4 & 48459 & 0.000 &0.2 & & $2^{2}$ & 48223.9 & 48459 & 0.000 &0.3 & & $2^{2}$ & 48213.7 & 48459 & 0.000 &0.3 \\
    \hline
  \end{tabular}
}
\end{table}
\end{landscape}
\addtocounter{table}{-1}
\begin{landscape}
\begin{table}[t]
  \caption{Full Results of SA-RI with Various Encoding Methods (continued).}
  \label{tab:full_sa_compare_enc_2}
  \centering
\resizebox{\linewidth}{!}{
  \begin{tabular}{@{\hspace{1pt}}c@{\hspace{4pt}}r@{\hspace{4pt}}|r@{\hspace{3pt}}
  r@{\hspace{5pt}}r@{\hspace{5pt}}r@{\hspace{5pt}}r@{\hspace{5pt}}r
  @{\hspace{5pt}}r@{\hspace{5pt}}
  r@{\hspace{5pt}}r@{\hspace{5pt}}r@{\hspace{5pt}}r@{\hspace{5pt}}r
  @{\hspace{5pt}}r@{\hspace{5pt}}
  r@{\hspace{5pt}}r@{\hspace{5pt}}r@{\hspace{5pt}}r@{\hspace{5pt}}r
  @{\hspace{5pt}}r@{\hspace{5pt}}
  r@{\hspace{5pt}}r@{\hspace{5pt}}r@{\hspace{5pt}}r@{\hspace{5pt}}r
  @{\hspace{5pt}}r@{\hspace{5pt}}
  r@{\hspace{5pt}}r@{\hspace{5pt}}r@{\hspace{5pt}}r@{\hspace{5pt}}r
  @{\hspace{3pt}}}
    \hline
    \multicolumn{2}{c|}{Instance} &&
    \multicolumn{5}{c}{Binary} && 
    \multicolumn{5}{c}{Hybrid} && 
    \multicolumn{5}{c}{Unary} && 
    \multicolumn{5}{c}{One-hot} && 
    \multicolumn{5}{c}{Offset} \\
    \cline{4-8}
    \cline{10-14}
    \cline{16-20}
    \cline{22-26}
    \cline{28-32}
    $n$\_$d$\_id & Optimal &
    & $\lambda$ & Mean & Best & Gap & SR &  
    & $\lambda$ & Mean & Best & Gap & SR &  
    & $\lambda$ & Mean & Best & Gap & SR &  
    & $\lambda$ & Mean & Best & Gap & SR &  
    & $\lambda$ & Mean & Best & Gap & SR \\
    \hline \hline
  200\_50\_1 &372097 & &$2^{-6}$ &372097.0 &372097 & 0.000 &1.0 & &$2^{-6}$ &372097.0 &372097 & 0.000 &1.0 & &$2^{-6}$ &372097.0 &372097 & 0.000 &1.0 & &$2^{-6}$ &372097.0 &372097 & 0.000 &1.0 & &$2^{-6}$ &372097.0 &372097 & 0.000 &1.0 \\
  200\_50\_2 &211130 & & $2^{1}$ &210742.6 &211110 &9.5E-3 &0.0 & & $2^{1}$ &210684.5 &211090 & 0.019 &0.0 & & $2^{1}$ &210780.1 &211090 & 0.019 &0.0 & & $2^{2}$ &210615.9 &211087 & 0.020 &0.0 & & $2^{0}$ &210762.7 &211090 & 0.019 &0.0 \\
  200\_50\_3 &227185 & &$2^{-6}$ &227185.0 &227185 & 0.000 &1.0 & &$2^{-6}$ &227185.0 &227185 & 0.000 &1.0 & &$2^{-6}$ &227185.0 &227185 & 0.000 &1.0 & &$2^{-6}$ &227185.0 &227185 & 0.000 &1.0 & &$2^{-6}$ &227185.0 &227185 & 0.000 &1.0 \\
  200\_50\_4 &228572 & &$2^{-6}$ &228572.0 &228572 & 0.000 &1.0 & &$2^{-6}$ &228572.0 &228572 & 0.000 &1.0 & &$2^{-6}$ &228572.0 &228572 & 0.000 &1.0 & &$2^{-6}$ &228572.0 &228572 & 0.000 &1.0 & &$2^{-6}$ &228572.0 &228572 & 0.000 &1.0 \\
  200\_50\_5 &479651 & & $2^{2}$ &479110.6 &479651 & 0.000 &0.1 & &$2^{-6}$ &479451.0 &479451 & 0.042 &0.0 & &$2^{-6}$ &479451.0 &479451 & 0.042 &0.0 & & $2^{2}$ &479388.1 &479651 & 0.000 &0.1 & &$2^{-6}$ &479451.0 &479451 & 0.042 &0.0 \\
  200\_50\_6 &426777 & & $2^{2}$ &426554.7 &426686 & 0.021 &0.0 & & $2^{2}$ &426607.3 &426657 & 0.028 &0.0 & & $2^{3}$ &425383.2 &426720 & 0.013 &0.0 & & $2^{2}$ &426329.3 &426762 &3.5E-3 &0.0 & &$2^{-1}$ &426578.3 &426657 & 0.028 &0.0 \\
  200\_50\_7 &220890 & & $2^{0}$ &220757.5 &220890 & 0.000 &0.2 & &$2^{-1}$ &220798.4 &220890 & 0.000 &0.1 & & $2^{0}$ &220798.0 &220890 & 0.000 &0.2 & &$2^{-1}$ &220818.9 &220890 & 0.000 &0.2 & & $2^{0}$ &220730.0 &220890 & 0.000 &0.2 \\
  200\_50\_8 &317952 & &$2^{-3}$ &317952.0 &317952 & 0.000 &1.0 & &$2^{-3}$ &317952.0 &317952 & 0.000 &1.0 & &$2^{-3}$ &317952.0 &317952 & 0.000 &1.0 & &$2^{-3}$ &317952.0 &317952 & 0.000 &1.0 & &$2^{-3}$ &317952.0 &317952 & 0.000 &1.0 \\
  200\_50\_9 &104936 & &$2^{-6}$ &104936.0 &104936 & 0.000 &1.0 & &$2^{-6}$ &104936.0 &104936 & 0.000 &1.0 & &$2^{-6}$ &104936.0 &104936 & 0.000 &1.0 & &$2^{-6}$ &104936.0 &104936 & 0.000 &1.0 & &$2^{-6}$ &104936.0 &104936 & 0.000 &1.0 \\
 200\_50\_10 &284751 & &$2^{-6}$ &284741.0 &284741 &3.5E-3 &0.0 & & $2^{1}$ &284455.5 &284751 & 0.000 &0.1 & & $2^{0}$ &284726.6 &284745 &2.1E-3 &0.0 & & $2^{0}$ &284568.4 &284745 &2.1E-3 &0.0 & & $2^{0}$ &284469.4 &284751 & 0.000 &0.1 \\
  200\_75\_1 &442894 & & $2^{2}$ &442036.9 &442582 & 0.070 &0.0 & & $2^{1}$ &442446.4 &442894 & 0.000 &0.1 & & $2^{0}$ &442542.4 &442894 & 0.000 &0.2 & & $2^{2}$ &442049.5 &442862 &7.2E-3 &0.0 & & $2^{0}$ &442320.8 &442602 & 0.066 &0.0 \\
  200\_75\_2 &286643 & & $2^{2}$ &286070.4 &286643 & 0.000 &0.2 & & $2^{1}$ &286522.5 &286643 & 0.000 &0.1 & & $2^{1}$ &286600.9 &286643 & 0.000 &0.1 & & $2^{1}$ &286497.3 &286643 & 0.000 &0.1 & & $2^{0}$ &286596.7 &286643 & 0.000 &0.2 \\
  200\_75\_3 & 61924 & &$2^{-6}$ & 61924.0 & 61924 & 0.000 &1.0 & &$2^{-6}$ & 61924.0 & 61924 & 0.000 &1.0 & &$2^{-6}$ & 61924.0 & 61924 & 0.000 &1.0 & &$2^{-6}$ & 61924.0 & 61924 & 0.000 &1.0 & &$2^{-6}$ & 61924.0 & 61924 & 0.000 &1.0 \\
  200\_75\_4 &128351 & &$2^{-6}$ &128351.0 &128351 & 0.000 &1.0 & &$2^{-6}$ &128351.0 &128351 & 0.000 &1.0 & &$2^{-6}$ &128351.0 &128351 & 0.000 &1.0 & &$2^{-6}$ &128351.0 &128351 & 0.000 &1.0 & &$2^{-6}$ &128351.0 &128351 & 0.000 &1.0 \\
  200\_75\_5 &137885 & & $2^{3}$ &137836.2 &137885 & 0.000 &0.7 & & $2^{3}$ &137818.3 &137885 & 0.000 &0.5 & & $2^{2}$ &137797.3 &137885 & 0.000 &0.3 & & $2^{2}$ &137799.7 &137885 & 0.000 &0.4 & & $2^{1}$ &137804.5 &137885 & 0.000 &0.4 \\
  200\_75\_6 &229631 & & $2^{2}$ &229109.2 &229631 & 0.000 &0.4 & & $2^{2}$ &229158.3 &229631 & 0.000 &0.3 & & $2^{3}$ &229025.1 &229631 & 0.000 &0.3 & & $2^{0}$ &229219.2 &229631 & 0.000 &0.2 & & $2^{0}$ &229471.5 &229631 & 0.000 &0.8 \\
  200\_75\_7 &269887 & &$2^{-6}$ &269887.0 &269887 & 0.000 &1.0 & &$2^{-6}$ &269887.0 &269887 & 0.000 &1.0 & &$2^{-6}$ &269887.0 &269887 & 0.000 &1.0 & &$2^{-6}$ &269887.0 &269887 & 0.000 &1.0 & &$2^{-6}$ &269887.0 &269887 & 0.000 &1.0 \\
  200\_75\_8 &600858 & & $2^{1}$ &600777.1 &600858 & 0.000 &0.3 & & $2^{0}$ &600822.9 &600858 & 0.000 &0.3 & & $2^{1}$ &600832.0 &600858 & 0.000 &0.5 & & $2^{1}$ &600537.0 &600858 & 0.000 &0.4 & & $2^{1}$ &600805.1 &600858 & 0.000 &0.3 \\
  200\_75\_9 &516771 & & $2^{1}$ &516286.9 &516661 & 0.021 &0.0 & & $2^{2}$ &515933.9 &516661 & 0.021 &0.0 & & $2^{2}$ &516320.0 &516661 & 0.021 &0.0 & & $2^{1}$ &516247.3 &516655 & 0.022 &0.0 & & $2^{1}$ &516346.8 &516655 & 0.022 &0.0 \\
 200\_75\_10 &142694 & &$2^{-6}$ &142694.0 &142694 & 0.000 &1.0 & &$2^{-6}$ &142694.0 &142694 & 0.000 &1.0 & &$2^{-6}$ &142694.0 &142694 & 0.000 &1.0 & &$2^{-6}$ &142694.0 &142694 & 0.000 &1.0 & &$2^{-6}$ &142694.0 &142694 & 0.000 &1.0 \\
 200\_100\_1 &937149 & & $2^{2}$ &937076.0 &937149 & 0.000 &0.2 & & $2^{4}$ &935463.6 &937149 & 0.000 &0.3 & & $2^{4}$ &937066.9 &937149 & 0.000 &0.3 & & $2^{4}$ &937109.8 &937149 & 0.000 &0.5 & & $2^{4}$ &937069.0 &937149 & 0.000 &0.3 \\
 200\_100\_2 &303058 & & $2^{2}$ &302633.0 &302992 & 0.022 &0.0 & & $2^{2}$ &302537.4 &302992 & 0.022 &0.0 & & $2^{3}$ &302609.7 &303004 & 0.018 &0.0 & & $2^{4}$ &301944.2 &302992 & 0.022 &0.0 & & $2^{2}$ &302699.8 &303050 &2.6E-3 &0.0 \\
 200\_100\_3 & 29367 & & $2^{5}$ & 29286.0 & 29367 & 0.000 &0.1 & & $2^{5}$ & 29202.0 & 29367 & 0.000 &0.1 & & $2^{3}$ & 29303.1 & 29367 & 0.000 &0.1 & & $2^{6}$ & 27884.0 & 29367 & 0.000 &0.2 & & $2^{5}$ & 29303.1 & 29367 & 0.000 &0.1 \\
 200\_100\_4 &100838 & &$2^{-6}$ &100838.0 &100838 & 0.000 &1.0 & &$2^{-6}$ &100838.0 &100838 & 0.000 &1.0 & &$2^{-6}$ &100838.0 &100838 & 0.000 &1.0 & &$2^{-6}$ &100838.0 &100838 & 0.000 &1.0 & &$2^{-6}$ &100838.0 &100838 & 0.000 &1.0 \\
 200\_100\_5 &786635 & & $2^{1}$ &786455.3 &786490 & 0.018 &0.0 & & $2^{3}$ &784084.2 &786490 & 0.018 &0.0 & & $2^{1}$ &786483.6 &786490 & 0.018 &0.0 & & $2^{1}$ &786459.9 &786490 & 0.018 &0.0 & & $2^{1}$ &785903.0 &786627 &1.0E-3 &0.0 \\
 200\_100\_6 & 41171 & &$2^{-6}$ & 41171.0 & 41171 & 0.000 &1.0 & &$2^{-6}$ & 41171.0 & 41171 & 0.000 &1.0 & &$2^{-6}$ & 41171.0 & 41171 & 0.000 &1.0 & &$2^{-6}$ & 41171.0 & 41171 & 0.000 &1.0 & &$2^{-6}$ & 41171.0 & 41171 & 0.000 &1.0 \\
 200\_100\_7 &701094 & & $2^{2}$ &699205.4 &701094 & 0.000 &0.1 & & $2^{1}$ &700956.9 &701094 & 0.000 &0.1 & & $2^{2}$ &700449.5 &701094 & 0.000 &0.2 & & $2^{1}$ &701029.7 &701094 & 0.000 &0.5 & & $2^{1}$ &701005.9 &701094 & 0.000 &0.3 \\
 200\_100\_8 &782443 & & $2^{2}$ &781916.1 &782397 &5.9E-3 &0.0 & & $2^{2}$ &781864.4 &782397 &5.9E-3 &0.0 & & $2^{2}$ &781869.3 &782408 &4.5E-3 &0.0 & & $2^{2}$ &781829.2 &782408 &4.5E-3 &0.0 & & $2^{2}$ &781773.2 &782408 &4.5E-3 &0.0 \\
 200\_100\_9 &628992 & & $2^{2}$ &626603.8 &628992 & 0.000 &0.1 & & $2^{2}$ &628045.4 &628992 & 0.000 &0.1 & & $2^{1}$ &628873.4 &628992 & 0.000 &0.1 & & $2^{1}$ &628854.7 &628992 & 0.000 &0.2 & & $2^{1}$ &628427.9 &628992 & 0.000 &0.1 \\
200\_100\_10 &378442 & & $2^{2}$ &378056.4 &378240 & 0.053 &0.0 & & $2^{3}$ &377947.6 &378208 & 0.062 &0.0 & &$2^{-6}$ &378169.0 &378169 & 0.072 &0.0 & & $2^{2}$ &377887.9 &378375 & 0.018 &0.0 & &$2^{-6}$ &378169.0 &378169 & 0.072 &0.0 \\
  300\_25\_1 & 29140 & &$2^{-6}$ & 29140.0 & 29140 & 0.000 &1.0 & &$2^{-6}$ & 29140.0 & 29140 & 0.000 &1.0 & &$2^{-6}$ & 29140.0 & 29140 & 0.000 &1.0 & &$2^{-6}$ & 29140.0 & 29140 & 0.000 &1.0 & &$2^{-6}$ & 29140.0 & 29140 & 0.000 &1.0 \\
  300\_25\_2 &281990 & & $2^{0}$ &281884.5 &281990 & 0.000 &0.1 & & $2^{0}$ &281903.0 &281959 & 0.011 &0.0 & & $2^{0}$ &281839.1 &281990 & 0.000 &0.1 & & $2^{0}$ &281893.4 &281970 &7.1E-3 &0.0 & &$2^{-1}$ &281933.8 &281959 & 0.011 &0.0 \\
  300\_25\_3 &231075 & &$2^{-6}$ &231075.0 &231075 & 0.000 &1.0 & &$2^{-6}$ &231075.0 &231075 & 0.000 &1.0 & &$2^{-6}$ &231075.0 &231075 & 0.000 &1.0 & &$2^{-6}$ &231075.0 &231075 & 0.000 &1.0 & &$2^{-6}$ &231075.0 &231075 & 0.000 &1.0 \\
  300\_25\_4 &444759 & &$2^{-6}$ &444712.0 &444712 & 0.011 &0.0 & & $2^{0}$ &444670.7 &444759 & 0.000 &0.1 & &$2^{-1}$ &444615.7 &444725 &7.6E-3 &0.0 & & $2^{2}$ &444320.2 &444759 & 0.000 &0.1 & & $2^{0}$ &444586.0 &444759 & 0.000 &0.1 \\
  300\_25\_5 & 14988 & & $2^{3}$ & 14891.3 & 14988 & 0.000 &0.1 & & $2^{4}$ & 14883.3 & 14935 & 0.354 &0.0 & & $2^{4}$ & 14893.7 & 14988 & 0.000 &0.1 & & $2^{4}$ & 14885.8 & 14988 & 0.000 &0.2 & & $2^{0}$ & 14892.9 & 14988 & 0.000 &0.1 \\
  300\_25\_6 &269782 & & $2^{0}$ &269477.8 &269715 & 0.025 &0.0 & & $2^{0}$ &269676.7 &269782 & 0.000 &0.1 & & $2^{0}$ &269693.2 &269782 & 0.000 &0.2 & & $2^{0}$ &269663.4 &269782 & 0.000 &0.4 & & $2^{0}$ &269697.6 &269782 & 0.000 &0.2 \\
  300\_25\_7 &485263 & & $2^{1}$ &484762.6 &485232 &6.4E-3 &0.0 & & $2^{1}$ &484681.2 &485232 &6.4E-3 &0.0 & & $2^{1}$ &484848.1 &485232 &6.4E-3 &0.0 & & $2^{1}$ &484686.2 &485232 &6.4E-3 &0.0 & & $2^{0}$ &484645.6 &485197 & 0.014 &0.0 \\
  300\_25\_8 &  9343 & &$2^{-6}$ &  9343.0 &  9343 & 0.000 &1.0 & &$2^{-6}$ &  9343.0 &  9343 & 0.000 &1.0 & &$2^{-6}$ &  9343.0 &  9343 & 0.000 &1.0 & &$2^{-6}$ &  9343.0 &  9343 & 0.000 &1.0 & &$2^{-6}$ &  9343.0 &  9343 & 0.000 &1.0 \\
  300\_25\_9 &250761 & & $2^{2}$ &250247.4 &250761 & 0.000 &0.1 & &$2^{-1}$ &250709.6 &250761 & 0.000 &0.1 & &$2^{-1}$ &250712.6 &250751 &4.0E-3 &0.0 & & $2^{0}$ &250682.0 &250761 & 0.000 &0.2 & & $2^{0}$ &250648.1 &250761 & 0.000 &0.2 \\
 300\_25\_10 &383377 & &$2^{-6}$ &383377.0 &383377 & 0.000 &1.0 & &$2^{-6}$ &383377.0 &383377 & 0.000 &1.0 & &$2^{-6}$ &383377.0 &383377 & 0.000 &1.0 & &$2^{-6}$ &383377.0 &383377 & 0.000 &1.0 & &$2^{-6}$ &383377.0 &383377 & 0.000 &1.0 \\
  300\_50\_1 &513379 & & $2^{1}$ &513154.1 &513379 & 0.000 &0.3 & &$2^{-6}$ &513361.0 &513361 &3.5E-3 &0.0 & & $2^{3}$ &511778.4 &513379 & 0.000 &0.1 & & $2^{1}$ &513167.2 &513379 & 0.000 &0.2 & & $2^{1}$ &513172.5 &513379 & 0.000 &0.2 \\
  300\_50\_2 &105543 & &$2^{-6}$ &105543.0 &105543 & 0.000 &1.0 & &$2^{-6}$ &105543.0 &105543 & 0.000 &1.0 & &$2^{-6}$ &105543.0 &105543 & 0.000 &1.0 & &$2^{-6}$ &105543.0 &105543 & 0.000 &1.0 & &$2^{-6}$ &105543.0 &105543 & 0.000 &1.0 \\
  300\_50\_3 &875788 & & $2^{1}$ &875017.9 &875788 & 0.000 &0.1 & & $2^{2}$ &874770.7 &875788 & 0.000 &0.1 & & $2^{0}$ &874618.8 &875577 & 0.024 &0.0 & & $2^{1}$ &874958.2 &875788 & 0.000 &0.1 & & $2^{1}$ &874887.9 &875627 & 0.018 &0.0 \\
  300\_50\_4 &307124 & &$2^{-6}$ &307124.0 &307124 & 0.000 &1.0 & &$2^{-6}$ &307124.0 &307124 & 0.000 &1.0 & &$2^{-6}$ &307124.0 &307124 & 0.000 &1.0 & &$2^{-6}$ &307124.0 &307124 & 0.000 &1.0 & &$2^{-6}$ &307124.0 &307124 & 0.000 &1.0 \\
  300\_50\_5 &727820 & & $2^{3}$ &725663.1 &727820 & 0.000 &0.1 & & $2^{1}$ &727586.4 &727820 & 0.000 &0.2 & & $2^{0}$ &727594.3 &727820 & 0.000 &0.1 & & $2^{0}$ &727614.8 &727820 & 0.000 &0.1 & & $2^{1}$ &727431.3 &727820 & 0.000 &0.2 \\
  300\_50\_6 &734053 & & $2^{1}$ &733858.5 &734053 & 0.000 &0.2 & & $2^{1}$ &733956.0 &734053 & 0.000 &0.1 & & $2^{0}$ &733917.8 &734053 & 0.000 &0.1 & & $2^{0}$ &733972.8 &734029 &3.3E-3 &0.0 & & $2^{0}$ &734000.3 &734053 & 0.000 &0.2 \\
  300\_50\_7 & 43595 & & $2^{2}$ & 43523.3 & 43595 & 0.000 &0.1 & & $2^{4}$ & 43464.8 & 43595 & 0.000 &0.3 & & $2^{4}$ & 43502.8 & 43595 & 0.000 &0.2 & & $2^{2}$ & 43523.3 & 43595 & 0.000 &0.1 & & $2^{4}$ & 43358.5 & 43595 & 0.000 &0.1 \\
  300\_50\_8 &767977 & & $2^{0}$ &767772.4 &767977 & 0.000 &0.1 & & $2^{2}$ &767522.0 &767977 & 0.000 &0.1 & & $2^{0}$ &767759.1 &767977 & 0.000 &0.1 & & $2^{1}$ &767688.1 &767960 &2.2E-3 &0.0 & & $2^{1}$ &767671.8 &767977 & 0.000 &0.2 \\
  300\_50\_9 &761351 & &$2^{-6}$ &761351.0 &761351 & 0.000 &1.0 & &$2^{-6}$ &761351.0 &761351 & 0.000 &1.0 & &$2^{-6}$ &761351.0 &761351 & 0.000 &1.0 & &$2^{-6}$ &761351.0 &761351 & 0.000 &1.0 & &$2^{-6}$ &761351.0 &761351 & 0.000 &1.0 \\
 300\_50\_10 &996070 & &$2^{-6}$ &996070.0 &996070 & 0.000 &1.0 & &$2^{-6}$ &996070.0 &996070 & 0.000 &1.0 & &$2^{-6}$ &996070.0 &996070 & 0.000 &1.0 & &$2^{-6}$ &996070.0 &996070 & 0.000 &1.0 & &$2^{-6}$ &996070.0 &996070 & 0.000 &1.0 \\  
    \hline
  \end{tabular}
}
\end{table}
\end{landscape}

\begin{figure*}[t]
     \centering
     \subfloat[$n=100, d=25$]{
         \includegraphics[width=0.31\textwidth]{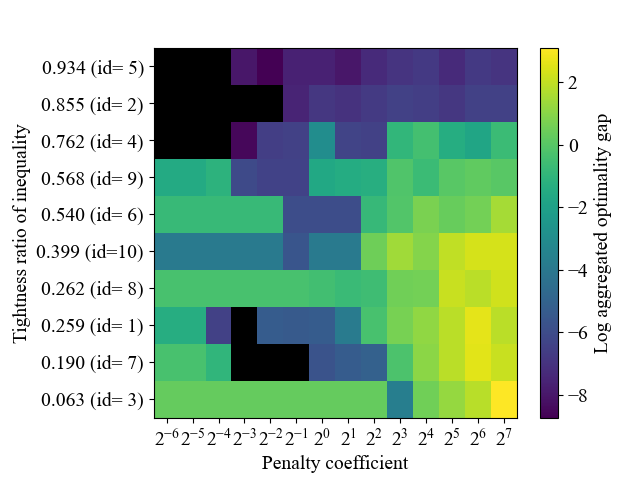}
     }
     \hfil
     \subfloat[$n=100, d=50$]{
         \includegraphics[width=0.31\textwidth]{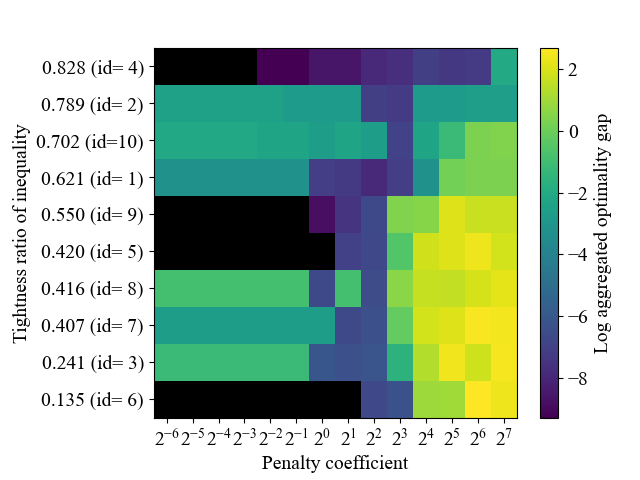}
     }
     \hfil
     \subfloat[$n=100, d=75$]{
         \includegraphics[width=0.31\textwidth]{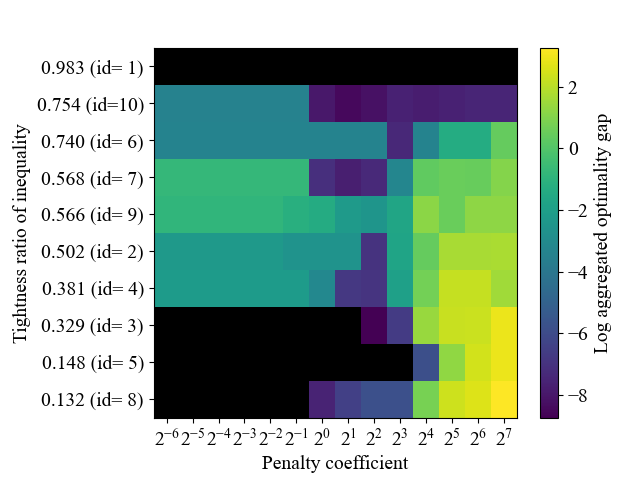}
     }
     \hfil
     \subfloat[$n=100, d=100$]{
         \includegraphics[width=0.31\textwidth]{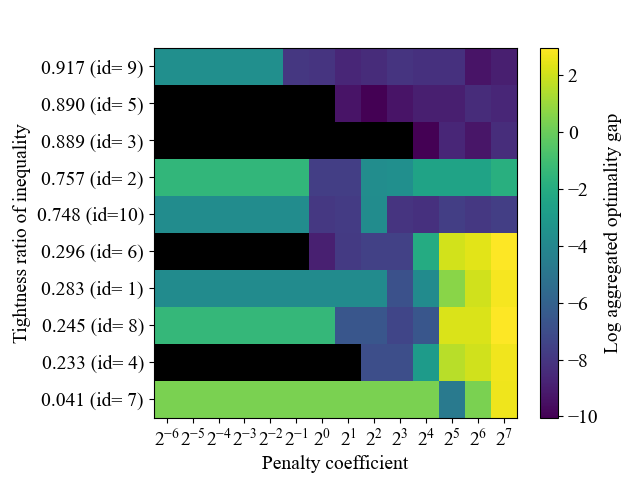}
     }
     \hfil
     \subfloat[$n=200, d=25$]{
         \includegraphics[width=0.31\textwidth]{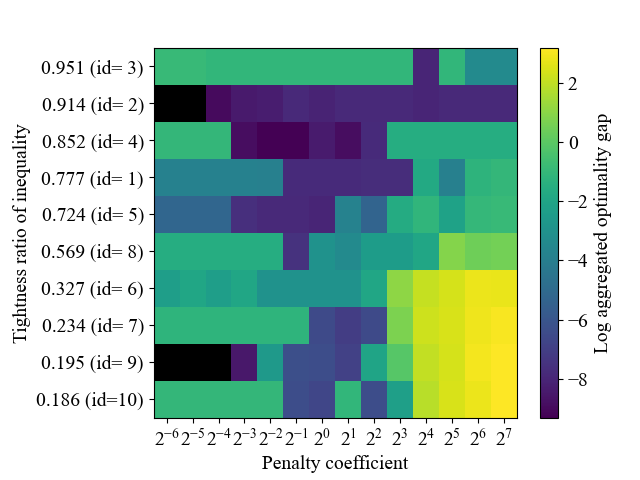}
     }
     \hfil
     \subfloat[$n=200, d=50$]{
         \includegraphics[width=0.31\textwidth]{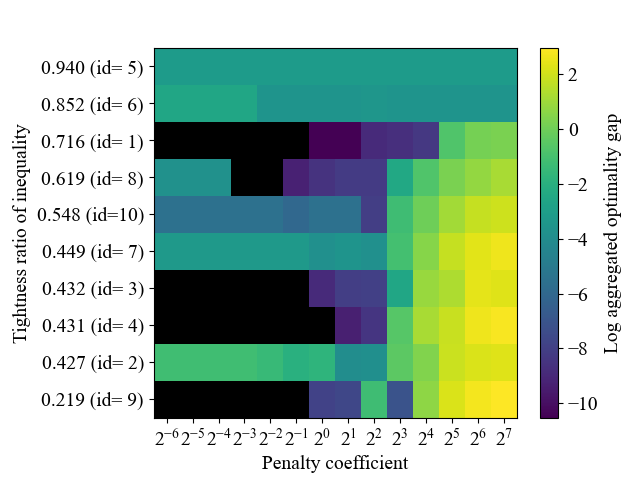}
     }
     \hfil
     \subfloat[$n=200, d=75$]{
         \includegraphics[width=0.31\textwidth]{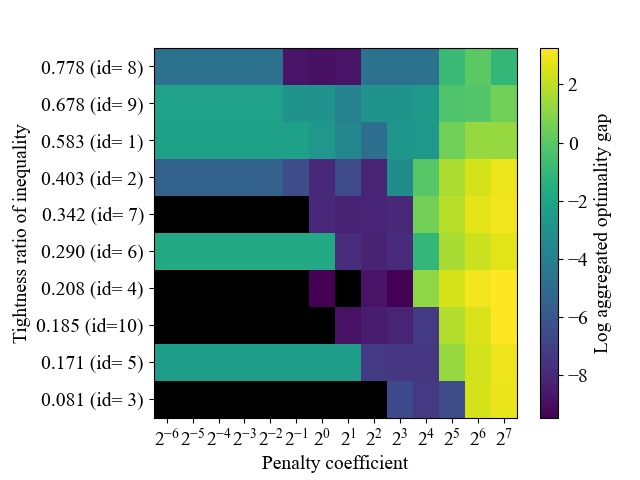}
     }
     \hfil
     \subfloat[$n=200, d=100$]{
         \includegraphics[width=0.31\textwidth]{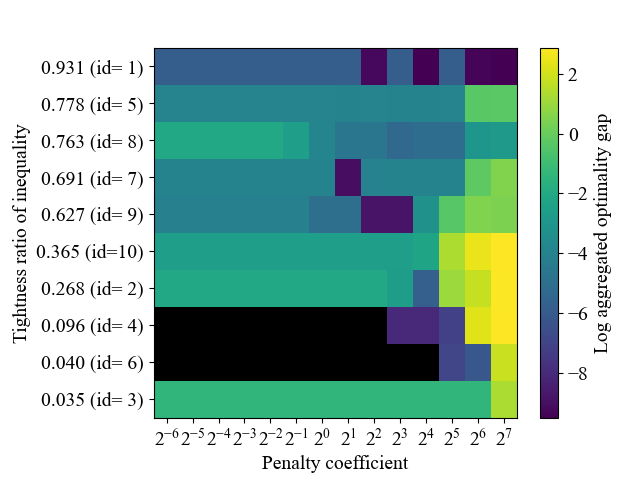}
     }
     \hfil
     \subfloat[$n=300, d=25$]{
         \includegraphics[width=0.31\textwidth]{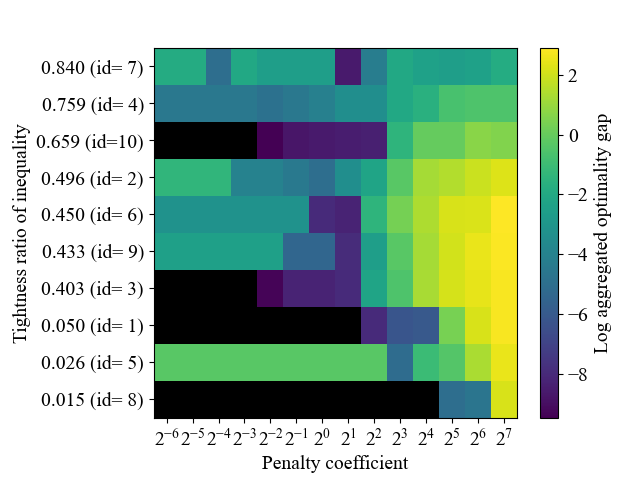}
     }
     \hfil
     \subfloat[$n=300, d=50$]{
         \includegraphics[width=0.31\textwidth]{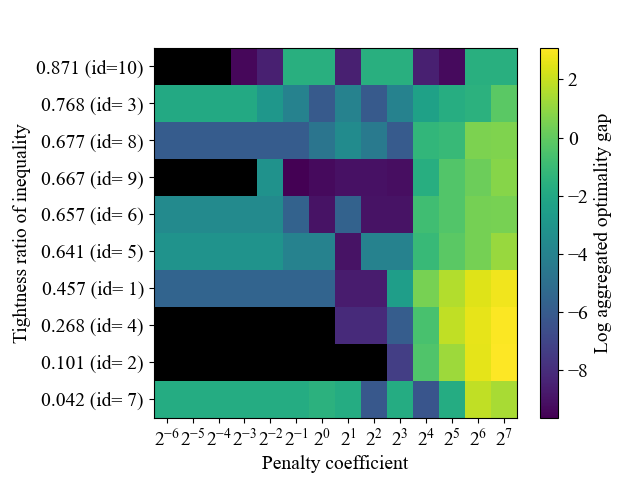}
     }
    \caption{
        Aggregated optimality gap for SA-RI on all 100 medium-sized QKP instances.
        Color bar for gap is shown in log scale. Zero gap (i.e. 100\% success rate) is shown in black. Instances are sorted with tightness ratio $\alpha = C/\sum_i w_i$ for each combination of problem size $n$ and density $d$.
    }
    \label{fig:heatmap_small_sa}
\end{figure*}


\begin{landscape}

\begin{table}[t]
  \caption{Full Results of Ising Machine on Medium-sized Instances.}
  \label{tab:full_ae_1}
  \centering
\resizebox{\linewidth}{!}{
  \begin{tabular}{@{\hspace{1pt}}c@{\hspace{4pt}}r@{\hspace{4pt}}|
  r@{\hspace{6pt}}r
  @{\hspace{6pt}}r@{\hspace{6pt}} 
  r@{\hspace{6pt}}r@{\hspace{6pt}}r@{\hspace{6pt}}r@{\hspace{6pt}}r@{\hspace{6pt}}r 
  @{\hspace{6pt}}r@{\hspace{6pt}} 
  r@{\hspace{6pt}}r@{\hspace{6pt}}r@{\hspace{6pt}}r@{\hspace{6pt}}r
  @{\hspace{6pt}}r@{\hspace{6pt}} 
  r@{\hspace{6pt}}r@{\hspace{6pt}}r@{\hspace{6pt}}r@{\hspace{6pt}}r
  @{\hspace{6pt}}r@{\hspace{6pt}}
  r@{\hspace{6pt}}r@{\hspace{6pt}}r@{\hspace{6pt}}r@{\hspace{6pt}}r
  @{\hspace{3pt}}}
    \hline
    \multicolumn{2}{c|}{Instance} & 
    \multicolumn{2}{c}{Gurobi} &&
    \multicolumn{6}{c}{AE} && 
    \multicolumn{5}{c}{AE-R} && 
    \multicolumn{5}{c}{AE-I} && 
    \multicolumn{5}{c}{AE-RI} \\
    \cline{6-11}
    \cline{13-17}
    \cline{19-23}
    \cline{25-29}
    $n$\_$d$\_id & Optimal & Score & Gap &  & $a$ & FS & Mean & Best & Gap & SR &  & $a$ & Mean & Best & Gap & SR &  & $a$ & Mean & Best & Gap & SR &  & $a$ & Mean & Best & Gap & SR \\
    \hline \hline
  100\_25\_1 &  18558 &  18558 & 0.000 &  &  5 & 1.0 &  18557.70 &  18558 &  0.000 & 0.9 &  &  5 &  18557.70 &  18558 &  0.000 & 0.9 &  &  3 &  18558.00 &  18558 &  0.000 & 1.0 &  &  3 &  18558.00 &  18558 &  0.000 & 1.0 \\
  100\_25\_2 &  56525 &  56525 & 0.000 &  &  2 & 1.0 &  56525.00 &  56525 &  0.000 & 1.0 &  &  2 &  56525.00 &  56525 &  0.000 & 1.0 &  &  2 &  56525.00 &  56525 &  0.000 & 1.0 &  &  2 &  56525.00 &  56525 &  0.000 & 1.0 \\
  100\_25\_3 &   3752 &   3752 & 0.000 &  & 10 & 1.0 &   3752.00 &   3752 &  0.000 & 1.0 &  &  2 &   3752.00 &   3752 &  0.000 & 1.0 &  &  8 &   3752.00 &   3752 &  0.000 & 1.0 &  &  2 &   3752.00 &   3752 &  0.000 & 1.0 \\
  100\_25\_4 &  50382 &  50382 & 0.000 &  &  3 & 1.0 &  50382.00 &  50382 &  0.000 & 1.0 &  &  3 &  50382.00 &  50382 &  0.000 & 1.0 &  &  3 &  50382.00 &  50382 &  0.000 & 1.0 &  &  3 &  50382.00 &  50382 &  0.000 & 1.0 \\
  100\_25\_5 &  61494 &  61494 & 0.000 &  &  4 & 1.0 &  61494.00 &  61494 &  0.000 & 1.0 &  &  4 &  61494.00 &  61494 &  0.000 & 1.0 &  &  2 &  61494.00 &  61494 &  0.000 & 1.0 &  &  1 &  61494.00 &  61494 &  0.000 & 1.0 \\
  100\_25\_6 &  36360 &  36360 & 0.000 &  &  6 & 1.0 &  36360.00 &  36360 &  0.000 & 1.0 &  &  2 &  36360.00 &  36360 &  0.000 & 1.0 &  &  6 &  36360.00 &  36360 &  0.000 & 1.0 &  &  2 &  36360.00 &  36360 &  0.000 & 1.0 \\
  100\_25\_7 &  14657 &  14657 & 0.000 &  &  3 & 1.0 &  14657.00 &  14657 &  0.000 & 1.0 &  &  3 &  14657.00 &  14657 &  0.000 & 1.0 &  &  2 &  14657.00 &  14657 &  0.000 & 1.0 &  &  1 &  14657.00 &  14657 &  0.000 & 1.0 \\
  100\_25\_8 &  20452 &  20452 & 0.000 &  &  3 & 1.0 &  20452.00 &  20452 &  0.000 & 1.0 &  &  3 &  20452.00 &  20452 &  0.000 & 1.0 &  &  3 &  20452.00 &  20452 &  0.000 & 1.0 &  &  3 &  20452.00 &  20452 &  0.000 & 1.0 \\
  100\_25\_9 &  35438 &  35438 & 0.000 &  &  5 & 1.0 &  35438.00 &  35438 &  0.000 & 1.0 &  &  5 &  35438.00 &  35438 &  0.000 & 1.0 &  &  5 &  35438.00 &  35438 &  0.000 & 1.0 &  &  5 &  35438.00 &  35438 &  0.000 & 1.0 \\
 100\_25\_10 &  24930 &  24930 & 0.000 &  &  3 & 1.0 &  24930.00 &  24930 &  0.000 & 1.0 &  &  3 &  24930.00 &  24930 &  0.000 & 1.0 &  &  3 &  24930.00 &  24930 &  0.000 & 1.0 &  &  3 &  24930.00 &  24930 &  0.000 & 1.0 \\
  100\_50\_1 &  83742 &  83742 & 0.000 &  &  3 & 1.0 &  83742.00 &  83742 &  0.000 & 1.0 &  &  3 &  83742.00 &  83742 &  0.000 & 1.0 &  &  3 &  83742.00 &  83742 &  0.000 & 1.0 &  &  2 &  83742.00 &  83742 &  0.000 & 1.0 \\
  100\_50\_2 & 104856 & 104856 & 0.000 &  &  3 & 1.0 & 104856.00 & 104856 &  0.000 & 1.0 &  &  3 & 104856.00 & 104856 &  0.000 & 1.0 &  &  3 & 104856.00 & 104856 &  0.000 & 1.0 &  &  3 & 104856.00 & 104856 &  0.000 & 1.0 \\
  100\_50\_3 &  34006 &  34006 & 0.000 &  &  4 & 1.0 &  34006.00 &  34006 &  0.000 & 1.0 &  &  4 &  34006.00 &  34006 &  0.000 & 1.0 &  &  4 &  34006.00 &  34006 &  0.000 & 1.0 &  &  3 &  34006.00 &  34006 &  0.000 & 1.0 \\
  100\_50\_4 & 105996 & 105996 & 0.000 &  &  2 & 1.0 & 105996.00 & 105996 &  0.000 & 1.0 &  &  2 & 105996.00 & 105996 &  0.000 & 1.0 &  &  2 & 105996.00 & 105996 &  0.000 & 1.0 &  &  1 & 105996.00 & 105996 &  0.000 & 1.0 \\
  100\_50\_5 &  56464 &  56464 & 0.000 &  &  4 & 1.0 &  56464.00 &  56464 &  0.000 & 1.0 &  &  4 &  56464.00 &  56464 &  0.000 & 1.0 &  &  3 &  56464.00 &  56464 &  0.000 & 1.0 &  &  3 &  56464.00 &  56464 &  0.000 & 1.0 \\
  100\_50\_6 &  16083 &  16083 & 0.000 &  &  2 & 1.0 &  16083.00 &  16083 &  0.000 & 1.0 &  &  2 &  16083.00 &  16083 &  0.000 & 1.0 &  &  2 &  16083.00 &  16083 &  0.000 & 1.0 &  &  1 &  16083.00 &  16083 &  0.000 & 1.0 \\
  100\_50\_7 &  52819 &  52819 & 0.000 &  &  2 & 1.0 &  52819.00 &  52819 &  0.000 & 1.0 &  &  2 &  52819.00 &  52819 &  0.000 & 1.0 &  &  2 &  52819.00 &  52819 &  0.000 & 1.0 &  &  2 &  52819.00 &  52819 &  0.000 & 1.0 \\
  100\_50\_8 &  54246 &  54246 & 0.000 &  &  4 & 1.0 &  54246.00 &  54246 &  0.000 & 1.0 &  &  4 &  54246.00 &  54246 &  0.000 & 1.0 &  &  4 &  54246.00 &  54246 &  0.000 & 1.0 &  &  4 &  54246.00 &  54246 &  0.000 & 1.0 \\
  100\_50\_9 &  68974 &  68974 & 0.000 &  &  3 & 1.0 &  68974.00 &  68974 &  0.000 & 1.0 &  &  3 &  68974.00 &  68974 &  0.000 & 1.0 &  &  3 &  68974.00 &  68974 &  0.000 & 1.0 &  &  2 &  68974.00 &  68974 &  0.000 & 1.0 \\
 100\_50\_10 &  88634 &  88634 & 0.000 &  &  9 & 1.0 &  88607.00 &  88634 &  0.000 & 0.5 &  &  9 &  88607.00 &  88634 &  0.000 & 0.5 &  &  8 &  88613.70 &  88634 &  0.000 & 0.7 &  &  8 &  88613.70 &  88634 &  0.000 & 0.7 \\
  100\_75\_1 & 189137 & 189137 & 0.000 &  &  3 & 1.0 & 189089.00 & 189137 &  0.000 & 0.9 &  &  1 & 189137.00 & 189137 &  0.000 & 1.0 &  &  3 & 189137.00 & 189137 &  0.000 & 1.0 &  &  1 & 189137.00 & 189137 &  0.000 & 1.0 \\
  100\_75\_2 &  95074 &  95074 & 0.000 &  &  4 & 1.0 &  95032.90 &  95074 &  0.000 & 0.5 &  &  4 &  95032.90 &  95074 &  0.000 & 0.5 &  &  5 &  95064.80 &  95074 &  0.000 & 0.8 &  &  5 &  95064.80 &  95074 &  0.000 & 0.8 \\
  100\_75\_3 &  62098 &  62098 & 0.000 &  &  2 & 1.0 &  62098.00 &  62098 &  0.000 & 1.0 &  &  2 &  62098.00 &  62098 &  0.000 & 1.0 &  &  2 &  62098.00 &  62098 &  0.000 & 1.0 &  &  1 &  62098.00 &  62098 &  0.000 & 1.0 \\
  100\_75\_4 &  72245 &  72245 & 0.000 &  &  7 & 1.0 &  72232.20 &  72245 &  0.000 & 0.5 &  &  7 &  72232.20 &  72245 &  0.000 & 0.5 &  &  9 &  72244.50 &  72245 &  0.000 & 0.9 &  &  9 &  72244.50 &  72245 &  0.000 & 0.9 \\
  100\_75\_5 &  27616 &  27616 & 0.000 &  &  3 & 1.0 &  27616.00 &  27616 &  0.000 & 1.0 &  &  1 &  27616.00 &  27616 &  0.000 & 1.0 &  &  2 &  27616.00 &  27616 &  0.000 & 1.0 &  &  1 &  27616.00 &  27616 &  0.000 & 1.0 \\
  100\_75\_6 & 145273 & 145273 & 0.000 &  &  2 & 1.0 & 145273.00 & 145273 &  0.000 & 1.0 &  &  2 & 145273.00 & 145273 &  0.000 & 1.0 &  &  2 & 145273.00 & 145273 &  0.000 & 1.0 &  &  2 & 145273.00 & 145273 &  0.000 & 1.0 \\
  100\_75\_7 & 110979 & 110979 & 0.000 &  &  6 & 1.0 & 110960.70 & 110979 &  0.000 & 0.5 &  &  6 & 110960.70 & 110979 &  0.000 & 0.5 &  &  7 & 110977.20 & 110979 &  0.000 & 0.8 &  &  7 & 110977.20 & 110979 &  0.000 & 0.8 \\
  100\_75\_8 &  19570 &  19570 & 0.000 &  &  2 & 1.0 &  19570.00 &  19570 &  0.000 & 1.0 &  &  2 &  19570.00 &  19570 &  0.000 & 1.0 &  &  2 &  19570.00 &  19570 &  0.000 & 1.0 &  &  1 &  19570.00 &  19570 &  0.000 & 1.0 \\
  100\_75\_9 & 104341 & 104341 & 0.000 &  &  8 & 1.0 & 104262.50 & 104341 &  0.000 & 0.3 &  &  8 & 104262.50 & 104341 &  0.000 & 0.3 &  &  6 & 104328.10 & 104341 &  0.000 & 0.8 &  &  6 & 104328.10 & 104341 &  0.000 & 0.8 \\
 100\_75\_10 & 143740 & 143740 & 0.000 &  &  8 & 1.0 & 143702.40 & 143740 &  0.000 & 0.7 &  &  8 & 143702.40 & 143740 &  0.000 & 0.7 &  &  6 & 143740.00 & 143740 &  0.000 & 1.0 &  &  1 & 143740.00 & 143740 &  0.000 & 1.0 \\
 100\_100\_1 &  81978 &  81978 & 0.000 &  &  5 & 1.0 &  81978.00 &  81978 &  0.000 & 1.0 &  &  5 &  81978.00 &  81978 &  0.000 & 1.0 &  &  5 &  81978.00 &  81978 &  0.000 & 1.0 &  &  5 &  81978.00 &  81978 &  0.000 & 1.0 \\
 100\_100\_2 & 190424 & 190424 & 0.000 &  &  4 & 1.0 & 190410.40 & 190424 &  0.000 & 0.8 &  &  4 & 190410.40 & 190424 &  0.000 & 0.8 &  &  4 & 190416.20 & 190424 &  0.000 & 0.8 &  &  4 & 190416.20 & 190424 &  0.000 & 0.8 \\
 100\_100\_3 & 225434 & 225434 & 0.000 &  &  3 & 1.0 & 225412.40 & 225434 &  0.000 & 0.7 &  &  3 & 225412.40 & 225434 &  0.000 & 0.7 &  &  3 & 225434.00 & 225434 &  0.000 & 1.0 &  &  2 & 225434.00 & 225434 &  0.000 & 1.0 \\
 100\_100\_4 &  63028 &  63028 & 0.000 &  &  3 & 1.0 &  63028.00 &  63028 &  0.000 & 1.0 &  &  3 &  63028.00 &  63028 &  0.000 & 1.0 &  &  3 &  63028.00 &  63028 &  0.000 & 1.0 &  &  1 &  63028.00 &  63028 &  0.000 & 1.0 \\
 100\_100\_5 & 230076 & 230076 & 0.000 &  &  3 & 1.0 & 229861.00 & 230076 &  0.000 & 0.7 &  &  3 & 229861.00 & 230076 &  0.000 & 0.7 &  &  3 & 230076.00 & 230076 &  0.000 & 1.0 &  &  1 & 230076.00 & 230076 &  0.000 & 1.0 \\
 100\_100\_6 &  74358 &  74358 & 0.000 &  &  3 & 1.0 &  74358.00 &  74358 &  0.000 & 1.0 &  &  3 &  74358.00 &  74358 &  0.000 & 1.0 &  &  3 &  74358.00 &  74358 &  0.000 & 1.0 &  &  3 &  74358.00 &  74358 &  0.000 & 1.0 \\
 100\_100\_7 &  10330 &  10330 & 0.000 &  &  4 & 1.0 &  10330.00 &  10330 &  0.000 & 1.0 &  &  4 &  10330.00 &  10330 &  0.000 & 1.0 &  &  4 &  10330.00 &  10330 &  0.000 & 1.0 &  &  4 &  10330.00 &  10330 &  0.000 & 1.0 \\
 100\_100\_8 &  62582 &  62582 & 0.000 &  &  3 & 1.0 &  62582.00 &  62582 &  0.000 & 1.0 &  &  3 &  62582.00 &  62582 &  0.000 & 1.0 &  &  3 &  62582.00 &  62582 &  0.000 & 1.0 &  &  1 &  62582.00 &  62582 &  0.000 & 1.0 \\
 100\_100\_9 & 232754 & 232754 & 0.000 &  &  7 & 1.0 & 232270.00 & 232754 &  0.000 & 0.2 &  &  7 & 232270.00 & 232754 &  0.000 & 0.2 &  &  4 & 232754.00 & 232754 &  0.000 & 1.0 &  &  2 & 232754.00 & 232754 &  0.000 & 1.0 \\
100\_100\_10 & 193262 & 193262 & 0.000 &  &  7 & 1.0 & 193246.10 & 193262 &  0.000 & 0.7 &  &  7 & 193246.10 & 193262 &  0.000 & 0.7 &  &  4 & 193262.00 & 193262 &  0.000 & 1.0 &  &  2 & 193262.00 & 193262 &  0.000 & 1.0 \\
  200\_25\_1 & 204441 & 204441 & 0.000 &  &  5 & 1.0 & 200560.60 & 204401 &  0.020 & 0.0 &  &  5 & 204006.70 & 204401 &  0.020 & 0.0 &  &  7 & 204351.60 & 204441 &  0.000 & 0.4 &  &  7 & 204397.30 & 204441 &  0.000 & 0.7 \\
  200\_25\_2 & 239573 & 239573 & 0.000 &  & 10 & 1.0 & 237686.30 & 239573 &  0.000 & 0.1 &  &  3 & 239511.70 & 239573 &  0.000 & 0.2 &  & 10 & 239568.50 & 239573 &  0.000 & 0.7 &  &  3 & 239570.00 & 239573 &  0.000 & 0.8 \\
  200\_25\_3 & 245463 & 245463 & 0.000 &  &  8 & 1.0 & 242063.30 & 245463 &  0.000 & 0.3 &  &  8 & 242063.30 & 245463 &  0.000 & 0.3 &  &  9 & 245338.30 & 245463 &  0.000 & 0.4 &  &  6 & 245119.50 & 245463 &  0.000 & 0.5 \\
  200\_25\_4 & 222361 & 222361 & 0.000 &  &  6 & 1.0 & 220185.30 & 222361 &  0.000 & 0.3 &  &  4 & 222134.60 & 222361 &  0.000 & 0.8 &  &  6 & 222323.90 & 222361 &  0.000 & 0.8 &  &  6 & 222361.00 & 222361 &  0.000 & 1.0 \\
  200\_25\_5 & 187324 & 187324 & 0.000 &  &  8 & 1.0 & 186980.30 & 187316 & 4.3E-3 & 0.0 &  &  4 & 187130.00 & 187324 &  0.000 & 0.1 &  &  7 & 187310.80 & 187324 &  0.000 & 0.3 &  &  5 & 187318.10 & 187324 &  0.000 & 0.3 \\
  200\_25\_6 &  80351 &  80351 & 0.000 &  &  5 & 1.0 &  80227.90 &  80351 &  0.000 & 0.5 &  &  5 &  80271.00 &  80351 &  0.000 & 0.5 &  &  5 &  80309.10 &  80351 &  0.000 & 0.5 &  &  5 &  80312.90 &  80351 &  0.000 & 0.5 \\
  200\_25\_7 &  59036 &  59036 & 0.000 &  &  7 & 1.0 &  59029.20 &  59036 &  0.000 & 0.8 &  &  7 &  59029.20 &  59036 &  0.000 & 0.8 &  &  9 &  59036.00 &  59036 &  0.000 & 1.0 &  &  9 &  59036.00 &  59036 &  0.000 & 1.0 \\
  200\_25\_8 & 149433 & 149433 & 0.000 &  &  9 & 1.0 & 149152.30 & 149407 &  0.017 & 0.0 &  &  9 & 149152.30 & 149407 &  0.017 & 0.0 &  &  8 & 149394.10 & 149433 &  0.000 & 0.4 &  &  8 & 149394.10 & 149433 &  0.000 & 0.4 \\
  200\_25\_9 &  49366 &  49366 & 0.000 &  &  7 & 1.0 &  49363.20 &  49366 &  0.000 & 0.8 &  &  7 &  49363.20 &  49366 &  0.000 & 0.8 &  &  7 &  49366.00 &  49366 &  0.000 & 1.0 &  &  7 &  49366.00 &  49366 &  0.000 & 1.0 \\
 200\_25\_10 &  48459 &  48459 & 0.000 &  &  4 & 1.0 &  48459.00 &  48459 &  0.000 & 1.0 &  &  4 &  48459.00 &  48459 &  0.000 & 1.0 &  &  4 &  48459.00 &  48459 &  0.000 & 1.0 &  &  4 &  48459.00 &  48459 &  0.000 & 1.0 \\
 \hline
  \end{tabular}
}
\end{table}
\end{landscape}

\addtocounter{table}{-1}

\begin{landscape}
\begin{table}[t]
  \caption{Full Results of Ising Machine on Medium-sized Instances (continued).}
  \label{tab:full_ae_2}
  \centering
\resizebox{\linewidth}{!}{
  \begin{tabular}{@{\hspace{1pt}}c@{\hspace{4pt}}r@{\hspace{4pt}}|
  r@{\hspace{6pt}}r
  @{\hspace{6pt}}r@{\hspace{6pt}} 
  r@{\hspace{6pt}}r@{\hspace{6pt}}r@{\hspace{6pt}}r@{\hspace{6pt}}r@{\hspace{6pt}}r 
  @{\hspace{6pt}}r@{\hspace{6pt}} 
  r@{\hspace{6pt}}r@{\hspace{6pt}}r@{\hspace{6pt}}r@{\hspace{6pt}}r
  @{\hspace{6pt}}r@{\hspace{6pt}} 
  r@{\hspace{6pt}}r@{\hspace{6pt}}r@{\hspace{6pt}}r@{\hspace{6pt}}r
  @{\hspace{6pt}}r@{\hspace{6pt}}
  r@{\hspace{6pt}}r@{\hspace{6pt}}r@{\hspace{6pt}}r@{\hspace{6pt}}r
  @{\hspace{3pt}}}
    \hline
    \multicolumn{2}{c|}{Instance} & 
    \multicolumn{2}{c}{Gurobi} &&
    \multicolumn{6}{c}{AE} && 
    \multicolumn{5}{c}{AE-R} && 
    \multicolumn{5}{c}{AE-I} && 
    \multicolumn{5}{c}{AE-RI} \\
    \cline{6-11}
    \cline{13-17}
    \cline{19-23}
    \cline{25-29}
    $n$\_$d$\_id & Optimal & Score & Gap &  & $a$ & FS & Mean & Best & Gap & SR &  & $a$ & Mean & Best & Gap & SR &  & $a$ & Mean & Best & Gap & SR &  & $a$ & Mean & Best & Gap & SR \\
    \hline \hline
  200\_50\_1 & 372097 & 372097 & 0.000 &  &  4 & 1.0 & 372097.00 & 372097 &  0.000 & 1.0 &  &  3 & 372097.00 & 372097 &  0.000 & 1.0 &  &  4 & 372097.00 & 372097 &  0.000 & 1.0 &  &  2 & 372097.00 & 372097 &  0.000 & 1.0 \\
  200\_50\_2 & 211130 & 211130 & 0.000 &  &  4 & 0.6 & 203100.17 & 211122 & 3.8E-3 & 0.0 &  &  4 & 210236.80 & 211122 & 3.8E-3 & 0.0 &  & 10 & 210992.40 & 211130 &  0.000 & 0.1 &  &  3 & 211052.50 & 211130 &  0.000 & 0.1 \\
  200\_50\_3 & 227185 & 227185 & 0.000 &  &  9 & 1.0 & 226925.00 & 227185 &  0.000 & 0.1 &  &  9 & 226925.00 & 227185 &  0.000 & 0.1 &  &  6 & 227167.10 & 227185 &  0.000 & 0.9 &  &  1 & 227185.00 & 227185 &  0.000 & 1.0 \\
  200\_50\_4 & 228572 & 228572 & 0.000 &  &  4 & 1.0 & 228572.00 & 228572 &  0.000 & 1.0 &  &  4 & 228572.00 & 228572 &  0.000 & 1.0 &  &  4 & 228572.00 & 228572 &  0.000 & 1.0 &  &  3 & 228572.00 & 228572 &  0.000 & 1.0 \\
  200\_50\_5 & 479651 & 479651 & 0.000 &  &  9 & 1.0 & 477806.20 & 479651 &  0.000 & 0.4 &  &  9 & 477806.20 & 479651 &  0.000 & 0.4 &  &  9 & 479548.30 & 479651 &  0.000 & 0.5 &  &  3 & 479551.00 & 479651 &  0.000 & 0.5 \\
  200\_50\_6 & 426777 & 426672 & 0.025 &  &  8 & 1.0 & 424845.00 & 426777 &  0.000 & 0.1 &  &  8 & 424845.00 & 426777 &  0.000 & 0.1 &  &  7 & 426750.20 & 426777 &  0.000 & 0.7 &  &  7 & 426757.10 & 426777 &  0.000 & 0.7 \\
  200\_50\_7 & 220890 & 220890 & 0.000 &  &  6 & 1.0 & 220720.90 & 220890 &  0.000 & 0.2 &  &  6 & 220720.90 & 220890 &  0.000 & 0.2 &  &  6 & 220841.30 & 220890 &  0.000 & 0.3 &  &  1 & 220868.40 & 220890 &  0.000 & 0.7 \\
  200\_50\_8 & 317952 & 317952 & 0.000 &  &  5 & 1.0 & 317803.90 & 317952 &  0.000 & 0.2 &  &  5 & 317803.90 & 317952 &  0.000 & 0.2 &  &  5 & 317913.90 & 317952 &  0.000 & 0.7 &  &  2 & 317952.00 & 317952 &  0.000 & 1.0 \\
  200\_50\_9 & 104936 & 104936 & 0.000 &  &  5 & 1.0 & 104742.10 & 104936 &  0.000 & 0.1 &  &  5 & 104742.10 & 104936 &  0.000 & 0.1 &  &  6 & 104884.90 & 104936 &  0.000 & 0.9 &  &  3 & 104904.60 & 104936 &  0.000 & 0.9 \\
 200\_50\_10 & 284751 & 284751 & 0.000 &  &  5 & 1.0 & 284299.40 & 284745 & 2.1E-3 & 0.0 &  &  5 & 284299.40 & 284745 & 2.1E-3 & 0.0 &  &  5 & 284632.00 & 284751 &  0.000 & 0.2 &  &  5 & 284635.70 & 284751 &  0.000 & 0.3 \\
  200\_75\_1 & 442894 & 442894 & 0.000 &  & 10 & 1.0 & 441907.10 & 442672 &  0.050 & 0.0 &  & 10 & 441907.10 & 442672 &  0.050 & 0.0 &  & 10 & 442719.90 & 442894 &  0.000 & 0.3 &  &  3 & 442752.70 & 442894 &  0.000 & 0.5 \\
  200\_75\_2 & 286643 & 286643 & 0.000 &  &  5 & 1.0 & 286572.50 & 286643 &  0.000 & 0.2 &  &  5 & 286572.50 & 286643 &  0.000 & 0.2 &  &  5 & 286630.60 & 286643 &  0.000 & 0.6 &  &  5 & 286630.90 & 286643 &  0.000 & 0.6 \\
  200\_75\_3 &  61924 &  61924 & 0.000 &  & 10 & 1.0 &  61924.00 &  61924 &  0.000 & 1.0 &  & 10 &  61924.00 &  61924 &  0.000 & 1.0 &  &  7 &  61924.00 &  61924 &  0.000 & 1.0 &  &  1 &  61924.00 &  61924 &  0.000 & 1.0 \\
  200\_75\_4 & 128351 & 128351 & 0.000 &  &  3 & 1.0 & 128351.00 & 128351 &  0.000 & 1.0 &  &  3 & 128351.00 & 128351 &  0.000 & 1.0 &  &  3 & 128351.00 & 128351 &  0.000 & 1.0 &  &  1 & 128351.00 & 128351 &  0.000 & 1.0 \\
  200\_75\_5 & 137885 & 137885 & 0.000 &  &  9 & 1.0 & 137842.90 & 137885 &  0.000 & 0.7 &  &  9 & 137842.90 & 137885 &  0.000 & 0.7 &  &  5 & 137885.00 & 137885 &  0.000 & 1.0 &  &  5 & 137885.00 & 137885 &  0.000 & 1.0 \\
  200\_75\_6 & 229631 & 229631 & 0.000 &  &  7 & 1.0 & 228703.10 & 229631 &  0.000 & 0.1 &  &  7 & 228703.10 & 229631 &  0.000 & 0.1 &  &  6 & 229159.00 & 229631 &  0.000 & 0.3 &  &  3 & 229592.90 & 229631 &  0.000 & 0.9 \\
  200\_75\_7 & 269887 & 269887 & 0.000 &  &  5 & 1.0 & 269846.60 & 269887 &  0.000 & 0.6 &  &  5 & 269846.60 & 269887 &  0.000 & 0.6 &  &  6 & 269863.10 & 269887 &  0.000 & 0.9 &  &  6 & 269863.10 & 269887 &  0.000 & 0.9 \\
  200\_75\_8 & 600858 & 600858 & 0.000 &  &  5 & 0.7 & 592861.00 & 600819 & 6.5E-3 & 0.0 &  &  5 & 600507.50 & 600858 &  0.000 & 0.4 &  &  7 & 600155.90 & 600858 &  0.000 & 0.3 &  &  5 & 600858.00 & 600858 &  0.000 & 1.0 \\
  200\_75\_9 & 516771 & 516771 & 0.000 &  & 10 & 1.0 & 516262.00 & 516619 &  0.029 & 0.0 &  & 10 & 516262.00 & 516619 &  0.029 & 0.0 &  &  8 & 516714.90 & 516771 &  0.000 & 0.5 &  &  8 & 516714.90 & 516771 &  0.000 & 0.5 \\
 200\_75\_10 & 142694 & 142694 & 0.000 &  &  6 & 1.0 & 142683.60 & 142694 &  0.000 & 0.9 &  &  6 & 142683.60 & 142694 &  0.000 & 0.9 &  &  4 & 142694.00 & 142694 &  0.000 & 1.0 &  &  1 & 142694.00 & 142694 &  0.000 & 1.0 \\
 200\_100\_1 & 937149 & 937149 & 0.000 &  & 10 & 1.0 & 932240.80 & 936716 &  0.046 & 0.0 &  &  4 & 936329.60 & 936742 &  0.043 & 0.0 &  & 10 & 937143.80 & 937149 &  0.000 & 0.8 &  &  9 & 937146.40 & 937149 &  0.000 & 0.9 \\
 200\_100\_2 & 303058 & 303058 & 0.000 &  & 10 & 1.0 & 302835.40 & 303058 &  0.000 & 0.2 &  & 10 & 302835.40 & 303058 &  0.000 & 0.2 &  & 10 & 303019.90 & 303058 &  0.000 & 0.6 &  & 10 & 303019.90 & 303058 &  0.000 & 0.6 \\
 200\_100\_3 &  29367 &  29367 & 0.000 &  &  4 & 1.0 &  29367.00 &  29367 &  0.000 & 1.0 &  &  4 &  29367.00 &  29367 &  0.000 & 1.0 &  &  4 &  29367.00 &  29367 &  0.000 & 1.0 &  &  4 &  29367.00 &  29367 &  0.000 & 1.0 \\
 200\_100\_4 & 100838 & 100838 & 0.000 &  &  4 & 1.0 & 100838.00 & 100838 &  0.000 & 1.0 &  &  4 & 100838.00 & 100838 &  0.000 & 1.0 &  &  3 & 100838.00 & 100838 &  0.000 & 1.0 &  &  1 & 100838.00 & 100838 &  0.000 & 1.0 \\
 200\_100\_5 & 786635 & 786635 & 0.000 &  &  6 & 1.0 & 786527.90 & 786635 &  0.000 & 0.1 &  &  3 & 786516.60 & 786635 &  0.000 & 0.5 &  &  5 & 786604.80 & 786635 &  0.000 & 0.8 &  &  6 & 786605.20 & 786635 &  0.000 & 0.8 \\
 200\_100\_6 &  41171 &  41171 & 0.000 &  &  3 & 1.0 &  41171.00 &  41171 &  0.000 & 1.0 &  &  3 &  41171.00 &  41171 &  0.000 & 1.0 &  &  3 &  41171.00 &  41171 &  0.000 & 1.0 &  &  1 &  41171.00 &  41171 &  0.000 & 1.0 \\
 200\_100\_7 & 701094 & 701094 & 0.000 &  &  9 & 1.0 & 700500.00 & 701094 &  0.000 & 0.1 &  &  9 & 700500.00 & 701094 &  0.000 & 0.1 &  &  7 & 701088.50 & 701094 &  0.000 & 0.9 &  &  9 & 701088.50 & 701094 &  0.000 & 0.9 \\
 200\_100\_8 & 782443 & 782443 & 0.000 &  & 10 & 1.0 & 781342.10 & 782201 &  0.031 & 0.0 &  & 10 & 781342.10 & 782201 &  0.031 & 0.0 &  &  9 & 782403.30 & 782443 &  0.000 & 0.3 &  &  3 & 782356.30 & 782443 &  0.000 & 0.4 \\
 200\_100\_9 & 628992 & 628992 & 0.000 &  &  6 & 1.0 & 627248.30 & 628948 & 7.0E-3 & 0.0 &  &  6 & 627248.30 & 628948 & 7.0E-3 & 0.0 &  &  4 & 628932.40 & 628992 &  0.000 & 0.5 &  &  4 & 628974.90 & 628992 &  0.000 & 0.5 \\
200\_100\_10 & 378442 & 378442 & 0.000 &  &  8 & 1.0 & 378240.50 & 378442 &  0.000 & 0.1 &  &  8 & 378240.50 & 378442 &  0.000 & 0.1 &  &  7 & 378442.00 & 378442 &  0.000 & 1.0 &  &  7 & 378442.00 & 378442 &  0.000 & 1.0 \\
  300\_25\_1 &  29140 &  29140 & 0.000 &  &  9 & 1.0 &  29129.20 &  29140 &  0.000 & 0.5 &  &  9 &  29129.20 &  29140 &  0.000 & 0.5 &  &  9 &  29140.00 &  29140 &  0.000 & 1.0 &  &  2 &  29140.00 &  29140 &  0.000 & 1.0 \\
  300\_25\_2 & 281990 & 281990 & 0.000 &  & 10 & 1.0 & 281559.20 & 281990 &  0.000 & 0.1 &  & 10 & 281559.20 & 281990 &  0.000 & 0.1 &  &  8 & 281832.50 & 281990 &  0.000 & 0.2 &  &  6 & 281960.30 & 281990 &  0.000 & 0.3 \\
  300\_25\_3 & 231075 & 231075 & 0.000 &  &  9 & 1.0 & 230378.60 & 230873 &  0.087 & 0.0 &  &  4 & 231075.00 & 231075 &  0.000 & 1.0 &  &  8 & 230863.20 & 231075 &  0.000 & 0.3 &  &  1 & 231075.00 & 231075 &  0.000 & 1.0 \\
  300\_25\_4 & 444759 & 444759 & 0.000 &  &  9 & 1.0 & 430978.60 & 444247 &  0.115 & 0.0 &  &  2 & 444401.60 & 444725 & 7.6E-3 & 0.0 &  &  5 & 444430.00 & 444759 &  0.000 & 0.1 &  &  5 & 444709.30 & 444759 &  0.000 & 0.3 \\
  300\_25\_5 &  14988 &  14988 & 0.000 &  &  9 & 1.0 &  14988.00 &  14988 &  0.000 & 1.0 &  &  4 &  14988.00 &  14988 &  0.000 & 1.0 &  &  9 &  14988.00 &  14988 &  0.000 & 1.0 &  &  4 &  14988.00 &  14988 &  0.000 & 1.0 \\
  300\_25\_6 & 269782 & 269782 & 0.000 &  & 10 & 1.0 & 268277.20 & 269601 &  0.067 & 0.0 &  &  2 & 269536.10 & 269782 &  0.000 & 0.9 &  &  8 & 269272.67 & 269782 &  0.000 & 0.1 &  &  2 & 269770.90 & 269782 &  0.000 & 0.9 \\
  300\_25\_7 & 485263 & 485263 & 0.000 &  &  8 & 0.9 & 479607.22 & 484736 &  0.109 & 0.0 &  &  6 & 484404.90 & 485004 &  0.053 & 0.0 &  &  4 & 485143.38 & 485263 &  0.000 & 0.1 &  &  4 & 485125.60 & 485263 &  0.000 & 0.1 \\
  300\_25\_8 &   9343 &   9343 & 0.000 &  & 10 & 1.0 &   9343.00 &   9343 &  0.000 & 1.0 &  & 10 &   9343.00 &   9343 &  0.000 & 1.0 &  &  9 &   9343.00 &   9343 &  0.000 & 1.0 &  &  1 &   9343.00 &   9343 &  0.000 & 1.0 \\
  300\_25\_9 & 250761 & 250761 & 0.000 &  &  9 & 1.0 & 250555.80 & 250751 & 4.0E-3 & 0.0 &  &  9 & 250555.80 & 250751 & 4.0E-3 & 0.0 &  &  9 & 250695.10 & 250761 &  0.000 & 0.2 &  &  7 & 250755.00 & 250761 &  0.000 & 0.4 \\
 300\_25\_10 & 383377 & 383377 & 0.000 &  &  8 & 1.0 & 382810.20 & 383377 &  0.000 & 0.1 &  &  4 & 383377.00 & 383377 &  0.000 & 1.0 &  &  7 & 383263.50 & 383377 &  0.000 & 0.3 &  &  1 & 383377.00 & 383377 &  0.000 & 1.0 \\
  300\_50\_1 & 513379 & 513379 & 0.000 &  & 10 & 1.0 & 513018.60 & 513379 &  0.000 & 0.2 &  & 10 & 513018.60 & 513379 &  0.000 & 0.2 &  & 10 & 513353.00 & 513379 &  0.000 & 0.7 &  &  5 & 513372.30 & 513379 &  0.000 & 0.9 \\
  300\_50\_2 & 105543 & 105543 & 0.000 &  &  6 & 1.0 & 105523.80 & 105543 &  0.000 & 0.9 &  &  4 & 105543.00 & 105543 &  0.000 & 1.0 &  &  6 & 105543.00 & 105543 &  0.000 & 1.0 &  &  3 & 105543.00 & 105543 &  0.000 & 1.0 \\
  300\_50\_3 & 875788 & 875788 & 0.000 &  & 10 & 1.0 & 873068.30 & 874564 &  0.140 & 0.0 &  &  4 & 873757.50 & 875788 &  0.000 & 0.2 &  &  6 & 874591.60 & 875769 & 2.2E-3 & 0.0 &  &  5 & 875684.50 & 875788 &  0.000 & 0.3 \\
  300\_50\_4 & 307124 & 307124 & 0.000 &  & 10 & 1.0 & 306912.10 & 307124 &  0.000 & 0.1 &  &  4 & 305157.30 & 307124 &  0.000 & 0.2 &  &  9 & 307079.70 & 307124 &  0.000 & 0.5 &  &  7 & 307123.30 & 307124 &  0.000 & 0.9 \\
  300\_50\_5 & 727820 & 727820 & 0.000 &  &  9 & 1.0 & 727446.60 & 727820 &  0.000 & 0.1 &  &  5 & 727441.60 & 727820 &  0.000 & 0.2 &  & 10 & 727779.20 & 727820 &  0.000 & 0.7 &  & 10 & 727779.20 & 727820 &  0.000 & 0.7 \\
  300\_50\_6 & 734053 & 734053 & 0.000 &  &  8 & 1.0 & 730756.60 & 734053 &  0.000 & 0.1 &  &  5 & 733876.80 & 734053 &  0.000 & 0.3 &  &  8 & 732651.30 & 734053 &  0.000 & 0.2 &  &  5 & 734048.20 & 734053 &  0.000 & 0.8 \\
  300\_50\_7 &  43595 &  43595 & 0.000 &  &  8 & 1.0 &  43558.60 &  43595 &  0.000 & 0.2 &  &  8 &  43558.60 &  43595 &  0.000 & 0.2 &  &  4 &  43570.90 &  43595 &  0.000 & 0.6 &  &  4 &  43571.80 &  43595 &  0.000 & 0.6 \\
  300\_50\_8 & 767977 & 767977 & 0.000 &  &  8 & 1.0 & 767291.40 & 767937 & 5.2E-3 & 0.0 &  &  4 & 767869.30 & 767948 & 3.8E-3 & 0.0 &  & 10 & 767834.40 & 767977 &  0.000 & 0.1 &  &  3 & 767952.80 & 767977 &  0.000 & 0.4 \\
  300\_50\_9 & 761351 & 761351 & 0.000 &  &  7 & 0.6 & 732640.67 & 761007 &  0.045 & 0.0 &  &  4 & 761153.00 & 761351 &  0.000 & 0.2 &  & 10 & 760476.40 & 761351 &  0.000 & 0.1 &  &  2 & 761351.00 & 761351 &  0.000 & 1.0 \\
 300\_50\_10 & 996070 & 996070 & 0.000 &  &  7 & 1.0 & 977326.60 & 993922 &  0.216 & 0.0 &  &  5 & 994350.00 & 996069 & 1.0E-4 & 0.0 &  &  7 & 995135.60 & 996070 &  0.000 & 0.4 &  &  1 & 995605.10 & 996070 &  0.000 & 0.8 \\
 \hline
 \end{tabular}
}
\end{table}
\end{landscape}

\begin{table*}[t]
  \caption{Full Results of Ising Machine on Large Instances (Greedy, AE, and AE-I).}
  \label{tab:full_ae_large_1}
  \centering
    \resizebox*{!}{0.96\textheight}{
      \begin{tabular}{@{\hspace{3pt}}c@{\hspace{9pt}}r@{\hspace{9pt}}|
  r@{\hspace{9pt}}r
  @{\hspace{9pt}}r@{\hspace{9pt}} 
  r@{\hspace{9pt}}r
  @{\hspace{9pt}}r@{\hspace{9pt}} 
  r@{\hspace{9pt}}r@{\hspace{9pt}}r@{\hspace{9pt}}r@{\hspace{9pt}}r@{\hspace{9pt}}r 
  @{\hspace{9pt}}r@{\hspace{9pt}} 
  r@{\hspace{9pt}}r@{\hspace{9pt}}r@{\hspace{9pt}}r@{\hspace{9pt}}r@{\hspace{9pt}}r 
  @{\hspace{3pt}}}
    \hline
    \multicolumn{2}{c|}{Instance} & 
    \multicolumn{2}{c}{Greedy} &&
    \multicolumn{6}{c}{AE} && 
    \multicolumn{5}{c}{AE-I} \\
    \cline{6-11}
    \cline{13-18}
    $n$\_$d$\_id & Best-known & Score & Gap &  & $a$ & FS & Mean & Best & Gap & SR &  & $a$ & FS & Mean & Best & Gap & SR \\
    \hline \hline
  1000\_25\_1 &   6172407 &   6165313 &  0.115 &  &  8 & 1.0 &  6046565.20 &   6143481 &  0.469 & 0.0 &  &  7 & 1.0 &  6168157.90 &   6172407 &  0.000 & 0.1 \\
  1000\_25\_2 &    229941 &    229941 &  0.000 &  &  9 & 0.1 &   227137.00 &    227137 &  1.219 & 0.0 &  &  9 & 0.1 &   229057.00 &    229057 &  0.384 & 0.0 \\
  1000\_25\_3 &    172418 &    172362 &  0.032 &  &  8 & 0.1 &   118485.00 &    118485 & 31.280 & 0.0 &  &  8 & 0.1 &   137516.00 &    137516 & 20.243 & 0.0 \\
  1000\_25\_4 &    367426 &    367426 &  0.000 &  & - & 0 & - & - & - & - &  & - & 0 & - & - & - & - \\
  1000\_25\_5 &   4885611 &   4884016 &  0.033 &  & 18 & 1.0 &  4794876.00 &   4865871 &  0.404 & 0.0 &  &  7 & 0.9 &  4882635.56 &   4885573 & 7.8E-4 & 0.0 \\
  1000\_25\_6 &     15689 &     15689 &  0.000 &  &  9 & 1.0 &    15689.00 &     15689 &  0.000 & 1.0 &  &  9 & 1.0 &    15689.00 &     15689 &  0.000 & 1.0 \\
  1000\_25\_7 &   4945810 &   4943898 &  0.039 &  &  5 & 0.9 &  4877049.44 &   4908591 &  0.753 & 0.0 &  &  4 & 0.9 &  4944271.78 &   4945810 &  0.000 & 0.1 \\
  1000\_25\_8 &   1710198 &   1709986 &  0.012 &  & - & 0 & - & - & - & - &  & - & 0 & - & - & - & - \\
  1000\_25\_9 &    496315 &    496315 &  0.000 &  & - & 0 & - & - & - & - &  & - & 0 & - & - & - & - \\
 1000\_25\_10 &   1173792 &   1173627 &  0.014 &  & - & 0 & - & - & - & - &  & - & 0 & - & - & - & - \\
  1000\_50\_1 &   5663590 &   5663518 & 1.3E-3 &  & - & 0 & - & - & - & - &  & - & 0 & - & - & - & - \\
  1000\_50\_2 &    180831 &    180725 &  0.059 &  &  6 & 0.1 &   178358.00 &    178358 &  1.368 & 0.0 &  &  6 & 0.1 &   180220.00 &    180220 &  0.338 & 0.0 \\
  1000\_50\_3 &  11384283 &  11384182 & 8.9E-4 &  &  8 & 1.0 & 10942873.80 &  11304391 &  0.702 & 0.0 &  &  3 & 1.0 & 11384044.80 &  11384283 &  0.000 & 0.1 \\
  1000\_50\_4 &    322226 &    322170 &  0.017 &  &  9 & 0.1 &   314771.00 &    314771 &  2.314 & 0.0 &  &  9 & 0.1 &   320179.00 &    320179 &  0.635 & 0.0 \\
  1000\_50\_5 &   9984247 &   9983024 &  0.012 &  & 13 & 1.0 &  9756809.50 &   9932876 &  0.515 & 0.0 &  &  4 & 0.8 &  9981206.50 &   9984002 & 2.5E-3 & 0.0 \\
  1000\_50\_6 &   4106261 &   4105256 &  0.024 &  & - & 0 & - & - & - & - &  & - & 0 & - & - & - & - \\
  1000\_50\_7 &  10498370 &  10497911 & 4.4E-3 &  &  5 & 1.0 & 10293633.60 &  10426694 &  0.683 & 0.0 &  &  2 & 0.9 & 10497958.56 &  10498370 &  0.000 & 0.1 \\
  1000\_50\_8 &   4981146 &   4978776 &  0.048 &  & 17 & 0.1 &  4858893.00 &   4858893 &  2.454 & 0.0 &  & 13 & 0.1 &  4978216.00 &   4978216 &  0.059 & 0.0 \\
  1000\_50\_9 &   1727861 &   1727861 &  0.000 &  & 10 & 0.1 &  1654507.00 &   1654507 &  4.245 & 0.0 &  & 10 & 0.1 &  1682796.00 &   1682796 &  2.608 & 0.0 \\
 1000\_50\_10 &   2340724 &   2340115 &  0.026 &  & 18 & 0.1 &  2332302.00 &   2332302 &  0.360 & 0.0 &  & 18 & 0.1 &  2336272.00 &   2336272 &  0.190 & 0.0 \\
  1000\_75\_1 &  11570056 &  11568107 &  0.017 &  & 14 & 0.7 & 11434541.71 &  11489586 &  0.696 & 0.0 &  & 17 & 0.8 & 11521261.50 &  11569868 & 1.6E-3 & 0.0 \\
  1000\_75\_2 &   1901389 &   1901083 &  0.016 &  & - & 0 & - & - & - & - &  & - & 0 & - & - & - & - \\
  1000\_75\_3 &   2096485 &   2090674 &  0.277 &  & 19 & 0.3 &  1991055.00 &   2086915 &  0.456 & 0.0 &  & 20 & 0.3 &  2084592.33 &   2091367 &  0.244 & 0.0 \\
  1000\_75\_4 &   7305321 &   7305320 & 1.4E-5 &  & - & 0 & - & - & - & - &  & - & 0 & - & - & - & - \\
  1000\_75\_5 &  13970842 &  13967977 &  0.021 &  & 16 & 1.0 & 13730507.30 &  13900932 &  0.500 & 0.0 &  &  8 & 1.0 & 13968979.30 &  13969760 & 7.7E-3 & 0.0 \\
  1000\_75\_6 &  12288738 &  12287677 & 8.6E-3 &  & 10 & 1.0 & 12106698.20 &  12171993 &  0.950 & 0.0 &  & 10 & 1.0 & 12285291.60 &  12288736 & 1.6E-5 & 0.0 \\
  1000\_75\_7 &   1095837 &   1093066 &  0.253 &  & 10 & 0.1 &   528554.00 &    528554 & 51.767 & 0.0 &  & 10 & 0.1 &   579407.00 &    579407 & 47.127 & 0.0 \\
  1000\_75\_8 &   5575813 &   5571863 &  0.071 &  & - & 0 & - & - & - & - &  & - & 0 & - & - & - & - \\
  1000\_75\_9 &    695774 &    695060 &  0.103 &  & 10 & 0.1 &   692459.00 &    692459 &  0.476 & 0.0 &  & 10 & 0.1 &   694689.00 &    694689 &  0.156 & 0.0 \\
 1000\_75\_10 &   2507677 &   2507415 &  0.010 &  & - & 0 & - & - & - & - &  & - & 0 & - & - & - & - \\
 1000\_100\_1 &   6243494 &   6240386 &  0.050 &  & 20 & 0.1 &  6237741.00 &   6237741 &  0.092 & 0.0 &  & 20 & 0.1 &  6241605.00 &   6241605 &  0.030 & 0.0 \\
 1000\_100\_2 &   4854086 &   4851219 &  0.059 &  & 19 & 0.2 &  4026639.50 &   4740591 &  2.338 & 0.0 &  & 19 & 0.2 &  4808609.50 &   4851243 &  0.059 & 0.0 \\
 1000\_100\_3 &   3172022 &   3169717 &  0.073 &  & - & 0 & - & - & - & - &  & - & 0 & - & - & - & - \\
 1000\_100\_4 &    754727 &    754041 &  0.091 &  & 20 & 1.0 &   753275.30 &    754048 &  0.090 & 0.0 &  & 20 & 1.0 &   754459.30 &    754663 & 8.5E-3 & 0.0 \\
 1000\_100\_5 &  18646620 &  18644356 &  0.012 &  & 19 & 1.0 & 18411742.00 &  18553784 &  0.498 & 0.0 &  & 18 & 1.0 & 18612884.10 &  18646307 & 1.7E-3 & 0.0 \\
 1000\_100\_6 &  16020232 &  16019071 & 7.2E-3 &  & 10 & 0.5 & 15832101.20 &  15900553 &  0.747 & 0.0 &  &  8 & 0.6 & 16004784.50 &  16019644 & 3.7E-3 & 0.0 \\
 1000\_100\_7 &  12936205 &  12935892 & 2.4E-3 &  & - & 0 & - & - & - & - &  & - & 0 & - & - & - & - \\
 1000\_100\_8 &   6927738 &   6927088 & 9.4E-3 &  & - & 0 & - & - & - & - &  & - & 0 & - & - & - & - \\
 1000\_100\_9 &   3874959 &   3874666 & 7.6E-3 &  & - & 0 & - & - & - & - &  & - & 0 & - & - & - & - \\
1000\_100\_10 &   1334494 &   1333599 &  0.067 &  & 20 & 0.8 &  1331945.25 &   1332813 &  0.126 & 0.0 &  & 20 & 0.8 &  1333752.88 &   1334390 & 7.8E-3 & 0.0 \\
  2000\_25\_1 &   5268188 &   5268172 & 3.0E-4 &  & 19 & 0.1 &  4264680.00 &   4264680 & 19.048 & 0.0 &  & 19 & 0.1 &  5264179.00 &   5264179 &  0.076 & 0.0 \\
  2000\_25\_2 &  13294030 &  13292220 &  0.014 &  & 18 & 0.6 & 13197215.00 &  13238963 &  0.414 & 0.0 &  & 12 & 0.1 & 13293975.00 &  13293975 & 4.1E-4 & 0.0 \\
  2000\_25\_3 &   5500433 &   5499695 &  0.013 &  & 19 & 0.1 &  3862911.00 &   3862911 & 29.771 & 0.0 &  & 19 & 0.1 &  5492238.00 &   5492238 &  0.149 & 0.0 \\
  2000\_25\_4 &  14625118 &  14624957 & 1.1E-3 &  & 20 & 0.9 & 14438872.67 &  14558124 &  0.458 & 0.0 &  & 10 & 0.7 & 14621883.71 &  14625118 &  0.000 & 0.1 \\
  2000\_25\_5 &   5975751 &   5974429 &  0.022 &  & - & 0 & - & - & - & - &  & - & 0 & - & - & - & - \\
  2000\_25\_6 &   4491691 &   4491649 & 9.4E-4 &  & - & 0 & - & - & - & - &  & - & 0 & - & - & - & - \\
  2000\_25\_7 &   6388756 &   6388705 & 8.0E-4 &  & 20 & 0.1 &  5854140.00 &   5854140 &  8.368 & 0.0 &  & 20 & 0.1 &  6387505.00 &   6387505 &  0.020 & 0.0 \\
  2000\_25\_8 &  11769873 &  11767061 &  0.024 &  & 19 & 0.3 & 11534726.00 &  11722386 &  0.403 & 0.0 &  & 18 & 0.2 & 11768276.50 &  11768330 &  0.013 & 0.0 \\
  2000\_25\_9 &  10960328 &  10960313 & 1.4E-4 &  & 19 & 0.2 & 10577342.50 &  10900905 &  0.542 & 0.0 &  &  8 & 0.1 & 10960113.00 &  10960113 & 2.0E-3 & 0.0 \\
 2000\_25\_10 &    139236 &    139236 &  0.000 &  & 16 & 0.1 &     3945.00 &      3945 & 97.167 & 0.0 &  & 16 & 0.1 &    55528.00 &     55528 & 60.120 & 0.0 \\
  2000\_50\_1 &   7070736 &   7064882 &  0.083 &  & - & 0 & - & - & - & - &  & - & 0 & - & - & - & - \\
  2000\_50\_2 &  12587545 &  12587266 & 2.2E-3 &  & 11 & 0.1 & 12276905.00 &  12276905 &  2.468 & 0.0 &  & 12 & 0.1 & 12585105.00 &  12585105 &  0.019 & 0.0 \\
  2000\_50\_3 &  27268336 &  27268336 &  0.000 &  & 20 & 0.7 & 26632622.14 &  27148430 &  0.440 & 0.0 &  & 11 & 0.2 & 27267716.50 &  27268037 & 1.1E-3 & 0.0 \\
  2000\_50\_4 &  17754434 &  17752803 & 9.2E-3 &  & 15 & 0.3 & 15946463.67 &  17665973 &  0.498 & 0.0 &  & 15 & 0.3 & 17750698.00 &  17752976 & 8.2E-3 & 0.0 \\
  2000\_50\_5 &  16806059 &  16803639 &  0.014 &  & 15 & 0.1 & 16161265.00 &  16161265 &  3.837 & 0.0 &  & 12 & 0.1 & 16802163.00 &  16802163 &  0.023 & 0.0 \\
  2000\_50\_6 &  23076155 &  23074597 & 6.8E-3 &  & 17 & 0.3 & 22841543.00 &  22867634 &  0.904 & 0.0 &  & 17 & 0.3 & 23062541.67 &  23074825 & 5.8E-3 & 0.0 \\
  2000\_50\_7 &  28759759 &  28756239 &  0.012 &  &  9 & 0.2 & 28600532.50 &  28600679 &  0.553 & 0.0 &  &  8 & 0.2 & 28756905.00 &  28757496 & 7.9E-3 & 0.0 \\
  2000\_50\_8 &   1580242 &   1580242 &  0.000 &  & 16 & 0.1 &  1206100.00 &   1206100 & 23.676 & 0.0 &  & 16 & 0.1 &  1282894.00 &   1282894 & 18.817 & 0.0 \\
  2000\_50\_9 &  26523791 &  26523221 & 2.1E-3 &  &  9 & 0.4 & 25837132.25 &  26348755 &  0.660 & 0.0 &  & 11 & 0.4 & 26521898.00 &  26523328 & 1.7E-3 & 0.0 \\
 2000\_50\_10 &  24747047 &  24747047 &  0.000 &  & 20 & 0.2 & 24559204.00 &  24630282 &  0.472 & 0.0 &  & 10 & 0.1 & 24746954.00 &  24746954 & 3.8E-4 & 0.0 \\
  2000\_75\_1 &  25121998 &  25119968 & 8.1E-3 &  & 17 & 0.1 & 24941261.00 &  24941261 &  0.719 & 0.0 &  & 13 & 0.2 & 25094179.50 &  25119908 & 8.3E-3 & 0.0 \\
  2000\_75\_2 &  12664670 &  12664008 & 5.2E-3 &  & - & 0 & - & - & - & - &  & - & 0 & - & - & - & - \\
  2000\_75\_3 &  43943994 &  43941916 & 4.7E-3 &  & 19 & 0.5 & 43299897.20 &  43678829 &  0.603 & 0.0 &  &  8 & 0.2 & 43943590.00 &  43943703 & 6.6E-4 & 0.0 \\
  2000\_75\_4 &  37496613 &  37496099 & 1.4E-3 &  & 20 & 0.4 & 33300362.50 &  37331383 &  0.441 & 0.0 &  &  9 & 0.1 & 37496271.00 &  37496271 & 9.1E-4 & 0.0 \\
  2000\_75\_5 &  24835349 &  24833545 & 7.3E-3 &  & 17 & 0.2 & 20223954.50 &  22728621 &  8.483 & 0.0 &  & 13 & 0.1 & 24832245.00 &  24832245 &  0.012 & 0.0 \\
  2000\_75\_6 &  45137758 &  45137758 &  0.000 &  & 10 & 0.4 & 44662139.25 &  44868910 &  0.596 & 0.0 &  &  5 & 0.2 & 45137345.00 &  45137758 &  0.000 & 0.1 \\
  2000\_75\_7 &  25502608 &  25502409 & 7.8E-4 &  & 15 & 0.1 & 24935633.00 &  24935633 &  2.223 & 0.0 &  & 18 & 0.1 & 25478168.00 &  25478168 &  0.096 & 0.0 \\
  2000\_75\_8 &  10067892 &  10067546 & 3.4E-3 &  & - & 0 & - & - & - & - &  & - & 0 & - & - & - & - \\
  2000\_75\_9 &  14177079 &  14169391 &  0.054 &  & - & 0 & - & - & - & - &  & - & 0 & - & - & - & - \\
 2000\_75\_10 &   7815755 &   7813832 &  0.025 &  & - & 0 & - & - & - & - &  & - & 0 & - & - & - & - \\
 2000\_100\_1 &  37929909 &  37929771 & 3.6E-4 &  & 17 & 0.2 & 35696999.50 &  37675871 &  0.670 & 0.0 &  &  8 & 0.1 & 37927936.00 &  37927936 & 5.2E-3 & 0.0 \\
 2000\_100\_2 &  33665281 &  33639083 &  0.078 &  & 12 & 0.2 & 28996903.50 &  32350573 &  3.905 & 0.0 &  &  8 & 0.2 & 33640008.00 &  33642902 &  0.066 & 0.0 \\
 2000\_100\_3 &  29952019 &  29949832 & 7.3E-3 &  & 11 & 0.1 & 29346552.00 &  29346552 &  2.021 & 0.0 &  & 12 & 0.1 & 29948550.00 &  29948550 &  0.012 & 0.0 \\
 2000\_100\_4 &  26949268 &  26947203 & 7.7E-3 &  & 15 & 0.1 & 26626386.00 &  26626386 &  1.198 & 0.0 &  & 14 & 0.2 & 26946246.00 &  26947244 & 7.5E-3 & 0.0 \\
 2000\_100\_5 &  22041715 &  22038689 &  0.014 &  & 19 & 0.1 & 20086249.00 &  20086249 &  8.872 & 0.0 &  & 19 & 0.1 & 22035826.00 &  22035826 &  0.027 & 0.0 \\
 2000\_100\_6 &  18868887 &  18867393 & 7.9E-3 &  & - & 0 & - & - & - & - &  & - & 0 & - & - & - & - \\
 2000\_100\_7 &  15850597 &  15848026 &  0.016 &  & - & 0 & - & - & - & - &  & - & 0 & - & - & - & - \\
 2000\_100\_8 &  13628967 &  13628029 & 6.9E-3 &  & - & 0 & - & - & - & - &  & - & 0 & - & - & - & - \\
 2000\_100\_9 &   8394562 &   8388596 &  0.071 &  & 19 & 0.1 &  7097454.00 &   7097454 & 15.452 & 0.0 &  & 19 & 0.1 &  7249975.00 &   7249975 & 13.635 & 0.0 \\
2000\_100\_10 &   4923559 &   4921159 &  0.049 &  & 18 & 0.1 &  3111936.00 &   3111936 & 36.795 & 0.0 &  & 15 & 0.1 &  3291205.00 &   3291205 & 33.154 & 0.0 \\
\hline
\end{tabular}
}
\end{table*}

\begin{table*}[t]
  \caption{Full Results of Ising Machine on Large Instances (Gurobi, AE-R, and AE-RI).}
  \label{tab:full_ae_large_2}
  \centering
    \resizebox*{!}{0.96\textheight}{
      \begin{tabular}{@{\hspace{3pt}}c@{\hspace{9pt}}r@{\hspace{9pt}}|
  r@{\hspace{9pt}}r
  @{\hspace{9pt}}r@{\hspace{9pt}} 
  r@{\hspace{9pt}}r
  @{\hspace{9pt}}r@{\hspace{9pt}} 
  r@{\hspace{9pt}}r@{\hspace{9pt}}r@{\hspace{9pt}}r@{\hspace{9pt}}r
  @{\hspace{9pt}}r@{\hspace{9pt}} 
  r@{\hspace{9pt}}r@{\hspace{9pt}}r@{\hspace{9pt}}r@{\hspace{9pt}}r
  @{\hspace{3pt}}}
    \hline
    \multicolumn{2}{c|}{Instance} & 
    \multicolumn{2}{c}{Gurobi} &&
    \multicolumn{5}{c}{AE-R} && 
    \multicolumn{5}{c}{AE-RI} \\
    \cline{6-10}
    \cline{12-16}
    $n$\_$d$\_id & Best-known & Score & Gap &  & $a$ & Mean & Best & Gap & SR &   & $a$ & Mean & Best & Gap & SR \\
    \hline \hline
  1000\_25\_1 &  6172407 &  6172407 & 0.000 &  &  4 &  6129734.5 &  6164169 &   0.133 & 0.0 &  &  7 &  6168834.5 &  6172407 &   0.000 & 0.2 \\
  1000\_25\_2 &   229941 &   229941 & 0.000 &  &  1 &   228355.2 &   229902 &   0.017 & 0.0 &  &  1 &   229933.5 &   229941 &   0.000 & 0.9 \\
  1000\_25\_3 &   172418 &   172418 & 0.000 &  &  6 &   172418.0 &   172418 &   0.000 & 1.0 &  &  6 &   172418.0 &   172418 &   0.000 & 1.0 \\
  1000\_25\_4 &   367426 &   367426 & 0.000 &  &  7 &   365784.3 &   367014 &   0.112 & 0.0 &  &  1 &   367426.0 &   367426 &   0.000 & 1.0 \\
  1000\_25\_5 &  4885611 &  4885611 & 0.000 &  &  6 &  4884277.2 &  4885538 &  1.5E-3 & 0.0 &  &  4 &  4885138.5 &  4885611 &   0.000 & 0.1 \\
  1000\_25\_6 &    15689 &    15689 & 0.000 &  &  5 &    15689.0 &    15689 &   0.000 & 1.0 &  &  1 &    15689.0 &    15689 &   0.000 & 1.0 \\
  1000\_25\_7 &  4945810 &  4945810 & 0.000 &  &  4 &  4943097.0 &  4945810 &   0.000 & 0.1 &  &  4 &  4945608.4 &  4945810 &   0.000 & 0.3 \\
  1000\_25\_8 &  1710198 &  1710132 & 3.9E-3 &  &  7 &  1709246.1 &  1710017 &   0.011 & 0.0 &  & 10 &  1710007.7 &  1710198 &   0.000 & 0.3 \\
  1000\_25\_9 &   496315 &   496315 & 0.000 &  &  6 &   496304.9 &   496315 &   0.000 & 0.9 &  &  5 &   496315.0 &   496315 &   0.000 & 1.0 \\
 1000\_25\_10 &  1173792 &  1173789 & 2.6E-4 &  &  2 &  1172790.6 &  1173694 &  8.3E-3 & 0.0 &  & 10 &  1173639.3 &  1173792 &   0.000 & 0.1 \\
  1000\_50\_1 &  5663590 &  5663590 & 0.000 &  &  6 &  5663158.6 &  5663590 &   0.000 & 0.1 &  &  6 &  5663574.2 &  5663590 &   0.000 & 0.7 \\
  1000\_50\_2 &   180831 &   180831 & 0.000 &  & 10 &   179702.8 &   180831 &   0.000 & 0.2 &  &  5 &   180831.0 &   180831 &   0.000 & 1.0 \\
  1000\_50\_3 & 11384283 & 11384283 & 0.000 &  &  4 & 11354296.1 & 11384247 &  3.2E-4 & 0.0 &  &  5 & 11384256.1 & 11384283 &   0.000 & 0.7 \\
  1000\_50\_4 &   322226 &   322226 & 0.000 &  & 10 &   320732.0 &   322222 &  1.2E-3 & 0.0 &  &  9 &   322220.4 &   322226 &   0.000 & 0.9 \\
  1000\_50\_5 &  9984247 &  9984155 & 9.2E-4 &  &  5 &  9979380.1 &  9983164 &   0.011 & 0.0 &  &  5 &  9983901.4 &  9984155 &  9.2E-4 & 0.0 \\
  1000\_50\_6 &  4106261 &  4106261 & 0.000 &  &  6 &  4104720.9 &  4106084 &  4.3E-3 & 0.0 &  &  4 &  4106142.3 &  4106261 &   0.000 & 0.8 \\
  1000\_50\_7 & 10498370 & 10498370 & 0.000 &  &  6 & 10485633.4 & 10498272 &  9.3E-4 & 0.0 &  &  6 & 10498331.4 & 10498370 &   0.000 & 0.6 \\
  1000\_50\_8 &  4981146 &  4981143 & 6.0E-5 &  & 12 &  4978197.9 &  4979892 &   0.025 & 0.0 &  & 16 &  4979493.7 &  4980274 &   0.018 & 0.0 \\
  1000\_50\_9 &  1727861 &  1727861 & 0.000 &  &  6 &  1724044.1 &  1727861 &   0.000 & 0.6 &  &  6 &  1727861.0 &  1727861 &   0.000 & 1.0 \\
 1000\_50\_10 &  2340724 &  2340724 & 0.000 &  & 16 &  2339432.1 &  2340724 &   0.000 & 0.1 &  & 16 &  2340023.7 &  2340724 &   0.000 & 0.3 \\
  1000\_75\_1 & 11570056 & 11570056 & 0.000 &  & 11 & 11544807.4 & 11568936 &  9.7E-3 & 0.0 &  &  6 & 11569740.8 & 11570055 &  8.6E-6 & 0.0 \\
  1000\_75\_2 &  1901389 &  1901389 & 0.000 &  &  5 &  1894235.8 &  1901231 &  8.3E-3 & 0.0 &  & 10 &  1899475.3 &  1901389 &   0.000 & 0.5 \\
  1000\_75\_3 &  2096485 &  2096485 & 0.000 &  & 16 &  2093204.3 &  2096485 &   0.000 & 0.1 &  & 16 &  2096471.0 &  2096485 &   0.000 & 0.8 \\
  1000\_75\_4 &  7305321 &  7305315 & 8.2E-5 &  &  5 &  7292449.8 &  7304592 & 0.010 & 0.0 &  &  6 &  7305152.0 &  7305321 &   0.000 & 0.5 \\
  1000\_75\_5 & 13970842 & 13970842 & 0.000 &  & 13 & 13941909.1 & 13968310 &   0.018 & 0.0 &  & 10 & 13969208.3 & 13969984 &  6.1E-3 & 0.0 \\
  1000\_75\_6 & 12288738 & 12288738 & 0.000 &  &  6 & 12287280.1 & 12288115 &  5.1E-3 & 0.0 &  &  1 & 12288438.5 & 12288738 &   0.000 & 0.4 \\
  1000\_75\_7 &  1095837 &  1095837 & 0.000 &  &  7 &  1092604.3 &  1093131 &   0.247 & 0.0 &  &  4 &  1093343.1 &  1095837 &   0.000 & 0.1 \\
  1000\_75\_8 &  5575813 &  5575813 & 0.000 &  &  6 &  5574312.1 &  5575811 &  3.6E-5 & 0.0 &  & 15 &  5575505.5 &  5575813 &   0.000 & 0.4 \\
  1000\_75\_9 &   695774 &   695767 & 1.0E-3 &  &  5 &   694039.0 &   695064 &   0.102 & 0.0 &  &  5 &   695496.2 &   695774 &   0.000 & 0.1 \\
 1000\_75\_10 &  2507677 &  2507677 & 0.000 &  &  6 &  2506795.2 &  2507677 &   0.000 & 0.1 &  &  6 &  2507567.5 &  2507677 &   0.000 & 0.5 \\
 1000\_100\_1 &  6243494 &  6243494 & 0.000 &  & 11 &  6238463.1 &  6242478 &   0.016 & 0.0 &  & 11 &  6243143.8 &  6243494 &   0.000 & 0.1 \\
 1000\_100\_2 &  4854086 &  4854086 & 0.000 &  & 14 &  4851616.1 &  4852477 &   0.033 & 0.0 &  & 15 &  4853764.5 &  4854043 &  8.9E-4 & 0.0 \\
 1000\_100\_3 &  3172022 &  3172022 & 0.000 &  &  6 &  3169655.3 &  3170376 &   0.052 & 0.0 &  &  7 &  3171416.8 &  3172022 &   0.000 & 0.5 \\
 1000\_100\_4 &   754727 &   754539 & 0.025 &  & 12 &   754120.8 &   754645 &   0.011 & 0.0 &  & 12 &   754591.7 &   754727 &   0.000 & 0.2 \\
 1000\_100\_5 & 18646620 & 18646607 & 7.0E-5 &  & 11 & 18601616.7 & 18643031 &   0.019 & 0.0 &  &  9 & 18645693.5 & 18646535 &  4.6E-4 & 0.0 \\
 1000\_100\_6 & 16020232 & 16020232 & 0.000 &  &  5 & 16012775.7 & 16016672 &   0.022 & 0.0 &  &  5 & 16019834.7 & 16020232 &   0.000 & 0.3 \\
 1000\_100\_7 & 12936205 & 12936205 & 0.000 &  &  6 & 12935073.5 & 12936083 &  9.4E-4 & 0.0 &  &  5 & 12928785.2 & 12936205 &   0.000 & 0.4 \\
 1000\_100\_8 &  6927738 &  6927671 & 9.7E-4 &  &  6 &  6925298.8 &  6925940 &   0.026 & 0.0 &  &  6 &  6927168.3 &  6927606 &  1.9E-3 & 0.0 \\
 1000\_100\_9 &  3874959 &  3874959 & 0.000 &  &  4 &  3874196.9 &  3874959 &   0.000 & 0.1 &  &  6 &  3874917.8 &  3874959 &   0.000 & 0.8 \\
1000\_100\_10 &  1334494 &  1334494 & 0.000 &  & 13 &  1333903.0 &  1334457 &  2.8E-3 & 0.0 &  & 11 &  1334474.0 &  1334494 &   0.000 & 0.5 \\
  2000\_25\_1 &  5268188 &  5268188 & 0.000 &  &  8 &  5267547.6 &  5268188 &   0.000 & 0.2 &  &  8 &  5268171.6 &  5268188 &   0.000 & 0.4 \\
  2000\_25\_2 & 13294030 & 13293975 & 4.1E-4 &  &  5 & 13281482.0 & 13293956 &  5.6E-4 & 0.0 &  & 12 & 13293730.1 & 13293975 &  4.1E-4 & 0.0 \\
  2000\_25\_3 &  5500433 &  5500411 & 4.0E-4 &  & 13 &  5497914.6 &  5499944 &  8.9E-3 & 0.0 &  &  7 &  5499740.7 &  5500403 &  5.5E-4 & 0.0 \\
  2000\_25\_4 & 14625118 & 14625031 & 5.9E-4 &  & 13 & 14620474.6 & 14624980 &  9.4E-4 & 0.0 &  &  8 & 14625069.6 & 14625118 &   0.000 & 0.3 \\
  2000\_25\_5 &  5975751 &  5975751 & 0.000 &  & 13 &  5973488.1 &  5975675 &  1.3E-3 & 0.0 &  & 13 &  5975276.2 &  5975751 &   0.000 & 0.3 \\
  2000\_25\_6 &  4491691 &  4491635 & 1.2E-3 &  & 12 &  4485267.1 &  4491630 &  1.4E-3 & 0.0 &  & 10 &  4491691.0 &  4491691 &   0.000 & 1.0 \\
  2000\_25\_7 &  6388756 &  6388756 & 0.000 &  & 13 &  6388445.2 &  6388756 &   0.000 & 0.2 &  & 13 &  6388744.1 &  6388756 &   0.000 & 0.7 \\
  2000\_25\_8 & 11769873 & 11769873 & 0.000 &  &  2 & 11757436.7 & 11769873 &   0.000 & 0.1 &  & 11 & 11769820.5 & 11769873 &   0.000 & 0.7 \\
  2000\_25\_9 & 10960328 & 10960263 & 5.9E-4 &  &  3 & 10952474.0 & 10960207 &  1.1E-3 & 0.0 &  & 11 & 10960066.9 & 10960328 &   0.000 & 0.1 \\
 2000\_25\_10 &   139236 &   139236 & 0.000 &  & 16 &   138459.9 &   139236 &   0.000 & 0.2 &  &  1 &   139236.0 &   139236 &   0.000 & 1.0 \\
  2000\_50\_1 &  7070736 &  7070736 & 0.000 &  & 12 &  7062808.2 &  7070341 &  5.6E-3 & 0.0 &  & 12 &  7068396.4 &  7070736 &   0.000 & 0.2 \\
  2000\_50\_2 & 12587545 & 12587545 & 0.000 &  & 12 & 12586350.3 & 12587482 &  5.0E-4 & 0.0 &  & 12 & 12587507.0 & 12587545 &   0.000 & 0.7 \\
  2000\_50\_3 & 27268336 & 27268336 & 0.000 &  &  3 & 27268335.2 & 27268336 &   0.000 & 0.9 &  &  3 & 27268336.0 & 27268336 &   0.000 & 1.0 \\
  2000\_50\_4 & 17754434 & 17754388 & 2.6E-4 &  & 12 & 17752283.1 & 17753673 &  4.3E-3 & 0.0 &  & 12 & 17754087.3 & 17754434 &   0.000 & 0.1 \\
  2000\_50\_5 & 16806059 & 16805435 & 3.7E-3 &  & 11 & 16801525.3 & 16804057 &   0.012 & 0.0 &  & 12 & 16804680.5 & 16805490 &  3.4E-3 & 0.0 \\
  2000\_50\_6 & 23076155 & 23076097 & 2.5E-4 &  & 12 & 23074405.9 & 23075875 &  1.2E-3 & 0.0 &  & 11 & 23075852.7 & 23076155 &   0.000 & 0.2 \\
  2000\_50\_7 & 28759759 & 28759759 & 0.000 &  & 11 & 27410959.4 & 28756657 &   0.011 & 0.0 &  & 10 & 28757135.3 & 28757834 &  6.7E-3 & 0.0 \\
  2000\_50\_8 &  1580242 &  1580242 & 0.000 &  & 11 &  1577163.4 &  1580242 &   0.000 & 0.3 &  &  1 &  1580242.0 &  1580242 &   0.000 & 1.0 \\
  2000\_50\_9 & 26523791 & 26523791 & 0.000 &  &  8 & 26519303.3 & 26523462 &  1.2E-3 & 0.0 &  &  7 & 26523472.8 & 26523791 &   0.000 & 0.2 \\
 2000\_50\_10 & 24747047 & 24747047 & 0.000 &  &  8 & 24743522.7 & 24746936 &  4.5E-4 & 0.0 &  &  1 & 24747047.0 & 24747047 &   0.000 & 1.0 \\
  2000\_75\_1 & 25121998 & 25121998 & 0.000 &  & 11 & 25120313.9 & 25121457 &  2.2E-3 & 0.0 &  & 11 & 25121744.4 & 25121998 &   0.000 & 0.4 \\
  2000\_75\_2 & 12664670 & 12664670 & 0.000 &  & 11 & 12656539.5 & 12664244 &  3.4E-3 & 0.0 &  & 10 & 12662072.5 & 12664670 &   0.000 & 0.4 \\
  2000\_75\_3 & 43943994 & 43943994 & 0.000 &  & 16 & 43800921.7 & 43943366 &  1.4E-3 & 0.0 &  &  8 & 43943724.9 & 43943994 &   0.000 & 0.6 \\
  2000\_75\_4 & 37496613 & 37496613 & 0.000 &  &  3 & 37493330.7 & 37496308 &  8.1E-4 & 0.0 &  &  2 & 37496367.3 & 37496613 &   0.000 & 0.2 \\
  2000\_75\_5 & 24835349 & 24835349 & 0.000 &  & 11 & 24828628.7 & 24831592 &   0.015 & 0.0 &  & 11 & 24834632.9 & 24834948 &  1.6E-3 & 0.0 \\
  2000\_75\_6 & 45137758 & 45137758 & 0.000 &  &  3 & 45132276.3 & 45137758 &   0.000 & 0.9 &  &  8 & 45137758.0 & 45137758 &   0.000 & 1.0 \\
  2000\_75\_7 & 25502608 & 25502608 & 0.000 &  & 12 & 25467707.7 & 25502608 &   0.000 & 0.1 &  &  4 & 25495828.7 & 25502608 &   0.000 & 0.3 \\
  2000\_75\_8 & 10067892 & 10067892 & 0.000 &  & 11 & 10064140.4 & 10067750 &  1.4E-3 & 0.0 &  & 10 & 10063814.3 & 10067892 &   0.000 & 0.3 \\
  2000\_75\_9 & 14177079 & 14177079 & 0.000 &  & 11 & 14163838.4 & 14168633 &   0.060 & 0.0 &  & 14 & 14171401.8 & 14171994 &   0.036 & 0.0 \\
 2000\_75\_10 &  7815755 &  7815334 & 5.4E-3 &  & 12 &  7811373.0 &  7812703 &   0.039 & 0.0 &  & 20 &  7814965.3 &  7815611 &  1.8E-3 & 0.0 \\
 2000\_100\_1 & 37929909 & 37929909 & 0.000 &  &  4 & 37926908.4 & 37929909 &   0.000 & 0.1 &  & 12 & 37929518.3 & 37929909 &   0.000 & 0.6 \\
 2000\_100\_2 & 33665281 & 33665281 & 0.000 &  & 12 & 32835326.4 & 33637892 &   0.081 & 0.0 &  & 12 & 33644437.0 & 33646541 &   0.056 & 0.0 \\
 2000\_100\_3 & 29952019 & 29951413 & 2.0E-3 &  & 12 & 29553629.4 & 29948391 &   0.012 & 0.0 &  & 10 & 29951436.0 & 29952019 &   0.000 & 0.1 \\
 2000\_100\_4 & 26949268 & 26948616 & 2.4E-3 &  & 13 & 26943483.1 & 26947024 &  8.3E-3 & 0.0 &  & 11 & 26948996.6 & 26949268 &   0.000 & 0.2 \\
 2000\_100\_5 & 22041715 & 22041314 & 1.8E-3 &  & 11 & 22032284.4 & 22034438 &   0.033 & 0.0 &  & 17 & 22038647.2 & 22041221 &  2.2E-3 & 0.0 \\
 2000\_100\_6 & 18868887 & 18868887 & 0.000 &  & 10 & 18834013.3 & 18868626 &  1.4E-3 & 0.0 &  & 10 & 18866428.5 & 18868887 &   0.000 & 0.2 \\
 2000\_100\_7 & 15850597 & 15850597 & 0.000 &  & 10 & 15847566.7 & 15848960 &   0.010 & 0.0 &  & 11 & 15850259.2 & 15850594 &  1.9E-5 & 0.0 \\
 2000\_100\_8 & 13628967 & 13628967 & 0.000 &  & 12 & 13622163.4 & 13626547 &   0.018 & 0.0 &  & 12 & 13628850.7 & 13628967 &   0.000 & 0.4 \\
 2000\_100\_9 &  8394562 &  8394101 & 5.5E-3 &  & 11 &  8388749.0 &  8390723 &   0.046 & 0.0 &  & 20 &  8394196.6 &  8394562 &   0.000 & 0.4 \\
2000\_100\_10 &  4923559 &  4923387 & 3.5E-3 &  & 13 &  4918213.2 &  4919752 &   0.077 & 0.0 &  & 14 &  4922601.9 &  4923470 &  1.8E-3 & 0.0 \\
\hline
\end{tabular}
}
\end{table*}

\begin{figure*}[t]
     \centering
     \subfloat[$n=1000, d=25$]{
         \includegraphics[width=0.36\textwidth]{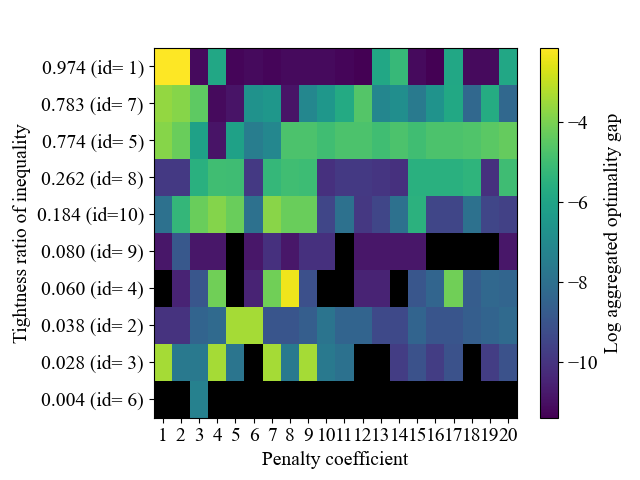}
     }
     \hfil
     \subfloat[$n=1000, d=50$]{
         \includegraphics[width=0.36\textwidth]{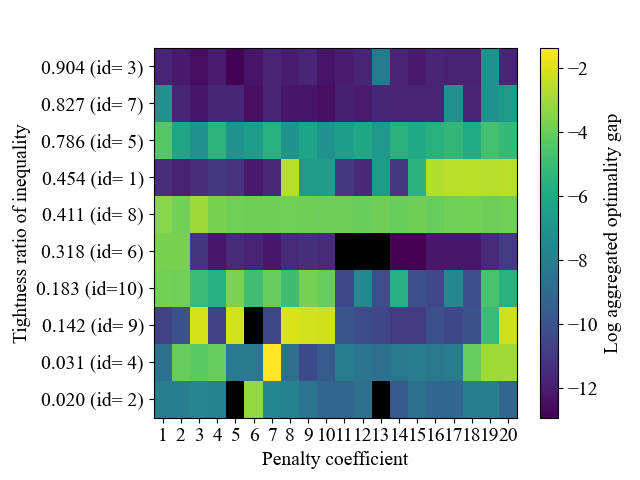}
     }
     \hfil
     \subfloat[$n=1000, d=75$]{
         \includegraphics[width=0.36\textwidth]{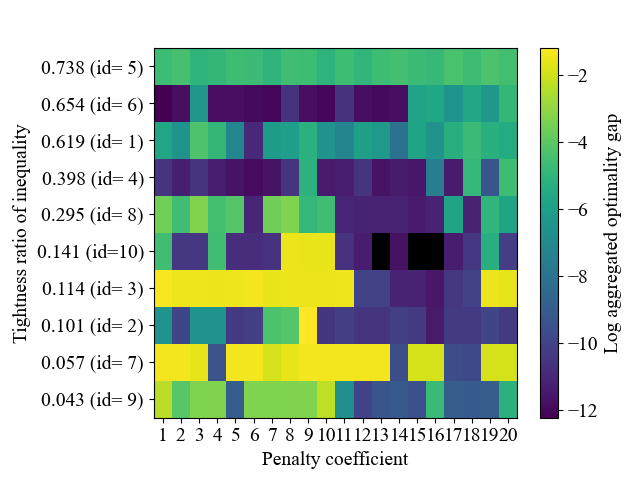}
     }
     \hfil
     \subfloat[$n=1000, d=100$]{
         \includegraphics[width=0.36\textwidth]{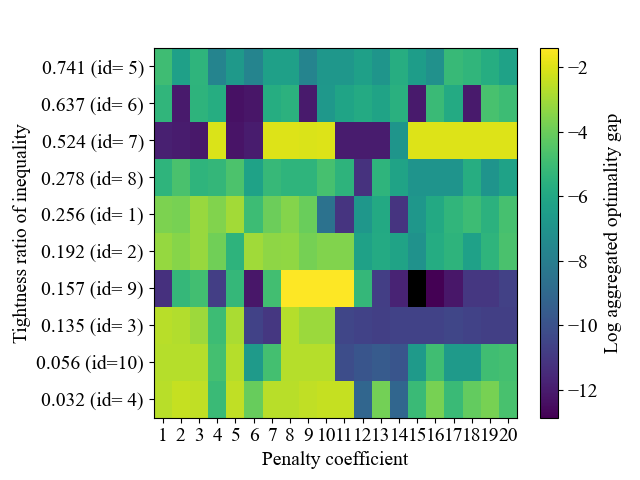}
     }
     \hfil
     \subfloat[$n=2000, d=25$]{
         \includegraphics[width=0.36\textwidth]{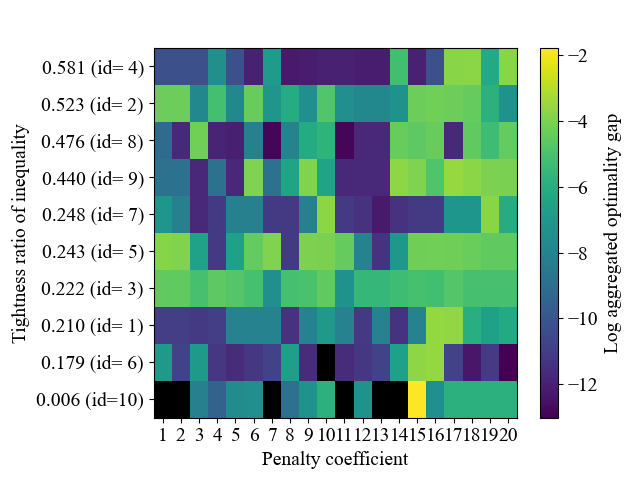}
     }
     \hfil
     \subfloat[$n=2000, d=50$]{
         \includegraphics[width=0.36\textwidth]{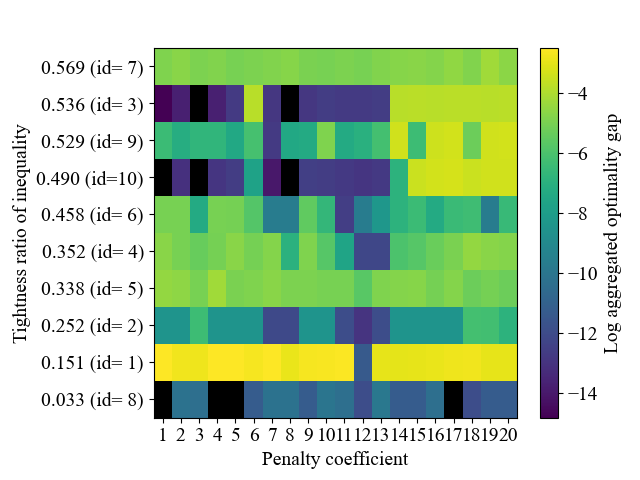}
     }
     \hfil
     \subfloat[$n=2000, d=75$]{
         \includegraphics[width=0.36\textwidth]{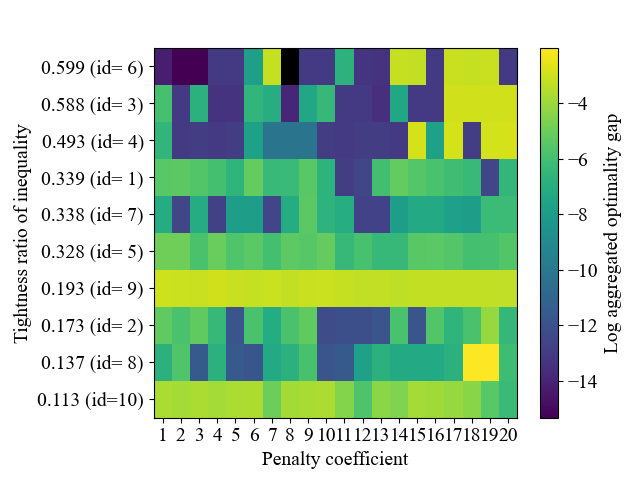}
     }
     \hfil
     \subfloat[$n=2000, d=100$]{
         \includegraphics[width=0.36\textwidth]{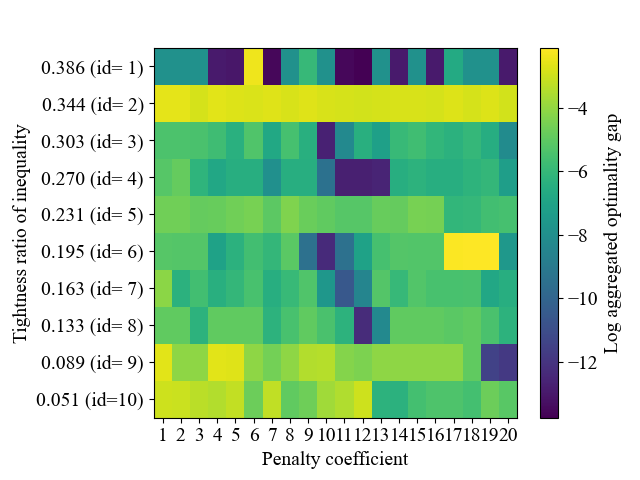}
     }
    \caption{
        Aggregated optimality gap for AE-RI on all 80 large QKP instances.
        Color bar for gap is shown in log scale. Zero gap (i.e. 100\% success rate) is shown in black. Instances are sorted with tightness ratio $\alpha = C/\sum_i w_i$ for each combination of problem size $n$ and density $d$.
        Note that penalty coefficient is rescaled and x-axis actually denotes $a$ in Eq.~(\ref{eq:penalty_scale}).
        Due to this rescaling, we see less trends in distribution of good penalty coefficient than Fig.~\ref{fig:heatmap_small_sa}.
    }
    \label{fig:heatmap_large_ae}
\end{figure*}

\subsection{Full Results on Ising Machine}\label{app:ae_full_results}

Full results of the benchmark of AE conducted in Section~\ref{sec:experiment} are shown in Table~\ref{tab:full_ae_1}, \ref{tab:full_ae_large_1} and \ref{tab:full_ae_large_2}.
Legends for columns are the same as those in Table~\ref{tab:full_sa_1} except for the optimal penalty coefficient $\lambda$.
As we rescaled $\lambda$ as in Eq.~(\ref{eq:penalty_scale}), the value of $\lambda$ is defined according to the value of $a$ in Eq.~(\ref{eq:penalty_scale}).
Therefore, we report the value of $a$ giving the optimal $\lambda$.
Since there are several large instances on which AE cannot obtain feasible a solution even with $a=20$, the results on those instances are not reported.

We also plot the aggregated optimality gap for AE-RI on each instance in Fig.~\ref{fig:heatmap_large_ae}.
We observe that the tested range of penalty coefficients seems to cover optimal coefficients on most instances.
Note that the x-axis corresponds to values of $a$ in Eq.~(\ref{eq:penalty_scale}), not $\lambda$.
Due to the rescaling of $\lambda$, we see less visual trends in Figure~\ref{fig:heatmap_large_ae} compared to Fig.~\ref{fig:heatmap_small_sa}.
This indicates that the rescaling based on SA analysis also works well when using the Ising machine for large-scale instances.

\EOD
\end{document}